\def\dOi{12(4:9)2016}
\subjclass{F.4.1 Proof theory, I.2.3 Deduction}
\theoremstyle{plain}
\newcommand{\fff}{\bot}
\newcommand{\ttt}{\top}
\newcommand{\ac  }{\mathsf{ac}}
\newcommand{\acd }{{\ac{\downarrow}}}
\newcommand{\acu }{{\ac{\uparrow}}}
\newcommand{\swi }{\mathsf{s}}
\newcommand{\med }{\mathsf{m}}
\newcommand  {\gw  }{\mathsf w}
\newcommand  {\gwd }{{\gw{\downarrow}}}
\newcommand  {\gwu }{{\gw{\uparrow}}}
\newcommand  {\gc  }{\mathsf c}
\newcommand  {\gcdown }{{\gc{\downarrow}}}
\newcommand  {\gcu }{{\gc{\uparrow}}}
\newcommand{\SKS}{\mathsf{SKS}}
\newcommand{\set}[1]{\{#1\}}
\newcommand{\ACU}{ACU}
\newcommand{\NP}{\mathbf{NP}}
\newcommand{\coNP}{\mathbf{coNP}}
\newcommand{\MIN}{\mathit{MIN}}
\newcommand{\MAX}{\mathit{MAX}}
\newcommand{\AC}{\mathit{AC}}
\newcommand{\fnimplies}[2]{{#1}\to{#2}}
\newcommand{\cmplmt}[1]{\overline{#1}}
\newcommand{\supersub}[3]{ \underset{#2}{\overset{#1}{#3}} }
\newcommand{\annotatedarrow}[2]{ \supersub{\!#1}{\!#2\;}{\rightarrow}}
\newcommand{\longannotatedarrow}[2]{ \supersub{#1}{#2}{\longrightarrow} }
\newcommand{\Lin}{\mathsf{L}}
\newcommand{\Var}{\mathit{Var}}
\newcommand{\Ter}{\mathit{Ter}}
\newcommand{\pfour}{P_4}
\tikzstyle{vertex}=[circle,minimum size=10pt,inner sep=0pt,text height=1.5ex,text depth=.25ex]
\tikzstyle{g} = [draw,thick,dotted,-,green]
\tikzstyle{r} = [draw,thick,-,red]
\tikzstyle{b} = [draw,thick,-,black]
\tikzstyle{u} = [draw,thin,-,black,opacity=0]
\renewcommand{\gcd}{\gcdown }
\newcommand{\TwoGraph}[2]{ 
{\begin{tikzpicture}[baseline=(v1.base)]
		\node[vertex] (v1) at (0, 0) {$\StrBetween[1,2]{,#1,}{,}{,}$};
		\node[vertex]  (v2) at (1, 0) {$\StrBetween[2,3]{,#1,}{,}{,}$};
		\draw[#2]  (v1) -- (v2) ;
\end{tikzpicture} }}
\newcommand{\TwoGraphLabel}[3][]{ 
{\begin{tikzpicture}[baseline=(v1.base)]
		\node[vertex] (v1) at (0, 0) {$\StrBetween[1,2]{,#2,}{,}{,}$};
		\node[vertex]  (v2) at (1, 0) {$\StrBetween[2,3]{,#2,}{,}{,}$};
		\draw[#3]  (v1) -- (v2) node [midway,above=-1pt] {$\scriptstyle #1$};
\end{tikzpicture} }}
\newcommand{\redge}[3][]{\TwoGraphLabel[\smash{#1}]{#2,#3}r}
\newcommand{\gedge}[3][]{\TwoGraphLabel[\smash{#1}]{#2,#3}g}
\newcommand{\FourGraph}[7]{ 
\raisebox{-0.4\height}{\begin{tikzpicture}
		\node[vertex]  (v1) at (0, 1) {$\StrBetween[1,2]{,#1,}{,}{,}$};
		\node[vertex]  (v2) at (1, 1) {$\StrBetween[2,3]{,#1,}{,}{,}$};
		\node[vertex] (v3) at (0, 0) {$\StrBetween[3,4]{,#1,}{,}{,}$};
		\node[vertex]  (v4) at (1, 0) {$\StrBetween[4,5]{,#1,}{,}{,}$};
		\draw[#2]  (v1) -- (v2) ;
		\draw[#3]  (v1) -- (v3);
		\draw[#4] (v1) -- (v4);
		\draw[#5]  (v2) -- (v3);
		\draw[#6] (v2) -- (v4);
		\draw[#7]  (v3) -- (v4);
\end{tikzpicture}} }
\newcommand{\FiveGraphRCont}[2]{
\raisebox{-0.45\height}{\begin{tikzpicture}
		\node[vertex] (v1) at (0,1) {$\StrBetween[1,2]{,\fir,}{,}{,}$};
		\node[vertex] (v2) at (1,1) {$\StrBetween[2,3]{,\fir,}{,}{,}$};
		\node[vertex] (v3) at (0,0) {$\StrBetween[3,4]{,\fir,}{,}{,}$};
		\node[vertex] (v4) at (1,0) {$\StrBetween[4,5]{,\fir,}{,}{,}$};
		\node[vertex] (v5) at (1.25,-.8) {$\StrBetween[5,6]{,\fir,}{,}{,}$};
		\draw[\sec] (v1) edge (v2) ;
		\draw[\thi] (v1) -- (v3);
		\draw[\fou] (v1) -- (v4);
		\draw[\fif] (v2) -- (v3);
		\draw[\six] (v2) -- (v4);
		\draw[\sev] (v3) -- (v4);
		\draw[\eig, bend angle=5, bend right] (v3) edge (v5) ;
		\draw[\nin, bend angle=10, bend right] (v1) edge (v5) ;
		\draw[#1] (v4) -- (v5) ;
		\draw[#2, bend angle=20, bend left] (v2) edge (v5) ;
\end{tikzpicture} }}
\newcommand{\FiveGraphR}[9]{
	\def\fir{#1}
	\def\sec{#2}
	\def\thi{#3}
	\def\fou{#4}
	\def\fif{#5}
	\def\six{#6}
	\def\sev{#7}
	\def\eig{#8}
	\def\nin{#9}
	\FiveGraphRCont
}
\def\SKS{\mathsf{SKS}}
\def\set#1{\{#1\}}
\def\grammareq {\mathrel{\raise.4pt\hbox{::}{=}}}%
\newcommand{\dotto}[1][]{\mathrel{\!\xy\ar@{.>}^-{#1}(5,0)\endxy\!}}
\newcommand{\solto}[1][]{\mathrel{\!\xy\ar@{->}^-{#1}(5,0)\endxy\!}}
\newcommand{\longsolto}[1][]{\mathrel{\!\xy\ar@{->}^-{#1}(11,0)\endxy\!}}
\newcommand{\longdotto}[1][]{\mathrel{\!\xy\ar@{.>}^-{#1}(11,0)\endxy\!}}
\newcommand{\xldotto}[2][]{\mathrel{\!\xy\ar@{.>}^-{#1}(#2,0)\endxy\!}}
\newcommand{\assoc}[1][]{\alpha_{#1}}
\def\cand{\wedge}
\def\cor{\vee}
\newbox\cutbox
\newdimen\cutwd
\newdimen\cutht
\newdimen\cutdp
\def\ccut{%
  \setbox\cutbox\hbox{$\lozenge$}
  \cutwd=\wd\cutbox
  \cutht=\ht\cutbox
  \cutdp=\dp\cutbox
  \setbox\cutbox\hbox to\cutwd{\hss\vrule width.3pt height\cutht depth\cutdp\hss}
  %%%%%\mathbin{\rlap{$\lozenge$}\copy\cutbox}}
  \mathbin{\lozenge\hskip-\cutwd\copy\cutbox}}
\def\scriptcut{%
  \setbox\cutbox\hbox{$\scriptstyle\lozenge$}
  \cutwd=\wd\cutbox
  \cutht=\ht\cutbox
  \cutdp=\dp\cutbox
  \setbox\cutbox\hbox to\cutwd{\hss\vrule width.3pt height\cutht depth\cutdp\hss}
  \mathord{\lozenge\hskip-\cutwd\copy\cutbox}}
\def\vccut{%
  \setbox\cutbox\hbox{$\lozenge$}
  \cutwd=\wd\cutbox
  \cutht=\ht\cutbox
  \cutdp=\dp\cutbox
  \setbox\cutbox\hbox to\cutwd{\hss\hskip.3pt\vrule width.3pt height\cutht depth\cutdp\hss}
  %%%%%\mathbin{\rlap{$\lozenge$}\copy\cutbox}}
  \mathbin{\lozenge\hskip-\cutwd\copy\cutbox}}
\def\lrgldel {\mathchoice{(}{(}{\langle}{\langle}}%
\def\lrgrdel {\mathchoice{)}{)}{\rangle}{\rangle}}%
\def\aprldel {\mathchoice
    {\mathopen {\setbox0=\hbox{$\displaystyle     \lrgldel$}\hbox to\wd0
                         {\hfil$\displaystyle     (       $\hfil}}}%
    {\mathopen {\setbox0=\hbox{$\textstyle        \lrgldel$}\hbox to\wd0
                         {\hfil$\textstyle        (        $\hfil}}}%
    {\mathopen {\setbox0=\hbox{$\scriptstyle      \lrgldel$}\hbox to\wd0
                         {\hfil$\scriptstyle      (        $\hfil}}}%
    {\mathopen {\setbox0=\hbox{$\scriptscriptstyle\lrgldel$}\hbox to\wd0
                         {\hfil$\scriptscriptstyle(        $\hfil}}}}%
\def\aprrdel {\mathchoice
    {\mathclose{\setbox0=\hbox{$\displaystyle     \lrgrdel$}\hbox to\wd0
                         {\hfil$\displaystyle     )       $\hfil}}}%
    {\mathclose{\setbox0=\hbox{$\textstyle        \lrgrdel$}\hbox to\wd0
                         {\hfil$\textstyle        )        $\hfil}}}%
    {\mathclose{\setbox0=\hbox{$\scriptstyle      \lrgrdel$}\hbox to\wd0
                         {\hfil$\scriptstyle      )        $\hfil}}}%
    {\mathclose{\setbox0=\hbox{$\scriptscriptstyle\lrgrdel$}\hbox to\wd0
                         {\hfil$\scriptscriptstyle)        $\hfil}}}}%
\def\seqldel {\mathchoice
    {\mathopen {\setbox0=\hbox{$\displaystyle     \lrgldel$}\hbox to\wd0
                         {\hfil$\displaystyle     \langle  $\hfil}}}%
    {\mathopen {\setbox0=\hbox{$\textstyle        \lrgldel$}\hbox to\wd0
                         {\hfil$\textstyle        \langle  $\hfil}}}%
    {\mathopen {\setbox0=\hbox{$\scriptstyle      \lrgldel$}\hbox to\wd0
                         {\hfil$\scriptstyle      \langle  $\hfil}}}%
    {\mathopen {\setbox0=\hbox{$\scriptscriptstyle\lrgldel$}\hbox to\wd0
                         {\hfil$\scriptscriptstyle\langle  $\hfil}}}}%
\def\seqrdel {\mathchoice
    {\mathclose{\setbox0=\hbox{$\displaystyle     \lrgrdel$}\hbox to\wd0
                         {\hfil$\displaystyle     \rangle  $\hfil}}}%
    {\mathclose{\setbox0=\hbox{$\textstyle        \lrgrdel$}\hbox to\wd0
                         {\hfil$\textstyle        \rangle  $\hfil}}}%
    {\mathclose{\setbox0=\hbox{$\scriptstyle      \lrgrdel$}\hbox to\wd0
                         {\hfil$\scriptstyle      \rangle  $\hfil}}}%
    {\mathclose{\setbox0=\hbox{$\scriptscriptstyle\lrgrdel$}\hbox to\wd0
                         {\hfil$\scriptscriptstyle\rangle  $\hfil}}}}%
\def\parldel {\mathchoice
    {\mathopen {\setbox0=\hbox{$\displaystyle     \lrgldel$}\hbox to\wd0
                         {\hfil$\displaystyle     [       $\hfil}}}%
    {\mathopen {\setbox0=\hbox{$\textstyle        \lrgldel$}\hbox to\wd0
                         {\hfil$\textstyle        [        $\hfil}}}%
    {\mathopen {\setbox0=\hbox{$\scriptstyle      \lrgldel$}\hbox to\wd0
                         {\hfil$\scriptstyle      [        $\hfil}}}%
    {\mathopen {\setbox0=\hbox{$\scriptscriptstyle\lrgldel$}\hbox to\wd0
                         {\hfil$\scriptscriptstyle[        $\hfil}}}}%
\def\parrdel {\mathchoice
    {\mathclose{\setbox0=\hbox{$\displaystyle     \lrgrdel$}\hbox to\wd0
                         {\hfil$\displaystyle     ]       $\hfil}}}%
    {\mathclose{\setbox0=\hbox{$\textstyle        \lrgrdel$}\hbox to\wd0
                         {\hfil$\textstyle        ]        $\hfil}}}%
    {\mathclose{\setbox0=\hbox{$\scriptstyle      \lrgrdel$}\hbox to\wd0
                         {\hfil$\scriptstyle      ]        $\hfil}}}%
    {\mathclose{\setbox0=\hbox{$\scriptscriptstyle\lrgrdel$}\hbox to\wd0
                         {\hfil$\scriptscriptstyle]        $\hfil}}}}%
\def\parldeli #1{\mathchoice
    {\mathopen {\setbox0=\hbox{$\displaystyle     \lrgldel$}\hbox to\wd0
                         {\hfil$\displaystyle     \stackrel{#1}{[}       $\hfil}}}%
    {\mathopen {\setbox0=\hbox{$\textstyle        \lrgldel$}\hbox to\wd0
                         {\hfil$\textstyle        \stackrel{#1}{[}        $\hfil}}}%
    {\mathopen {\setbox0=\hbox{$\scriptstyle      \lrgldel$}\hbox to\wd0
                         {\hfil$\scriptstyle      \stackrel{#1}{[}        $\hfil}}}%
    {\mathopen {\setbox0=\hbox{$\scriptscriptstyle\lrgldel$}\hbox to\wd0
                         {\hfil$\scriptscriptstyle\stackrel{#1}{[}        $\hfil}}}}%
\def\parrdeli #1{\mathchoice
    {\mathclose{\setbox0=\hbox{$\displaystyle     \lrgrdel$}\hbox to\wd0
                         {\hfil$\displaystyle     \stackrel{#1}{]}       $\hfil}}}%
    {\mathclose{\setbox0=\hbox{$\textstyle        \lrgrdel$}\hbox to\wd0
                         {\hfil$\textstyle        \stackrel{#1}{]}        $\hfil}}}%
    {\mathclose{\setbox0=\hbox{$\scriptstyle      \lrgrdel$}\hbox to\wd0
                         {\hfil$\scriptstyle      \stackrel{#1}{]}        $\hfil}}}%
    {\mathclose{\setbox0=\hbox{$\scriptscriptstyle\lrgrdel$}\hbox to\wd0
                         {\hfil$\scriptscriptstyle\stackrel{#1}{]}        $\hfil}}}}%
\def\curldeli #1{\mathchoice
    {\mathopen {\setbox0=\hbox{$\displaystyle     \lrgldel$}\hbox to\wd0
                         {\hfil$\displaystyle     \stackrel{#1}{\{}       $\hfil}}}%
    {\mathopen {\setbox0=\hbox{$\textstyle        \lrgldel$}\hbox to\wd0
                         {\hfil$\textstyle        \stackrel{#1}{\{}        $\hfil}}}%
    {\mathopen {\setbox0=\hbox{$\scriptstyle      \lrgldel$}\hbox to\wd0
                         {\hfil$\scriptstyle      \stackrel{#1}{\{}        $\hfil}}}%
    {\mathopen {\setbox0=\hbox{$\scriptscriptstyle\lrgldel$}\hbox to\wd0
                         {\hfil$\scriptscriptstyle\stackrel{#1}{\{}        $\hfil}}}}%
\def\currdeli #1{\mathchoice
    {\mathclose{\setbox0=\hbox{$\displaystyle     \lrgrdel$}\hbox to\wd0
                         {\hfil$\displaystyle     \stackrel{#1}{\}}       $\hfil}}}%
    {\mathclose{\setbox0=\hbox{$\textstyle        \lrgrdel$}\hbox to\wd0
                         {\hfil$\textstyle        \stackrel{#1}{\}}        $\hfil}}}%
    {\mathclose{\setbox0=\hbox{$\scriptstyle      \lrgrdel$}\hbox to\wd0
                         {\hfil$\scriptstyle      \stackrel{#1}{\}}        $\hfil}}}%
    {\mathclose{\setbox0=\hbox{$\scriptscriptstyle\lrgrdel$}\hbox to\wd0
                         {\hfil$\scriptscriptstyle\stackrel{#1}{\}}        $\hfil}}}}%
\def\eightpoint{\small}                         
\def\pluldel {\mathchoice
   {\mathopen {\setbox0=\hbox{$\displaystyle     \lrgldel$}\hbox to\wd0
                        {\hfil$\displaystyle     [       $\hfil}%
                        \kern-\wd0\hbox to\wd0
                        {\hss$\vcenter{\hbox{\eightpoint$\scriptscriptstyle\bullet$}}$\hss}}}%
   {\mathopen {\setbox0=\hbox{$\textstyle        \lrgldel$}\hbox to\wd0
                        {\hfil$\textstyle        [       $\hfil}%
                        \kern-\wd0\hbox to\wd0
                        {\hss$\vcenter{\hbox{\eightpoint$\scriptscriptstyle\bullet$}}$\hss}}}%
   {\mathopen {\setbox0=\hbox{$\scriptstyle      \lrgldel$}\hbox to\wd0
                        {\hfil$\scriptstyle      [       $\hfil}%
                        \kern-\wd0\hbox to\wd0
                        {\hss$\raise.1ex\hbox{\eightpoint$\scriptscriptstyle\bullet$}$\hss}}}%
   {\mathopen {\setbox0=\hbox{$\scriptscriptstyle\lrgldel$}\hbox to\wd0
                        {\hfil$\scriptscriptstyle[       $\hfil}%
                        \kern-\wd0\hbox to\wd0
                        {\hss$\raise.03ex\hbox{\eightpoint$\scriptscriptstyle\bullet$}$\hss}}}}%
\def\plurdel {\mathchoice
   {\mathclose{\setbox0=\hbox{$\displaystyle     \lrgldel$}\hbox to\wd0
                        {\hfil$\displaystyle     ]       $\hfil}%
                        \kern-\wd0\hbox to\wd0
                        {\hss$\vcenter{\hbox{\eightpoint$\scriptscriptstyle\bullet$}}$\hss}}}%
   {\mathclose{\setbox0=\hbox{$\textstyle        \lrgldel$}\hbox to\wd0
                        {\hfil$\textstyle        ]       $\hfil}%
                        \kern-\wd0\hbox to\wd0
                        {\hss$\vcenter{\hbox{\eightpoint$\scriptscriptstyle\bullet$}}$\hss}}}%
   {\mathclose{\setbox0=\hbox{$\scriptstyle      \lrgldel$}\hbox to\wd0
                        {\hfil$\scriptstyle      ]       $\hfil}%
                        \kern-\wd0\hbox to\wd0
                        {\hss$\raise.1ex\hbox{\eightpoint$\scriptscriptstyle\bullet$}$\hss}}}%
   {\mathclose{\setbox0=\hbox{$\scriptscriptstyle\lrgldel$}\hbox to\wd0
                        {\hfil$\scriptscriptstyle]       $\hfil}%
                        \kern-\wd0\hbox to\wd0
                        {\hss$\raise.03ex\hbox{\eightpoint$\scriptscriptstyle\bullet$}$\hss}}}}%
\def\witldel {\mathchoice
   {\mathopen {\setbox0=\hbox{$\displaystyle     \lrgldel$}\hbox to\wd0
                        {\hfil$\displaystyle     (       $\hfil}%
                        \kern-\wd0\hbox to\wd0
                        {\hss$\vcenter{\hbox{\eightpoint$\scriptscriptstyle\bullet\mkern3.2mu$}}$\hss}}}%
   {\mathopen {\setbox0=\hbox{$\textstyle        \lrgldel$}\hbox to\wd0
                        {\hfil$\textstyle        (       $\hfil}%
                        \kern-\wd0\hbox to\wd0
                        {\hss$\vcenter{\hbox{\eightpoint$\scriptscriptstyle\bullet\mkern3.2mu$}}$\hss}}}%
   {\mathopen {\setbox0=\hbox{$\scriptstyle      \lrgldel$}\hbox to\wd0
                        {\hfil$\scriptstyle      (       $\hfil}%
                        \kern-\wd0\hbox to\wd0
                        {\hss$\raise.1ex\hbox{\eightpoint$\scriptscriptstyle\bullet\mkern3.2mu$}$\hss}}}%
   {\mathopen {\setbox0=\hbox{$\scriptscriptstyle\lrgldel$}\hbox to\wd0
                        {\hfil$\scriptscriptstyle(       $\hfil}%
                        \kern-\wd0\hbox to\wd0
                        {\hss$\raise.03ex\hbox{\eightpoint$\scriptscriptstyle\bullet\mkern3.2mu$}$\hss}}}}%
\def\witrdel {\mathchoice
   {\mathclose{\setbox0=\hbox{$\displaystyle     \lrgldel$}\hbox to\wd0
                        {\hfil$\displaystyle     )       $\hfil}%
                        \kern-\wd0\hbox to\wd0
                        {\hss$\vcenter{\hbox{\eightpoint$\scriptscriptstyle\mkern3.2mu\bullet$}}$\hss}}}%
   {\mathclose{\setbox0=\hbox{$\textstyle        \lrgldel$}\hbox to\wd0
                        {\hfil$\textstyle        )       $\hfil}%
                        \kern-\wd0\hbox to\wd0
                        {\hss$\vcenter{\hbox{\eightpoint$\scriptscriptstyle\mkern3.2mu\bullet$}}$\hss}}}%
   {\mathclose{\setbox0=\hbox{$\scriptstyle      \lrgldel$}\hbox to\wd0
                        {\hfil$\scriptstyle      )       $\hfil}%
                        \kern-\wd0\hbox to\wd0
                        {\hss$\raise.1ex\hbox{\eightpoint$\scriptscriptstyle\mkern3.2mu\bullet$}$\hss}}}%
   {\mathclose{\setbox0=\hbox{$\scriptscriptstyle\lrgldel$}\hbox to\wd0
                        {\hfil$\scriptscriptstyle)       $\hfil}%
                        \kern-\wd0\hbox to\wd0
                        {\hss$\raise.03ex\hbox{\eightpoint$\scriptscriptstyle\mkern3.2mu\bullet$}$\hss}}}}%
\newbox\ldelbox
\newbox\rdelbox
\def\quadfs {\rlap{\rm\quad.}}%
\def\qquand {\qquad\mbox{and}\qquad}%
\def\clap#1{\hbox to 0pt{\hss#1\hss}}
\def\sqlap#1{\hbox to .5em{\hss#1\hss}}
\def\qlap#1{\hbox to 1em{\hss#1\hss}}
\def\qqlap#1{\hbox to 2em{\hss#1\hss}}
\def\qqqlap#1{\hbox to 3em{\hss#1\hss}}
\def\qqqqlap#1{\hbox to 4em{\hss#1\hss}}
\def\qqqqqlap#1{\hbox to 5em{\hss#1\hss}}
\def\qqqqqqlap#1{\hbox to 6em{\hss#1\hss}}
\def\qqqqqqqlap#1{\hbox to 7em{\hss#1\hss}}
\def\qqqqqqqqlap#1{\hbox to 8em{\hss#1\hss}}
\def\qqqqqqqqqlap#1{\hbox to 9em{\hss#1\hss}}
\newcommand{\wlap}[2][10ex]{\hbox to#1{\hss#2\hss}}
\newcommand{\wlapm}[2][10ex]{\hbox to#1{\hss$#2$\hss}}
\def\rlapm#1{\hbox to 0pt{$#1$\hss}}
\def\llapm#1{\hbox to 0pt{\hss$#1$}}
\def\qqquad{\quad\qquad}
\newcommand{\vclap}[2][0pt]{\hbox to #1{\hss#2\hss}}
\newcommand{\vclapm}[2][0pt]{\hbox to #1{\hss$#2$\hss}}
\def\interdisplayskip{.5ex}
\newskip\mydisplaywidth
\newcommand{\twolinedisplay}[3][10pt]{%
  \mydisplaywidth=\displaywidth
  \advance\mydisplaywidth-#1
  \begin{array}{c}
    \clap{\hbox to\mydisplaywidth{$\displaystyle#2$\hss}}\\[\interdisplayskip]
    \clap{\hbox to\mydisplaywidth{\hss$\displaystyle#3$}}
  \end{array}
}
\theoremstyle{plain}
\newtheorem{theorem}{Theorem}[section]
\newtheorem{lemma}[theorem]{Lemma} 
\newtheorem{proposition}[theorem]{Proposition}
\newtheorem{corollary}[theorem]{Corollary}
\newtheorem{conjecture}[theorem]{Conjecture}
\newtheorem{question}[theorem]{Question}
\theoremstyle{definition}
\newtheorem{example}[theorem]{Example}
\newtheorem{definition}[theorem]{Definition}
\newtheorem{notation}[theorem]{Notation}
\newtheorem{observation}[theorem]{Observation}
\newtheorem{remark}[theorem]{Remark}
\newcommand{\dual}[1]{\overline{#1}}
\def\square{\vcenter{\hbox{$\scriptstyle\oblong$}}}
\newcommand{\Ver}[1]{V({#1})} %
\newcommand{\Edg}[1]{E({#1})} %
\newcommand{\web}[1]{\mathcal{W}(#1)}
\newcommand{\critminmeas}{\nu}
\newcommand{\critmaxmeas}{\mu}
\newcommand{\redmeas}{e_{\vlan}}
\newcommand{\greenmeas}{e_{\vlor}}
\newcommand{\MedialCriterion}[2]{#1 \vartriangleleft\hspace{-1.9pt}\blacktriangleright #2 }
\newcommand{\numand}{\#_{\vlan}}
\newcommand{\numor}{\#_{\vlor}}
\begin{document}

\title[Linear rewriting systems for Boolean logic]{On linear rewriting systems for Boolean logic and some applications to proof theory}

\author[A.~Das]{Anupam Das\rsuper a}	%
\address{{\lsuper a}LIP, Universit\'e de Lyon, CNRS, ENS de Lyon, Universit\'e Claude-Bernard Lyon 1, Milyon}	%
\email{anupam.das@ens-lyon.fr}  %

\author[L.~Stra\ss burger]{Lutz Stra\ss burger\rsuper b}	%
\address{{\lsuper b}Inria Saclay, Palaiseau, France}	%
\email{lutz@lix.polytechnique.fr}  %
\keywords{Linear rewriting, Boolean logic, Proof theory}
\titlecomment{{\lsuper*}This is an extended version of \cite{DasStra15:no-linear-sys} which appeared in the proceedings of \emph{RTA 2015}.}

\begin{abstract}
Linear rules have played an increasing role in structural proof theory in recent years. It has been observed that the set of all sound linear inference
rules in Boolean logic is already $\coNP$-complete, i.e.\ that
every Boolean tautology can be written as a (left- and right-)linear
rewrite rule. In this paper we study properties of systems consisting
only of linear inferences. Our main result is that the length of any
`nontrivial' derivation in such a system is bound by a polynomial. As a
consequence there is no polynomial-time decidable sound and complete
system of linear inferences, unless $\coNP=\NP$.
We draw tools and concepts from term rewriting, Boolean function
theory and graph theory in order to access some required intermediate
results. At the same time we make several connections between these
areas that, to our knowledge, have not yet been presented and
constitute a rich theoretical framework for reasoning about linear
TRSs for Boolean logic.
\end{abstract}

\maketitle

\section{Introduction}\label{sect:intro}

Consider the following conjunction rule from a Gentzen-style sequent calculus:
\begin{equation}
\label{eqn:conj-rule-gentzen}
\vliinf{}{}{\Gamma, A \wedge B, \Delta}{\Gamma, A \ }{\ B, \Delta}
\end{equation}
where $\Gamma$ and $\Delta$ are finite sequences of formulae. In this rule
all the formulae in the premisses occur in the conclusion with  the {same multiplicity}. In proof theory this is referred to as a \emph{multiplicative} rule.
This phenomenon can also be described as a \emph{linear} rule in term rewriting. For instance, the proof rule above has logical behaviour induced by the following linear term rewriting rule,
\begin{equation}
\label{eqn:conj-rule-trs}
(C \vlor A) \vlan(B \vlor D) \quad \to \quad C \vlor (A\vlan B) \vlor D
\end{equation}
where $C$ and $D$ here represent the disjunction of the formulae in $\Gamma$ and $\Delta $ respectively from~\eqref{eqn:conj-rule-gentzen}. 

This rule has been particularly important in structural proof theory, serving as the basis of Girard's \emph{multiplicative linear logic} \cite{Girard87}. A variant of~\eqref{eqn:conj-rule-trs}, that will play some role in this paper is the following,
\begin{equation}
\label{eqn:switch-rule}
\swi\quad:\quad A \vlan (B\vlor C) \quad \to \quad (A \vlan B) \vlor C
\end{equation}
which we call \emph{switch}, following \cite{Gugl:06:A-System:kl,GuglStra:01:Non-comm:rp,BrunTiu:01:A-Local-:mzF}, but which is also known as \emph{weak distributivity}~\cite{BCST}.

However the concept of linearity, or multiplicativity, itself is far more general. For instance, the advent of \emph{deep inference} has introduced the following linear rule, known as \emph{medial} \cite{BrunTiu:01:A-Local-:mzF}:
\begin{equation}
\label{eqn:medial-rule}
\med \quad  : \quad  (A \vlan B) \vlor (C \vlan D) 
\quad \to \quad 
(A \vlor C) \vlan (B \vlor D)
\end{equation}
This rule cannot be derived from \eqref{eqn:conj-rule-trs}, \eqref{eqn:switch-rule} or related rules, even when working modulo logical equivalence and logical constants.
From the point of view of proof theory \eqref{eqn:medial-rule} is particularly interesting since it allows for \emph{contraction},
\begin{equation}
\label{eqn:con-rule}
\gcd\quad:\quad A \vlor A\quad \to\quad A
\end{equation}
to be reduced to atomic form. For example consider the following transformation which reduces the logical complexity of a contraction step,
\begin{equation}
  \label{eqn:conred}
  \begin{array}{rl}
    \phantom{\annotatedarrow{}{\med}} & \underline{(A \vlan B) \vlor (A\vlan B)} \\
    \annotatedarrow{}{\gcd} & \ A \vlan B
  \end{array}
  \qquad \quad \leadsto  \quad \qquad
  \begin{array}{rl}
    \phantom{\annotatedarrow{}{\med}}& \underline{(A \vlan B) \vlor (A\vlan B)} \\
    \annotatedarrow{}{\med} & \underline{(A \vlor A)} \vlan (B\vlor B) \\
    \annotatedarrow{}{\gcd} & \ A \vlan \underline{(B\vlor B)} \\
    \annotatedarrow{}{\gcd} & \ A \vlan B
  \end{array}
\end{equation}
where redexes are underlined.

Until now the nature of linearity in Boolean logic has not been well understood, despite proving to be a concept of continuing interest in proof theory, cf.~\cite{Guglielmi:11:frogspost}, and category theory, cf.~\cite{str:medial,lamarche:gap}. While switch and medial form the basis of usual deep inference systems, it has been known for some time that other options are available: there are linear rules that cannot be derived from just these two rules (even modulo logical equivalences and constants), first explicitly shown in \cite{Stra:08:Extensio:kk}. The minimal known example, from \cite{Das:13:Rewritin:uq}, is the following:
\begin{equation}
\label{eqn:php32-linear}
\begin{array}{rl}
& ( A \vlor (B \vlan B') )
\vlan 
( ( C \vlan C' ) \vlor (D \vlan D') )
\vlan
((E \vlan E') \vlor F ) 
\\
\noalign{\smallskip}
\to & (A \vlan ( C \vlor E )) \vlor ( C' \vlan E' )
\vlor
(B' \vlan D') \vlor ( (B\vlor D) \vlan F )
\end{array}
\end{equation}
This example can be generalised to an infinite set of rules, where each rule is independent from all smaller rules.
In fact, the situation is rather more intricate than this: the set of linear inferences, denoted $\Lin$ henceforth, is itself $\coNP$-complete \cite{Stra:08:Extensio:kk}. This can be proved by showing that \emph{every} Boolean tautology can be written as a linear rule (which we demonstrate in Proposition~\ref{prop:coNP}).
This leads us to a natural question: 
\begin{question}
\label{quest:complete-basis-exists?}
Can we find a complete `basis' of linear inference rules? 
\end{question}
In other words, can proof theory {itself} be conducted in an entirely linear setting?
Such an approach would be in stark contrast with the traditional approach of \emph{structural} proof theory, which precisely emphasises the role of nonlinear behaviour via the structural rules, e.g.\ contraction and weakening. 

The main result of this work is a negative answer to the above question: there is no polynomial-time decidable linear TRS that is complete for $\Lin$, unless \hbox{$\coNP=\NP$}. Notice that the polynomial-time decidable criterion is essentially the most general condition one can impose without admitting a trivially positive answer to Question~\ref{quest:complete-basis-exists?} (e.g.\ by allowing the basis to be $\Lin$ itself). It is also a natural condition arising from proof theory, via the Cook-Reckhow definition of an abstract proof system \cite{CookReck:74:On-the-L:lr}. 

The high-level argument is as follows: 
\begin{enumerate}[label=(\Alph*)]
\item\label{item:nontriv-der-short} Any constant-free linear derivation of a `nontrivial' linear inference must have polynomial length.
\item\label{item:reduction-to-nontriv-frag} If a linear system is complete for $\Lin$ then arbitrary linear inferences can be derived from the `nontrivial' fragment of $\Lin$ with constant-free derivations of polynomial length.
\item\label{item:np-alg-for-conp} Putting these together, a complete linear system must admit polynomial-size derivations for any linear inference, inducing a $\NP$ algorithm for $\Lin$, and so $\coNP = \NP$.
\end{enumerate}
Point \eqref{item:nontriv-der-short} above represents the major technical contribution of this work. The proof requires us to work in three different settings: term rewriting, Boolean function theory and graph theory. Many of our intermediate results require elegant and novel interplays between these settings, taking advantage of their respective invariants; we try to make this evident in our exposition via various examples and discussion. 
Point \eqref{item:reduction-to-nontriv-frag} essentially appeared before in \cite{Das:13:Rewritin:uq}. We point out that the important point here is the \emph{existence} of small derivations, rather than the ability to explicitly construct them efficiently. 

Functions computed by linear terms of Boolean logic have been studied in Boolean function theory and circuit complexity for decades, where they are called ``read-once functions'' (e.g.\ in \cite{crama2011boolean}).\footnote{These have been studied in various forms and under different names. The first appearance we are aware of is in \cite{chein67:readonce-french}, and also the seminal paper of \cite{gurvich1977repetition} characterising these functions. The book we reference presents an excellent and comprehensive introduction to the area.} They are closely related to positional games (first mentioned in \cite{gurvich1982normal}) and have been used in the amplification of approximation circuits,  (first in \cite{Valiant1984363}, more generally in \cite{dubiner1997amplification}) as well as other areas.
However, despite this, it seems that there has been little study on \emph{logical} relationships between read-once functions, e.g.\ when one implies another. Many of the basic results and correspondences in this work, e.g.\ Proposition~\ref{prop:sound-minmax} and Theorem~\ref{thm:rel-minmax-webs}, have not appeared before, as far as we know, and themselves constitute interesting theoretical relationships.

This article is a full version of the extended abstract \cite{DasStra15:no-linear-sys}, which was presented at the \emph{RTA 2015} conference. In addition to providing full proofs for the various results, this version generally elaborates on many of the discussions in the previous version and gives a proof-theoretic context to this line of work. To this end we have included some further developments in Sections~\ref{sect:canon}, \ref{sect:norm-di} and \ref{sect:graph-logic} which are derived from our main result.

\bigskip

The structure of the paper is as follows. In Sections~\ref{sect:prelim-rew}, \ref{sect:webs} and \ref{sect:prelim-bool} we present preliminaries on each of our three settings and some basic results connecting various concepts between them. In Section~\ref{sect:inference} and \ref{sect:no-compl} we specialise to the setting of linear rewrite rules for Boolean logic and present our main results, Theorem~\ref{thm:main-result} and Corollary~\ref{cor:nolinsys}. 
In Sections~\ref{sect:canon} and \ref{sect:norm-di} we present some applications to deep inference proof theory, showing a form of \emph{canonicity} for medial and some general consequences for the normalisation of deep inference proofs. In Section~\ref{sect:graph-logic} we discuss a direction for future work in a graph-theoretic setting, and in Section~\ref{sect:conc} we present some concluding remarks, including relationships to models of linear logic and axiomatisations of Boolean algebras.

\subsubsection*{Acknowledgements}
We would like to thank Paola Bruscoli, Kaustuv Chaudhuri, Alessio Guglielmi, Willem Heijltjes and others in the deep inference community for many fruitful discussions on these topics. We would also like to thank the anonymous referees of this work and its previous versions for their useful comments.

\section{Preliminaries on rewriting theory}\label{sect:prelim-rew}

We work in the setting of first-order term rewriting as defined in the Terese textbook, \emph{Term Rewriting Systems} \cite{bezem2003term}. We will use the same notation for all symbols except the connectives, for which we use more standard notation from proof theory. 
In particular we will use $\fff$ and $\ttt$ for the truth constants, reserving $0$ and $1$ for the inputs and outputs of Boolean functions, introduced later.

We adopt one particular convention that differs from what is usual in the literature. A term rewriting system (TRS) is usually defined as an arbitrary set of rewrite rules.
	Here we insist that the set of instances of these rules, or reduction steps, is polynomial-time decidable.
The motivation is that we wish to be as general as possible without admitting trivial results. If we allowed all sets then a complete system could be specified quite easily indeed.
Furthermore, that an inference rule is easily or feasibly checkable is a usual requirement in proof theory, and in proof complexity this is formalised by the same condition on inference rules, cf.~\cite{CookReck:74:On-the-L:lr}.

\bigskip

Let us now consider Boolean logic in the term rewriting setting. Our language conists of the connectives
$\fff,\ttt,\wedge,\vee$ and a set $\Var$ of propositional variables, typically denoted $x,y,z$ etc. The set $\Var$ is equipped with an involution (i.e.\ self-inverse function) $\dual\cdot : \Var \to \Var$, such that $\bar x\neq x$ for all $x\in\Var$. We call $\bar x$ the \emph{dual} of $x$ and, for each pair of dual variables, we arbitrarily choose one to be \emph{positive} and the other to be \emph{negative}.

The set $\Ter$ of formulae, or \emph{terms}, is built freely from this
signature in the usual way. Terms are typically denoted by $s,t,u$ etc., and
term and variable symbols may occur with superscripts and subscripts if required.

In this setting $\ttt$ and $\fff$ are considered the constant
symbols of our language.
We say that a term $t$ is \emph{constant-free} if $\ttt$ and 
$\fff$ do not occur in $t$.

We do not include a symbol for negation in our language. This is due to the fact that soundness of a rewrite step is only preserved under \emph{positive} contexts. Instead we simply consider terms in negation normal form (NNF), which can be generated for arbitrary terms from positive and negative variables by the De Morgan laws:
\[
\dual \ttt = \fff
\qqquad
\dual \fff = \ttt
\qqquad
\bar{\bar x} = x
\qqquad
\dual{s\vlor t} = \bar s \vlan \bar t
\qqquad
\dual{s\vlan t} = \bar s \vlor \bar t
\]
We say that a term is \emph{negation-free} if it does not contain any negative variables.
We write $\Var(t)$ to denote the set of variables occurring in $t$.
We say that a term $t$ is \emph{linear} if, for each $x\in \Var(t)$, there is exactly one occurrence of $x$ in $t$.
The \emph{size} of a term $t$, denoted~$|t|$, is the total number of variable and function symbols occurring in $t$.
A \emph{substitution} is a mapping
$\sigma\colon\Var\to\Ter$ from the set of variables to the
set of terms such that $\sigma(x)\neq x$ for only finitely
many $x$.
The notion of substitution is extended to all terms, i.e.\ a map $\Ter \to \Ter$, in the usual way.       
A (one-hole) \emph{context} is a term with a single `hole' $\square$ occurring in place of a subterm. Below are three examples:
\begin{equation}
  \label{eq:ctx-ex}
C_1[\square ] := y \vlan (z \vlor \square )
\qquad
C_2[\square] := \square \vlor (w \vlan x)
\qquad
C_3[\square] := (w\vlan x)\vlor (y \vlan (z \vlor \square))
\end{equation}
We may write $C_{i} [t]$ to denote the term obtained by replacing the occurrence of $\square$ in $C_i [\square]$ with $t$. 
We may also replace holes with other contexts to derive new contexts. For example, notice that $C_3 [\square] $ in~\eqref{eq:ctx-ex} is equivalent, modulo commutativity of $\vee$, to $C_2 [C_1 [\square] ]$.

\begin{definition}[Rewrite rules]\label{def:rewrite}
	A \emph{rewrite rule} is an expression $l\to r$, where $l$ and $r$ are terms, such that $l\neq r$.
	We write $\rho: l \to r$ to express that the rule $l\to r$ is called $\rho$. 
	In this rule we call $l$ the \emph{left hand side (LHS)} of $\rho$, and $r$ its \emph{right hand side (RHS)}.
	We say that $\rho$ is \emph{left-linear} (resp.~\emph{right-linear}) if $l$ (resp.\ $r$) is a linear term.
	We say that $\rho$ is \emph{linear} if it is both left- and right-linear.
	We write $s \annotatedarrow{}{\rho}  t$ to express that $s\to t$ is a \emph{reduction step} of $\rho$, i.e.\ that $s = C[\sigma(l)]$ and $t = C[\sigma(r)]$ for some substitution $\sigma$ and some context $C[\square]$. 
\end{definition}

For instance, the rules $\swi $ from \eqref{eqn:switch-rule} and $\med $  and \eqref{eqn:medial-rule} are examples of linear rules. The rule $\gwu : x \vlan y \to x$ (which we consider later in Section~\ref{sect:norm-di}) is also linear, while the rule $\gcd $ from \eqref{eqn:con-rule} is not linear.

\begin{definition}
	[Term rewriting systems]
	The \emph{one-step} reduction relation of a set of rewrite rules $R$ is $\annotatedarrow{}{R}$, where $s\annotatedarrow{}{R} t$ if $s\annotatedarrow{}\rho t$ for some $\rho \in R$. A \emph{term rewriting system} (TRS) is a set of rewrite rules whose one-step reduction relation is decidable in polynomial time.
	A \emph{linear (term rewriting) system} is a TRS whose rules are all linear.
\end{definition}

\begin{definition}[Derivations]
	A \emph{derivation} under a binary relation
        $\annotatedarrow{}{R}$ on $\Ter$ is a finite sequence $\pi :
        t_0 \annotatedarrow{}{R} t_1 \annotatedarrow{}{R} \vldots
        \annotatedarrow{}{R} t_l$. The \emph{length} of $\pi$ is $l$.
	We also write $\annotatedarrow{*}{R}$ to denote the reflexive transitive closure of $\annotatedarrow{}{R}$.

	For an equivalence relation $\sim$ on $\Ter$ and a TRS $R$, we define  an \emph{$R$-derivation modulo $\sim$} as a sequence $\pi : t_0 \sim t_0' \annotatedarrow{}{R} t_1 \sim t_1' \annotatedarrow{}{R} \cdots \annotatedarrow{}{R} t_l \sim t_l'$. In this case we say that the length of $\pi$ is $l$, i.e.\ we do not count the $\sim $ steps. 
\end{definition}

We write $\AC$ to denote the smallest equivalence relation closed under contexts generated by
the following equations for associativity and commutivity of
$\wedge$ and $\vee$:
\begin{equation*}
  (x\vlan y)\vlan z = x\vlan (y\vlan z)
  \qquad
  (x\vlor y)\vlor z = x\vlor (y\vlor z)
  \qquad
  x\vlan y =  y\vlan x
  \qquad
  x\vlor y =  y\vlor x
\end{equation*}
Note that $\AC$ contains only linear equations. The following
equations for the constants are also linear and similarly generate a context-closed equivalence relation called $U$:
\[
x \vlor \fff = x = \fff \vlor x
\quad \quad
x\vlan \ttt = x = \ttt \vlan x
\quad \quad
\ttt \vlor \ttt = \ttt
\quad  \quad 
\fff \vlan \fff = \fff
\]
We denote by $\ACU$ the combined system of $\AC$ and $U$.
We will also need the system~$U'$ that extends~$U$ in the natural way by the following equations:
\[
x\vlor \ttt = \ttt = \ttt\vlor x \quad\quad x\vlan \fff = \fff = \fff\vlan x
\]
Notice that these are not linear in the sense of \cite{Das:13:Rewritin:uq}, but are considered linear in our more general setting. 
We denote by $\ACU'$ the combined system of $\AC$ and $U'$.

It turns out that this equivalence relation relates precisely those linear terms that compute the same Boolean function, as we will see later.

\section{Preliminaries on relation webs}\label{sect:webs}

In this section we restrict our attention to negation-free
constant-free linear terms and study their syntactic
structure, in the form of \emph{relation webs}
\cite{Gugl:06:A-System:kl,Stra:07:A-Charac:fk}.

We will consider graphs that are undirected, simple, and with labelled edges; we will make use of standard graph-theoretic terminology. 
For a graph $G$ we denote its \emph{vertex set} or set of \emph{nodes} as $\Ver{G}$, and the set of its \emph{labelled edges} as $\Edg{G}$. We say ``$\TwoGraphLabel[\smash{\star}]{x,y}b$ in $G$'' to express that the edge $\{x,y\}$ is labelled $\star$ in the graph $G$.
A~set~$X\subseteq \Ver{G}$ is a $\star$-\emph{clique} if every pair $x,y \in X$ has a $\star$-labelled edge between them. 
A \emph{maximal} $\star$-clique is a $\star$-clique that is not contained in any larger $\star$-clique.

Analysing the term tree of a negation-free constant-free linear term $t$, notice that for each pair of variables $x,y$ occurring in $t$, there is a unique connective $\star \in \{\wedge, \vee \}$ at the root of the smallest subtree containing the (unique) occurrences of $x$ and $y$. Let us call this the \emph{least common connective} of $x$ and $y$ in $t$.

\begin{definition}[Relation webs]\label{def:relweb}
	The \emph{(relation) web} $\web t$ of a constant-free negation-free linear term $t$ is the complete graph whose vertex set is $\Var(t)$, such that the edge between two variables $x$ and $y$ is labelled by their least common connective in $t$. We write $\redmeas(t)$
(resp.\ $\greenmeas(t)$) to be the number of $\wedge$-
(resp.\ $\vee$-)labelled edges in $\web t$.
\end{definition}

As a convention we will write $\redge[]{x}{y}$ if the edge
$\set{x,y}$ is labelled by $\cand$, and we write
$\gedge[]{x}{y}$ if it is labelled by~$\cor$.

\begin{example}
	\label{ex:linterm-web}
	The term $t=\vls[((v \vlor w).x).(y.z)]$ has the relation web: $$\FiveGraphR{v,x,y,z,w}rggggrgggr$$
        We have that $\redmeas(t)=3$ and $\greenmeas(t)=7$.
\end{example}

\begin{proposition}\label{prop:numredgreen}
  Let $t$ be a constant-free negation-free linear term with $n$
  variables, and let $e:= \frac{1}{2}n (n-1)$. Then
  $\redmeas(t),\greenmeas(t)\leq e$, and
  $\redmeas(t)+\greenmeas(t)=e$.
\end{proposition}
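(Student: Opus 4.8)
The plan is to reduce everything to counting edges in a complete graph together with the observation that each edge carries exactly one of two labels. By Definition~\ref{def:relweb}, $\web t$ is the \emph{complete} graph on vertex set $\Var(t)$. Since $t$ has $n$ variables we have $|\Var(t)| = n$, so the number of edges of $\web t$ is exactly $\binom{n}{2} = \frac{1}{2}n(n-1) = e$. This already fixes the total number of edges.

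Next I would use the label dichotomy. As noted just before Definition~\ref{def:relweb}, for each pair of distinct variables $x,y$ occurring in $t$ there is a \emph{unique} least common connective $\star \in \{\wedge,\vee\}$, so every edge of $\web t$ is labelled by exactly one of $\wedge$ or $\vee$ and never by both. Hence the edge set of $\web t$ partitions into its $\wedge$-labelled edges, numbering $\redmeas(t)$, and its $\vee$-labelled edges, numbering $\greenmeas(t)$. Combining this partition with the edge count from the previous step yields $\redmeas(t) + \greenmeas(t) = e$. Finally, since $\redmeas(t)$ and $\greenmeas(t)$ are nonnegative integers summing to $e$, each is individually bounded by $e$, which gives $\redmeas(t), \greenmeas(t) \le e$ and completes the argument.

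There is no real obstacle here: the statement is essentially immediate once one invokes completeness of $\web t$ and uniqueness of the least common connective. The only point worth stating explicitly is \emph{why} the least common connective is well defined and takes values only in $\{\wedge,\vee\}$, and this is exactly where the hypotheses on $t$ are used. Linearity guarantees that each variable occurs exactly once, so the ``smallest subtree containing the occurrences of $x$ and $y$'' is unambiguous; constant-freeness and negation-freeness guarantee that the internal nodes of the term tree of $t$ are labelled only by $\wedge$ or $\vee$, so the connective at the root of that subtree is one of these two. With these observations in place the counting argument above applies verbatim.
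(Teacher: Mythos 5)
Your proof is correct and follows exactly the same route as the paper's: the web of $t$ is a complete graph on $n$ vertices, so it has $e = \binom{n}{2}$ edges, each carrying exactly one of the two labels, which immediately gives both the sum and the individual bounds. The extra remarks on why the least common connective is well defined are a fine elaboration but do not change the argument.
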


\begin{proof}
	This follows from the fact that there are only $e$ edges in a web, all of which must be labelled $\wedge$ or~$\vee$.
\end{proof}

\begin{remark}[Labels]
	We point out that, instead of using labelled complete graphs, we could have also used unlabelled arbitrary graphs, since we have only two connectives ($\wedge$ and $\vee$) and so one could be specified by the lack of an edge. This is indeed done in some settings, e.g.\ the cooccurrence graphs of \cite{crama2011boolean}. However, we use the current formulation in order to maintain consistency with the previous literature, e.g.\ \cite{Gugl:06:A-System:kl} and \cite{Stra:07:A-Charac:fk}, and since it helps write certain arguments, e.g.\ in Section~\ref{sect:canon}, where we need to draw graphs with incomplete information.
\end{remark}

One of the reasons for considering relation webs is the following proposition, which allows us to reason about equivalence classes modulo $\AC$ easily. 

\begin{proposition}\label{prop:web-mod-ac}
	Constant-free negation-free linear terms are equivalent modulo $\AC$ if and only if they have the same web.
\end{proposition}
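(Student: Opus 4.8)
The plan is to prove the two directions separately: soundness (``only if'') by checking the generating equations preserve webs, and completeness (``if'') by induction on the number of variables, reconstructing a term from its web up to $\AC$. Throughout, the class of constant-free negation-free linear terms is closed under $\AC$, so webs are defined at every step, and $\web s = \web t$ already forces $\Var(s) = \Ver{\web s} = \Var(t)$.

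For soundness it suffices, since $\AC$ is the context-closure of its four (linear) generating equations, to show that one application of any of them inside a context leaves the web unchanged. Linearity fixes the vertex set, so only edge labels are at issue. A pair of variables both lying outside the rewritten redex, or both inside a single immediate argument of it, plainly keeps its least common connective. The only pairs to inspect are those split between two distinct arguments of the redex: for a commutativity step such a pair retains the root of the redex as its least common connective, and for an associativity step $(P \wedge Q)\wedge R \leftrightarrow P \wedge (Q\wedge R)$ every cross-argument pair has least common connective $\wedge$ on both sides (dually for $\vee$). Hence the web is an $\AC$-invariant.

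The heart of the converse is a structural lemma letting one read the root connective and top-level factorisation off the web. I would state it thus: if $u$ has at least two variables and root $\star\in\{\wedge,\vee\}$, flatten associativity to write $u = u_1 \star \cdots \star u_k$ ($k\geq 2$) with each $u_i$ a variable or rooted at the dual connective $\bar\star$, and set $V_i := \Var(u_i)$. Any edge of $\web u$ between distinct $V_i,V_j$ is labelled $\star$, because the smallest subtree containing a variable from each is rooted at the top $\star$; hence all $\bar\star$-edges lie inside a single block, and each $\bar\star$-component sits inside some $V_i$. The key point is the converse inclusion: each $V_i$ is itself $\bar\star$-connected. This holds because either $u_i$ is a single variable, or $u_i$ has root $\bar\star$ and $\geq 2$ variables, in which case the same edge observation applied to $u_i$ shows its top-level $\bar\star$-factors induce a complete multipartite $\bar\star$-graph on $\geq 2$ nonempty parts, which is connected. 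Thus the $\bar\star$-subgraph of $\web u$ has components exactly $V_1,\dots,V_k$, while the $\star$-subgraph is connected; in particular, for any term on $\geq 2$ variables exactly one monochromatic subgraph is disconnected, and that polarity reveals the root connective.

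With the lemma in hand the ``if'' direction is a clean induction on $n := |\Ver{\web s}|$. The case $n=1$ is immediate. For $n\geq 2$, let $G := \web s = \web t$ and examine its $\vee$-subgraph. If that subgraph is disconnected, the lemma rules out root $\vee$ for either term (which would make it connected), so both $s$ and $t$ have root $\wedge$, and the lemma matches the variable sets of their top-level $\wedge$-factors with the components $V_1,\dots,V_k$ of the $\vee$-subgraph. For the corresponding factors we get $\web{s_i} = G[V_i] = \web{t_i}$ (the least common connective of two variables of a subterm is computed within that subterm), so the induction hypothesis yields $s_i \sim_{\AC} t_i$; since $\AC$ is a congruence containing the associativity and commutativity of $\wedge$, we chain $s \sim_{\AC} s_1 \wedge\cdots\wedge s_k \sim_{\AC} t_1\wedge\cdots\wedge t_k \sim_{\AC} t$ (writing $\sim_{\AC}$ for equivalence modulo $\AC$). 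If instead the $\vee$-subgraph is connected, both roots are $\vee$ and the identical argument runs through the (now disconnected) $\wedge$-subgraph. The main obstacle is the converse inclusion in the structural lemma---that each factor's block is genuinely $\bar\star$-connected---since this is where the web truly encodes the alternation of $\wedge$ and $\vee$; the remaining steps are essentially bookkeeping.
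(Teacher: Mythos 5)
Your proof is correct. The paper itself dispatches this proposition in one line (``follows immediately from the definition and that $\AC$ preserves least common connectives''), which really only addresses the soundness direction; your soundness argument is exactly the detailed version of that remark, checking that commutativity and associativity steps preserve every least common connective. Where you go beyond the paper is the converse, which you prove via the standard cotree/modular-decomposition reconstruction: the $\bar\star$-components of the web recover the top-level $\star$-factorisation, and in particular exactly one monochromatic subgraph is disconnected, revealing the root connective. This is the right lemma, and your treatment of the two nontrivial points --- that cross-block edges all carry the root label, and that each block is genuinely $\bar\star$-connected because a factor rooted at $\bar\star$ induces a complete multipartite $\bar\star$-graph on at least two nonempty parts --- is sound, as is the induction on induced subgraphs (least common connectives are computed within subterms, so $\web{s_i}=G[V_i]$). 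What your version buys is an actual proof of uniqueness of the term up to $\AC$ given the web, which the paper leaves implicit; it is essentially the same content that underlies the $P_4$-freeness characterisation cited just below the proposition, so nothing in your argument is at odds with the paper's framework.
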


\begin{proof}
  This follows immediately from the definition and that $\AC$ preserves least common connectives.
\end{proof}

An important property of webs is that they have no minimal paths of length $>2$. More precisely, we have the following:

\begin{proposition}
	A complete $\{\wedge,\vee \}$-labelled graph on $X$ is the web of some negation-free constant-free linear term on $X$ if and only if it contains no induced subgraphs of the form:
	\begin{equation}
	\label{eq:P4}
	\FourGraph{w,x,y,z}grrgrg
	\end{equation}
\end{proposition}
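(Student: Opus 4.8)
The plan is to prove both directions by recognising that this proposition is exactly the correspondence between linear $\{\wedge,\vee\}$-terms and \emph{cographs}: reading $\wedge$-labelled edges as ``edges'' and $\vee$-labelled edges as ``non-edges'', the forbidden configuration \eqref{eq:P4} is precisely an induced four-vertex path. Indeed its $\wedge$-edges form a path on $y,w,z,x$ (in that order) and its $\vee$-edges the complementary path on $w,x,y,z$, so the statement becomes the classical fact that cographs are the induced-$P_4$-free graphs. The key structural input, immediate from Definition~\ref{def:relweb}, is the recursive description of webs: if $t = t_1 \star t_2$ with $\star\in\{\wedge,\vee\}$, then every edge of $\web t$ between $\Var(t_1)$ and $\Var(t_2)$ is labelled $\star$ (their smallest common subtree is rooted at $\star$), while the edges inside each $\Var(t_i)$ are exactly those of $\web{t_i}$.

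For necessity I would argue by structural induction on $t$, the base case ($t$ a single variable) being vacuous. Suppose $\web{t_1\star t_2}$ contained \eqref{eq:P4} on $\{w,x,y,z\}$. The crucial observation is that this configuration is connected in \emph{each} colour separately (the $\wedge$-edges alone connect all four vertices, and so do the $\vee$-edges alone). Hence, whichever connective $\star$ sits at the root, the $\bar\star$-coloured edges of the configuration already connect $w,x,y,z$; since $\bar\star$-edges never cross between $\Var(t_1)$ and $\Var(t_2)$, all four vertices must lie in a single $\Var(t_i)$. But then \eqref{eq:P4} is an induced subgraph of $\web{t_i}$, contradicting the induction hypothesis.

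For sufficiency I would induct on $|X|$, the single-vertex case being immediate. For $|X|\ge 2$ the engine is the \emph{dichotomy lemma}: a complete $\{\wedge,\vee\}$-labelled graph $G$ on at least two vertices containing no copy of \eqref{eq:P4} has either its $\wedge$-subgraph or its $\vee$-subgraph disconnected. Granting this, pick the disconnected colour, say $\wedge$, with components $X_1,\dots,X_k$ (so $k\ge 2$); then every edge of $G$ between distinct $X_i$ is $\vee$-labelled. Each $G|_{X_i}$ is again free of \eqref{eq:P4}, so by induction there are negation-free constant-free linear terms $t_i$ with $\web{t_i}=G|_{X_i}$. Setting $t := t_1\vee\cdots\vee t_k$ gives a linear term (the $X_i$ are disjoint) whose web agrees with $G$ inside each $X_i$ and carries $\vee$ on every crossing edge, so $\web t = G$ by Proposition~\ref{prop:web-mod-ac}; the case where the $\vee$-subgraph is disconnected is symmetric, using $\wedge$ at the root.

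It remains to establish the dichotomy lemma, which I expect to be the main obstacle. Suppose for contradiction that both the $\wedge$-subgraph $R$ and the $\vee$-subgraph $B$ are connected. A shortest path (in either colour) of length $\ge 3$ between two vertices would be an induced path containing \eqref{eq:P4}; hence both $R$ and $B$ have diameter at most $2$. Since $R$ is connected and neither graph is complete, there are vertices $u,v$ joined by a $\wedge$-edge, and by diameter $2$ in $B$ a vertex $w$ with $u,v$ both $\vee$-adjacent to $w$; examining how a fourth vertex attaches to $\{u,v,w\}$ then forces an induced copy of \eqref{eq:P4}, the required contradiction. This is exactly the standard characterisation of cographs as the graphs in which every induced subgraph on $\ge 2$ vertices is disconnected or co-disconnected, so I would either carry out this finite case analysis explicitly or invoke that result directly.
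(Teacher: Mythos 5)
The paper does not actually prove this proposition: it only cites the literature (M\"ohring, Retor\'e, Bechet et al., Guglielmi) for the $P_4$-freeness characterisation, so you are doing strictly more work than the text, and your route is the standard cograph argument. The necessity direction is complete and correct: the observation that the configuration \eqref{eq:P4} is spanned by a Hamiltonian path in \emph{each} colour, so that the $\bar\star$-edges force all four vertices into a single $\Var(t_i)$ of the root decomposition $t=t_1\star t_2$, is exactly the right invariant and closes the induction. The sufficiency direction is also correctly organised around the dichotomy lemma (one of the two colour classes is disconnected), and the reconstruction of the term from the components of the disconnected colour is fine.

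The one genuine soft spot is your proof of the dichotomy lemma. The reduction to diameter at most $2$ in both colours is correct (the first four vertices of a shortest path of length $\ge 3$ induce exactly \eqref{eq:P4}), and the triangle $u,v,w$ with $uv$ a $\wedge$-edge and $uw,vw$ both $\vee$-edges exists as you say. But the claim that ``examining how a fourth vertex attaches to $\{u,v,w\}$ forces an induced copy of \eqref{eq:P4}'' is false as stated: of the eight possible attachments of a fourth vertex $s$, only the two with $sw$ labelled $\wedge$ and exactly one of $su,sv$ labelled $\wedge$ produce the forbidden configuration; the other six are harmless, so no single additional vertex closes the argument. You need a global argument here, e.g.\ Seinsche's induction: delete a vertex $v$, apply the induction hypothesis to get a disconnected colour class in the remainder, and analyse how $v$ attaches to its components (if $v$ misses a whole component that colour stays disconnected; if $v$ sees every vertex the other colour is disconnected; otherwise an edge $ab$ inside one component with $a$ seen and $b$ missed, together with a seen vertex $c$ in another component, yields $b$--$a$--$v$--$c$ as an induced $P_4$). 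Alternatively, invoking the cograph characterisation outright, as you offer to do, is perfectly legitimate and is in the spirit of the paper, which itself gives only citations.
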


A proof of this property can be found, for example, in \cite{mohring:orders}, \cite{retore:phd}, \cite{bechet:etal:97}, or \cite{Gugl:06:A-System:kl}. It is called \emph{$P_4$-freeness} or \emph{Z-freeness} or \emph{N-freeness}, depending on the viewpoint. This property can be useful when we reason with webs, for instance in Section~\ref{sect:canon}.

\section{Preliminaries on Boolean functions}\label{sect:prelim-bool}
In this section we introduce the usual Boolean function models for terms of Boolean logic.
At the end of the section we give some examples of the various notions introduced.

A \emph{Boolean function} on a (finite) set of variables $X\subseteq \Var$ is a map $f\colon\{0,1\}^X \to \{0,1\}$.
We identify $\{0,1\}^X$ with $\mathcal P (X)$, the powerset of $X$,
i.e.\ we may specify an argument of a Boolean function by the subset of its variables assigned to $1$.
A little more formally, a function $\nu : X \to \{0,1\} $ is specified by the set
$X_\nu$ it indicates, i.e.\ $x\in X_\nu$ just if $\nu(x) = 1$.  For this reason we
may quantify over the arguments of a Boolean function by writing
$Y\subseteq X$ rather than $\nu \in \{0,1\}^X$, i.e.\ we write $f(Y)$
to denote the value of $f$ if the input is 1 for the variables in $Y$
and 0 for the variables in $X\setminus Y$. Similarly, we write
$f(\cmplmt Y)$ for the value of $f$ when the variables in $Y$ are $0$
and the variables in $X\setminus Y$ are $1$.

For Boolean functions $f,g : \{0,1\}^X \to \{0,1\}$ 
we write $f \leq g$ if, for every $ Y\subseteq X$, we have that $f(Y) \leq g(Y)$. 
Notice that the following can easily be shown to be equivalent:
\begin{enumerate}
	\item $f\leq g$.
	\item $f(Y) = 1 \implies g(Y) = 1$.
	\item $g(Y) = 0 \implies f(Y) = 0$.
\end{enumerate}
We also write $f<g$ if $f\leq g$ but $f(Y) \neq g(Y)$ for some $Y\subseteq X$.

\begin{definition}
  A Boolean function  $f\colon\{0,1\}^X \to \{0,1\}$ is \emph{monotone} iff  $Y \subseteq Y' \subseteq X $ implies $f(Y)\leq f(Y')$.
\end{definition}

\begin{definition}\label{def:minterm}
	Let $f$ be a monotone Boolean function on a variable set $X$. A set $Y\subseteq X$ is a \emph{minterm} (resp.~\emph{maxterm}) for $f$ if it is a minimal set such that $f(Y) =1$ (resp.\ \hbox{$f(\cmplmt{Y})=0$}). 
	The set of all minterms (resp.~maxterms) of $f$ is denoted $\MIN(f)$ (resp.\ $\MAX(f)$).
\end{definition}

\begin{observation}\label{obs:minmax}
  Monotone Boolean functions are uniquely determined by their minterms
  or by their maxterms. In particular, for two functions $f$ and $g$,
  we have $\MIN(f)\neq\MIN(g)$ iff $\MAX(f)\neq\MAX(g)$ iff there is a $Y$ such that
  $f(Y)\neq g(Y)$.
  
  We also have that, if $f(X) = 1$, then there is some $S \in \MIN (f)$  such that $S \subseteq X$; dually, if $f(X) =0 $, then there is some $T\in \MAX(f)$ such that $T \supseteq X$.
\end{observation}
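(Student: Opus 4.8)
The plan is to reduce everything to a single \emph{membership criterion}: for a monotone $f$ and any input $Y$, one has $f(Y)=1$ exactly when some minterm of $f$ is contained in $Y$, and dually $f(Y)=0$ exactly when some maxterm of $f$ is disjoint from $Y$ (equivalently, contained in $\cmplmt Y$). Both the uniqueness statement and the triple equivalence fall out of this criterion at once, so the real content lies in the ``in particular'' existence clauses, which are precisely the nontrivial (left-to-right) directions of the criterion.

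First I would establish the existence clauses by a minimal-witness argument. Given $f(Y)=1$, the family of subsets $Z\subseteq Y$ with $f(Z)=1$ is nonempty (it contains $Y$ itself) and, $Y$ being finite, has a $\subseteq$-minimal member $S$. Then $S\in\MIN(f)$: any $S'\subsetneq S$ satisfies $S'\subseteq Y$, so $f(S')=1$ would contradict the minimality of $S$ within $Y$; hence $S$ is minimal among \emph{all} $1$-sets, and by construction $S\subseteq Y$. The maxterm clause is the order-reversing dual: if $f(\cmplmt Y)=0$, then the sets $Z\subseteq Y$ with $f(\cmplmt Z)=0$ form a nonempty family (again containing $Y$) whose $\subseteq$-minimal member $T$ is, by the identical reasoning, a maxterm with $T\subseteq Y$. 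Re-expressing this through the complement yields the formulation phrased directly in terms of the $0$-set.

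Next I would upgrade these into the full membership criterion using monotonicity for the two easy directions: if a minterm $S\subseteq Y$ then $f(Y)\geq f(S)=1$, and if a maxterm $T$ is disjoint from $Y$ then $Y\subseteq\cmplmt T$, whence $f(Y)\leq f(\cmplmt T)=0$. Consequently $f(Y)$ is completely determined by $\MIN(f)$, and dually by $\MAX(f)$: two monotone functions with identical minterm sets agree on every input, and likewise for maxterms, which is the first sentence. For the triple equivalence, observe that $f$ and $g$ differ on some $Y$ iff $f\neq g$ as functions; by the determination just proved this holds iff $\MIN(f)\neq\MIN(g)$, and equally iff $\MAX(f)\neq\MAX(g)$, since in each case the substantive implication is ``same witnesses $\Rightarrow$ same function'' while the converse is immediate from the fact that $\MIN(f)$ and $\MAX(f)$ are functions of $f$.

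The one place that needs genuine care is the maxterm half, because the complement built into the definition of a maxterm ($f(\cmplmt Y)=0$) means the naive dual ``replace $\subseteq$ by $\supseteq$'' points in the wrong direction; the order-reversing dual must also complement the input, so that the correct containment is $T\subseteq Y$ under the hypothesis $f(\cmplmt Y)=0$. Keeping track of the involution $Y\mapsto\cmplmt Y$ throughout is exactly what keeps the directions consistent. Everything else is a routine finiteness-and-minimality argument, and no properties beyond monotonicity together with the definitions of $\MIN(f)$ and $\MAX(f)$ are required.
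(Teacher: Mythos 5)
Your proof is correct, and it is more than the paper offers: Observation~3.5 is stated without any proof at all, so there is nothing to compare against except the claim itself. Your route --- establish the membership criterion ($f(Y)=1$ iff some minterm is contained in $Y$; $f(Y)=0$ iff some maxterm is disjoint from $Y$) by a finite minimal-witness argument for the hard directions and monotonicity for the easy ones, then read off both the uniqueness statement and the triple equivalence --- is the standard argument and is carried out without gaps. In particular, your check that a $\subseteq$-minimal member of $\{Z\subseteq Y : f(Z)=1\}$ is minimal among \emph{all} $1$-sets (because any smaller $1$-set would itself lie inside $Y$) is exactly the point that makes the witness a genuine minterm rather than merely minimal-within-$Y$.

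One remark deserves emphasis: your caution about the maxterm half is not mere bookkeeping --- it corrects the statement. As literally written, ``if $f(X)=0$ then there is some $T\in\MAX(f)$ with $T\supseteq X$'' is false: take $f=x\wedge y\wedge z$ and $X=\{x,y\}$, so $f(X)=0$, yet every maxterm of $f$ is a singleton and none contains $X$. (The constant-$\bot$ function, whose only maxterm is $\emptyset$, gives an even starker counterexample.) The correct dual, which is what you prove, is: if $f(\cmplmt{Y})=0$ then there is some $T\in\MAX(f)$ with $T\subseteq Y$ --- equivalently, if $f(Y)=0$ then some maxterm of $f$ is disjoint from $Y$. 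This is also the form that the paper actually needs in the ``proved similarly/analogously'' halves of Propositions~4.6 and~5.7 and of Lemma~5.9, where one concludes from $g(\cmplmt{T})=0$ that some maxterm $T'\subseteq T$ exists. So your proposal proves the right statement; the Observation's maxterm clause as printed has the containment reversed (or, equivalently, is missing a complement on the argument), and your explicit tracking of the involution $Y\mapsto\cmplmt{Y}$ is precisely what is needed to get it right.
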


Minterms and maxterms correspond to minimal DNF and CNF representations, respectively, of a monotone Boolean function. We refer the reader to \cite{crama2011boolean} for an introduction to their theory. In this work we use them in a somewhat different way to Boolean function theory, in that we devise definitions of logical concepts such as entailment and, in the next section, what we call ``triviality''. 
The reason for this is to take advantage of the purely function-theoretic results stated in this section (e.g.\ Gurvich's Theorem~\ref{thm:gurevich} below) to derive our main results in Sections~\ref{sect:inference} and \ref{sect:no-compl}.

\begin{proposition}\label{prop:sound-minmax} %
	For monotone Boolean functions $f,g$ on the same variable set, the following are equivalent:
	\begin{enumerate}
		\item\label{item:f-lessthan-g} $f \leq g$.
		\item\label{item:min-telescope} $\forall S \in \MIN(f) .\; \exists S' \in \MIN(g) .\; S' \subseteq S$.
		\item\label{item:max-telescope} $\forall T \in \MAX(g) .\; \exists T' \in \MAX(f) .\; T' \subseteq T$.
	\end{enumerate}
\end{proposition}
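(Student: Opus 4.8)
The plan is to establish the three equivalences in two halves: first (1)$\Leftrightarrow$(2) directly from the minterm half of Observation~\ref{obs:minmax}, and then (1)$\Leftrightarrow$(3) by \emph{dualising} the first half rather than repeating the whole argument for maxterms. Throughout I will use the reformulations of $f\leq g$ given earlier in this section, in particular that $f\leq g$ iff $f(Y)=1\Rightarrow g(Y)=1$ for all $Y$. The two facts I draw from Observation~\ref{obs:minmax} are: if $f(Y)=1$ then some $S\in\MIN(f)$ satisfies $S\subseteq Y$, and, dually, if $f(\cmplmt Y)=0$ then some $T\in\MAX(f)$ satisfies $T\subseteq Y$.

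For (1)$\Rightarrow$(2): given $S\in\MIN(f)$ we have $f(S)=1$, hence $g(S)=1$ by (1), and the minterm fact yields $S'\in\MIN(g)$ with $S'\subseteq S$. For (2)$\Rightarrow$(1) it suffices to show $f(Y)=1\Rightarrow g(Y)=1$: from $f(Y)=1$ pick $S\in\MIN(f)$ with $S\subseteq Y$, apply (2) to obtain $S'\in\MIN(g)$ with $S'\subseteq S\subseteq Y$, and conclude $g(Y)\geq g(S')=1$ by monotonicity. This settles (1)$\Leftrightarrow$(2). It is worth checking the degenerate cases to confirm correctness: if $f\equiv\fff$ then $\MIN(f)=\emptyset$ makes (2) vacuous and indeed $\fff\leq g$; if $f\equiv\ttt$ then $\emptyset\in\MIN(f)$, forcing $\emptyset\in\MIN(g)$, i.e.\ $g\equiv\ttt$.

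For (1)$\Leftrightarrow$(3) the clean route is duality. Put $f^d(Y):=\cmplmt{f(\cmplmt Y)}$; then $f^d$ is again monotone, a routine unfolding gives $\MIN(f^d)=\MAX(f)$ (and symmetrically $\MAX(f^d)=\MIN(f)$), and the order-reversal of bits under complementation gives $f\leq g\iff g^d\leq f^d$. Under these identifications, condition (3) for the pair $(f,g)$ is literally condition (2) for the pair $(g^d,f^d)$, while (1) for $(f,g)$ is (1) for $(g^d,f^d)$. Hence the already-proven equivalence (1)$\Leftrightarrow$(2), applied to $(g^d,f^d)$, yields (1)$\Leftrightarrow$(3) for $(f,g)$.

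The argument is essentially routine once Observation~\ref{obs:minmax} is available, so there is no serious obstacle; the only place demanding care is the bookkeeping of complements in the dualisation — specifically the order-reversal $\cmplmt a\leq\cmplmt b\iff b\leq a$ together with the accompanying swap of minterms and maxterms — which is exactly the point where the maxterm direction is subtler than the minterm direction. Anyone preferring to avoid duality can instead prove (1)$\Rightarrow$(3) and (3)$\Rightarrow$(1) by hand, using the reformulation $g(Y)=0\Rightarrow f(Y)=0$ of $f\leq g$ together with the maxterm fact above, but this merely re-derives what the dualisation packages for free.
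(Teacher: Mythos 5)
Your proof is correct and follows essentially the same approach as the paper: the $(1)\Leftrightarrow(2)$ half is identical, resting on Observation~\ref{obs:minmax} plus monotonicity, and where the paper disposes of $(1)\Leftrightarrow(3)$ with ``proved similarly'' you make that step explicit via the dual function $f^d(Y)=\cmplmt{f(\cmplmt Y)}$, which correctly swaps $\MIN$ and $\MAX$ and reverses $\leq$. The dualisation is a tidy formalisation of the paper's ``similarly'' rather than a genuinely different route, and all the bookkeeping (monotonicity of $f^d$, $\MIN(f^d)=\MAX(f)$, $f\leq g\iff g^d\leq f^d$) checks out.
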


\begin{proof}
	$1\implies 2$. Suppose $f\leq g$ and let $S\in\MIN(f)$. We have that $f(S) = 1$ so also $g(S) =1$, by \ref{item:f-lessthan-g}, whence there must be an $S' \in \MIN(g)$ such that $S' \subseteq S$, by Observation~\ref{obs:minmax}.
	
	$2\implies 1$. If $f(X)=1$ then there is some $S \in \MIN(f)$ such that $S\subseteq X$, by Observation~\ref{obs:minmax}. By \ref{item:min-telescope}, there is some $S' \in \MIN(g)$ such that $S' \subseteq S$, and so $S'\subseteq X$. Therefore $g(X)=1$, by monotonicity, and so $f\le g$.
	
	$1\implies 3$ and $3\implies 1$ are proved similarly.
\end{proof}

A term $t$ computes a Boolean function $\{0,1\}^{\Var(t)} \to \{0,1\}$,
in the usual way, and negation-free terms compute monotone Boolean functions. Thus, we can speak of minterms and
maxterms of a negation-free term $t$, referring to the minterms and
maxterms of the function computed by $t$.
For linear terms, this will allow us to give a graph-theoretic formulation of minterms and
maxterms using concepts from the previous section. We
give the following inductive construction of minterms and maxterms:

\begin{proposition}\label{prop:altminterm}
	Let $t$ be a term.
	A set $S\subseteq \Var(t)$ is a minterm of $t$ if and only if:
	\begin{itemize}
		\item $t = \top$ and $S$ is empty, or
		\item $t = x$ and $S = \{x\}$, or
		\item $t=t_1 \vlor t_2$ and $S$ is a minterm of\/ $t_1$ or of\/ $t_2$, or
		\item $t=t_1 \vlan t_2$ and $S = S_1 \cup S_2$ where each $S_i$ is a minterm of $t_i$.
	\end{itemize}
	Dually, a set $T\subseteq \Var(t)$ is a maxterm of $t$ if and only if:
	\begin{itemize}
		\item $t = \bot$ and $T$ is empty, or
		\item $t=x$ and $T = \{x\}$, or
		\item $t=t_1 \vlor t_2$ and $T = T_1 \cup T_2$ where each $T_i$ is a maxterm of $t_i$, or
		\item $t=t_1 \vlan t_2$ and $T$ is a maxterm of\/ $t_1$ or of\/ $t_2$.
	\end{itemize}	
\end{proposition}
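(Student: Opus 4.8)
The plan is to prove both biconditionals by a single structural induction on $t$, treating the minterm and maxterm characterisations in parallel since they are exactly dual. For a fixed $t$ the outermost symbol determines which clause is in play, so in each case I only need to check that the displayed condition on $S$ (resp.\ $T$) is equivalent to being a minimal set with $f(S)=1$ (resp.\ $f(\cmplmt T)=0$), where $f$ is the function computed by $t$, exactly as in Definition~\ref{def:minterm}. The base cases are immediate. If $t=\top$ then $f\equiv 1$, so the unique minimal set forcing the value $1$ is $\emptyset$; if $t=x$ then $f(S)=1$ iff $x\in S$, so the only minterm is $\{x\}$; and if $t=\bot$ there is no $S$ with $f(S)=1$, which is why $\bot$ contributes no minterm clause. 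The maxterm base cases are dual.

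For the inductive step I write $f_1,f_2$ for the functions computed by $t_1,t_2$ and use the evident semantic facts that $f(Y)=1$ iff $f_1(Y)=1$ or $f_2(Y)=1$ when $t=t_1\vlor t_2$, and iff $f_1(Y)=1$ and $f_2(Y)=1$ when $t=t_1\vlan t_2$. Since $t$ is linear, $\Var(t_1)$ and $\Var(t_2)$ are disjoint, so each $f_i$ depends only on $Y\cap\Var(t_i)$; this is the fact that lets minimality decompose. Consider minterms of $t=t_1\vlan t_2$: here $f(S)=1$ iff $f_1(S)=1$ and $f_2(S)=1$, and by disjointness a set is minimal with this property exactly when it is $S_1\cup S_2$ with each $S_i$ minimal for $f_i$, i.e.\ (by the induction hypothesis) a minterm of $t_i$. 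For minterms of $t=t_1\vlor t_2$ the decomposition sits on the other side: a minimal witness makes exactly one disjunct true and is then a minterm of the corresponding $t_i$, and conversely each minterm of $t_i$ is a minterm of $t$. The maxterm clauses are obtained by exchanging the roles of $\vlor$ and $\vlan$ and of ``take the union'' and ``choose one side'' throughout.

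The step needing the most care, and the one that genuinely uses the ambient hypotheses of this section, is preservation of \emph{minimality} across the disjunction clause for minterms (dually, the conjunction clause for maxterms). Given a minterm $S_1$ of $t_1$ one has $f(S_1)=1$ at once, but to conclude that $S_1$ is \emph{minimal} for $t=t_1\vlor t_2$ I must exclude a proper subset $S'\subsetneq S_1$ with $f_2(S')=1$. Here I use that the subterms are constant-free with disjoint variable sets: then $f_2(S')=f_2(S'\cap\Var(t_2))=f_2(\emptyset)=0$, because a constant-free negation-free term evaluates to $0$ on the all-zero assignment; the dual maxterm argument uses that such a term evaluates to $1$ on the all-one assignment. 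Without this kind of hypothesis the equivalence fails outright: for $t=x\vlor(x\vlan y)$ the set $\{x,y\}$ is a minterm of the right conjunct yet is not minimal for $t$. I would therefore run the induction over the constant-free linear terms that are the subject of this section, so that every proper subterm encountered in an inductive step is again constant-free and linear, and keep the constant cases $\top,\bot$ only as terminal base cases. With minimality secured this way, both directions of every clause follow directly from the induction hypotheses.
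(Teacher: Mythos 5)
Your proof is correct and follows the same route as the paper's (one-line) proof: a structural induction on $t$ unwinding Definition~\ref{def:minterm}. Your closing observation is a genuine refinement rather than a defect: the minimality claim in the disjunction clause for minterms (dually, the conjunction clause for maxterms) really does require that $\Var(t_1)$ and $\Var(t_2)$ be disjoint and that proper subterms be constant-free --- your counterexample $x \vlor (x \vlan y)$, and likewise $x \vlor \top$, show the proposition fails as literally stated for arbitrary terms --- whereas the paper leaves these hypotheses implicit and only ever applies the proposition to (constant-free) linear terms, exactly the class on which your induction goes through.
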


\begin{proof}
  This follows straightforwardly from Definition~\ref{def:minterm} and
  structural induction on~$t$.
\end{proof}

Notice that, in particular, $\bot$ has no minterms and $\top$ has no maxterms.
We can now present one of the important correspondences of this work, characterising minterms and maxterms of linear terms as maximal cliques in their relation webs:

\begin{theorem}\label{thm:rel-minmax-webs}
  A set of variables is a minterm (resp.~maxterm) of a negation-free
  constant-free linear term $t$ if and only if it is a maximal
  $\wedge$-clique (resp.\ maximal $\vee$-clique) in $\web t$.
\end{theorem}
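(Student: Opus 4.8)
The plan is to prove the minterm / $\wedge$-clique equivalence by structural induction on $t$, and then obtain the maxterm / $\vee$-clique equivalence for free by the evident symmetry: simultaneously exchanging $\wedge$ with $\vee$, $\vlan$ with $\vlor$, and ``minterm'' with ``maxterm'' turns the inductive clauses of Proposition~\ref{prop:altminterm} and the web construction of Definition~\ref{def:relweb} into one another. Since $t$ is constant-free, the clauses for $\top$ and $\bot$ in Proposition~\ref{prop:altminterm} never apply, so only the variable and binary-connective cases need to be treated.

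The crucial preliminary observation, which drives the whole induction, is a decomposition of the web. Writing $t = t_1 \star t_2$ with $\star \in \{\wedge,\vee\}$, linearity of $t$ gives $\Var(t_1)\cap\Var(t_2)=\emptyset$, and Definition~\ref{def:relweb} shows that $\web t$ restricts to $\web{t_1}$ on $\Var(t_1)$ and to $\web{t_2}$ on $\Var(t_2)$, while every edge joining a vertex of $\Var(t_1)$ to a vertex of $\Var(t_2)$ is labelled $\star$ — because the least common connective of two variables lying on opposite sides of the root is exactly that root connective. Thus $\web t$ is the ``$\star$-labelled join'' of $\web{t_1}$ and $\web{t_2}$, and the inductive step reduces to understanding how maximal $\wedge$-cliques behave under this join.

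First I would handle the base case $t = x$: here $\web t$ is a single vertex, its unique (vacuously maximal) $\wedge$-clique is $\{x\}$, which is also the unique minterm of $x$. For the inductive step with $\star = \wedge$, I would show that the maximal $\wedge$-cliques of $\web{t_1\vlan t_2}$ are exactly the sets $C_1 \cup C_2$ with each $C_i$ a maximal $\wedge$-clique of $\web{t_i}$: such a union is a $\wedge$-clique since the interior edges are $\wedge$ by hypothesis and the cross edges are $\wedge$ because $\star=\wedge$; maximality follows by noting that any vertex extending $C_1\cup C_2$ lies in one $\Var(t_i)$ and, as its cross edges are automatically $\wedge$, would already extend $C_i$. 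By the induction hypothesis these are precisely the $S_1\cup S_2$ with $S_i$ a minterm of $t_i$, i.e.\ the minterms of $t_1\vlan t_2$ by Proposition~\ref{prop:altminterm}. For the step with $\star = \vee$, the cross edges are all $\vee$, so no $\wedge$-clique can meet both $\Var(t_1)$ and $\Var(t_2)$; hence every $\wedge$-clique sits inside a single side, and the maximal $\wedge$-cliques of $\web{t_1 \vlor t_2}$ are exactly those of $\web{t_1}$ together with those of $\web{t_2}$, which by induction are the minterms of $t_1$ together with those of $t_2$ — again matching Proposition~\ref{prop:altminterm}.

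The one point requiring care, which I regard as the main obstacle, is the maximality bookkeeping in the $\star = \wedge$ case: one must verify both that a maximal $\wedge$-clique $C$ of $\web t$ splits as $C = (C\cap \Var(t_1)) \cup (C \cap \Var(t_2))$ with each part \emph{maximal} in its own web (and in particular nonempty, which is forced because a vertex on an otherwise empty side could always be added along the all-$\wedge$ cross edges), and conversely that a union of maximal cliques admits no extension. Both directions reduce to the same extension argument, so once that is isolated the rest is routine; constant-freeness is precisely what guarantees each $t_i$ contributes a nonempty variable set and hence a genuine minterm, keeping the correspondence exact.
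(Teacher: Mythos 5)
Your proof is correct and follows essentially the same route as the paper, which simply states that the result ``follows from structural induction on $t$ and Proposition~\ref{prop:altminterm}''. You have filled in exactly the details the paper leaves implicit (the join decomposition of $\web{t_1\star t_2}$ and the maximality bookkeeping), and these check out.
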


\begin{proof}
  This follows from structural induction on $t$ and
  Proposition~\ref{prop:altminterm}.
\end{proof}

\begin{definition}[Read-once functions]
  A Boolean function is called  \emph{read-once} if it is computed by some
  linear term.
\end{definition}

It is not exactly clear when the following result first appeared,
although we refer to a discussion in \cite{crama2011boolean} where it
is stated that results directly implying this were first mentioned
in~\cite{kuznetsov1958non}. The result also occurs in
\cite{gurvich1977repetition}, and is generalised to certain other
bases in \cite{Heiman94onread-once}
and~\cite{Hellerstein:1990:CCL:895509}.

\begin{theorem}
	[Folklore]
	\label{thm:ro-mod-ac}
  Constant-free negation-free linear terms compute the same
  (read-once) Boolean function if and only if they are equivalent
  modulo $\AC$.
\end{theorem}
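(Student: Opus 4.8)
The plan is to prove the two implications separately, routing the nontrivial one through the correspondence between minterms and maximal $\wedge$-cliques established in Theorem~\ref{thm:rel-minmax-webs}.

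For the ``if'' direction (soundness of $\AC$), I would observe that each generating equation of $\AC$---associativity and commutativity of $\wedge$ and $\vee$---is a valid Boolean identity, and that the computed function is invariant under replacing equals inside a context, since Boolean function composition respects contexts. Hence terms that are equivalent modulo $\AC$ compute the same function; note that this direction uses neither linearity nor constant-freeness.

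For the ``only if'' direction, suppose $s$ and $t$ are constant-free negation-free linear terms computing the same Boolean function $f$. First I would pin down the variable sets. By Proposition~\ref{prop:altminterm} every variable of a constant-free term occurs in some minterm (a routine induction on the term: the leaf case $t=x$ gives the minterm $\{x\}$, and the $\vee$- and $\wedge$-clauses propagate the property upward), and minimality of a minterm $S$ containing $x$ gives $f(S)=1\neq f(S\setminus\{x\})$, so $f$ genuinely depends on every variable of $s$ and of $t$. Consequently $\Var(s)=\Var(t)$, both being exactly the set of variables on which $f$ depends. Now $s$ and $t$ are honestly the same monotone function on one common variable set, so by Observation~\ref{obs:minmax} they have the same minterms, i.e.\ $\MIN(s)=\MIN(t)$; by Theorem~\ref{thm:rel-minmax-webs} this means $\web s$ and $\web t$ have precisely the same maximal $\wedge$-cliques.

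The final step is to recover the entire web from its maximal $\wedge$-cliques, for which I would use that webs are \emph{complete} graphs: an edge $\{x,y\}$ is $\wedge$-labeled iff $\{x,y\}$ is a $\wedge$-clique iff it is contained in some maximal $\wedge$-clique. Since $\web s$ and $\web t$ have the same maximal $\wedge$-cliques they therefore have the same $\wedge$-labeled edges, and by completeness also the same $\vee$-labeled edges, so $\web s=\web t$. Proposition~\ref{prop:web-mod-ac} then yields that $s$ and $t$ are equivalent modulo $\AC$, completing the argument. I expect the only genuinely delicate point to be the variable-set matching: without the relevance observation above, ``computing the same function'' could be read across differing variable sets and the clique comparison would be ill-posed. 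Everything else is a direct assembly of the earlier results---and it is worth noting that $P_4$-freeness of webs is not needed for this last step, only their completeness.
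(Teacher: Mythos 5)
Your proof is correct and follows essentially the same route as the paper, which simply cites Observation~\ref{obs:minmax}, Theorem~\ref{thm:rel-minmax-webs} and Proposition~\ref{prop:web-mod-ac} and declares the result immediate. You have merely filled in the two details the paper leaves implicit---matching the variable sets via relevance of every variable, and recovering the full (complete) web from its maximal $\wedge$-cliques---both of which are handled correctly.
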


\begin{proof}
  This follows immediately from Proposition~\ref{prop:web-mod-ac},
  Theorem~\ref{thm:rel-minmax-webs}, and Observation~\ref{obs:minmax}.
\end{proof}

The following consequence of Theorem~\ref{thm:ro-mod-ac} appears
in \cite{Das:11:Depth-Change}, where a detailed proof may be found.

\begin{corollary}\label{cor:acu'}
	Negation-free linear terms compute the same (read-once) Boolean function if and only if they are equivalent modulo $\ACU'$.
\end{corollary}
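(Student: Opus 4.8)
The plan is to derive the general (constant-allowing) statement from the constant-free version already established as Theorem~\ref{thm:ro-mod-ac}. The soundness direction (right-to-left) is routine: each equation of $U'$, like each equation of $\AC$, relates two terms computing the same Boolean function, and since the computed function of a negation-free term is invariant under replacement inside a (necessarily positive) context, taking the context-closed congruence preserves this. Hence $\ACU'$-equivalent negation-free linear terms compute the same read-once function.

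For the completeness direction (left-to-right), I would first eliminate constants by orienting the equations of $U'$ left-to-right as rewrite rules: $u\vlor\fff,\ \fff\vlor u\to u$; $\ u\vlan\ttt,\ \ttt\vlan u\to u$; $\ u\vlor\ttt,\ \ttt\vlor u\to\ttt$; $\ u\vlan\fff,\ \fff\vlan u\to\fff$; and $\ttt\vlor\ttt\to\ttt$, $\fff\vlan\fff\to\fff$. Every such step strictly decreases the size $|t|$, so the rewriting terminates, and since each step lies inside $\ACU'$, any term $t$ reduces to some $\hat t$ with $t=_{\ACU'}\hat t$. The key structural point is that in a normal form $\hat t$ no connective can have a constant as an operand, since otherwise one of the rules above would apply; therefore $\hat t$ is either a single constant, $\ttt$ or $\fff$, or else is constant-free. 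Linearity and negation-freeness are preserved throughout: the discarding rules (e.g.\ $u\vlor\ttt\to\ttt$) only delete subterms, and as $t$ is linear each surviving variable still occurs exactly once.

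Now take negation-free linear terms $s,t$ computing the same function $f$, with normal forms $\hat s,\hat t$ as above; by soundness $\hat s$ and $\hat t$ also compute $f$. I would then argue by cases on their shapes. If both are constants they must be the same constant, since $\ttt$ and $\fff$ compute distinct functions. If both are constant-free, Theorem~\ref{thm:ro-mod-ac} yields $\hat s=_{\AC}\hat t$, and hence $s=_{\ACU'}\hat s=_{\AC}\hat t=_{\ACU'}t$ as desired. The remaining ``mixed'' case, where one normal form is a constant and the other is constant-free, is excluded by the observation that a constant-free negation-free linear term never computes a constant function: a straightforward induction (following Proposition~\ref{prop:altminterm}, or by direct evaluation) shows that such a term takes value $0$ at $\emptyset$ and value $1$ at $\Var(\hat t)$, so it is non-constant, whereas a constant term computes a constant function.

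The main obstacle is mild and concentrated in two places. First, one must check that constant elimination genuinely drives every constant out of the term unless the whole term collapses to $\ttt$ or $\fff$; this is the case analysis over the four connective/constant combinations on each side, together with termination by term size. Second, one must rule out the mixed case, which rests entirely on the small evaluation lemma that constant-free negation-free terms are non-constant. Everything else---soundness of the individual $U'$ equations and the final splicing together of $\ACU'$-equivalences---is bookkeeping.
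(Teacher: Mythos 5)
Your proposal is correct and follows essentially the same route as the paper, which only sketches the argument via the observation that every negation-free linear term is $\ACU'$-equivalent to $\fff$, $\ttt$, or a unique constant-free linear term (deferring details to the cited reference); your constant-elimination normalization, the appeal to Theorem~\ref{thm:ro-mod-ac} for the constant-free case, and the exclusion of the mixed case by non-constancy fill in exactly that sketch.
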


\begin{proof}[Proof idea]
	The result essentially follows from the observation that every negation-free term is $\ACU'$-equivalent to $\fff$, $\ttt$ or a unique constant-free linear term. 
\end{proof}

Let us conclude this section by stating the following classical result, characterising the read-once functions over $\wedge$ and $\vee$,
due to Gurvich in \cite{gurvich1977repetition}. This has appeared in various presentations and, in particular, the proof appearing in \cite{crama2011boolean} uses `cooccurrence' graphs that correspond to our relation webs.

\begin{theorem}[Gurvich]\label{thm:gurevich}
	A monotone Boolean function $f$ is read-once if and only if 
	$$\forall S\in\MIN(f) .\; \forall T\in\MAX(f).\; |S\cap T| =1\quadfs$$
\end{theorem}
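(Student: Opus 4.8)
The plan is to prove the two directions separately, using relation webs as the bridge between the function-theoretic side (minterms and maxterms) and the syntactic side (linear terms). First I would isolate the easy half of the numerical condition, which holds for \emph{every} monotone function: if $S\in\MIN(f)$ and $T\in\MAX(f)$ were disjoint, then $S\subseteq X\setminus T$, and since $T$ being a maxterm means $f(\bar T)=0$ (that is, $f$ evaluates to $0$ when exactly the variables outside $T$ are set to $1$), monotonicity would give $1=f(S)\le f(\bar T)=0$, a contradiction. Hence $|S\cap T|\ge 1$ unconditionally, and the real content of the theorem is the equivalence of read-onceness with the upper bound $|S\cap T|\le 1$.

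For the forward direction, suppose $f$ is read-once. After discarding the trivial constant cases, I may assume $f$ is computed by a negation-free constant-free linear term $t$ (a monotone read-once function depends positively on each of its relevant variables, and constants are eliminated by Corollary~\ref{cor:acu'}). By Theorem~\ref{thm:rel-minmax-webs} the minterms of $f$ are exactly the maximal $\wedge$-cliques of $\web t$ and the maxterms exactly the maximal $\vee$-cliques. If some $S\in\MIN(f)$ and $T\in\MAX(f)$ shared two distinct variables $x,y$, then the edge $\{x,y\}$ of $\web t$ would have to be labelled $\wedge$ (as $x,y$ lie in the $\wedge$-clique $S$) and simultaneously $\vee$ (as they lie in the $\vee$-clique $T$), which is impossible since each edge of a web carries a single label. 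Thus $|S\cap T|\le 1$, and combined with the previous paragraph, $|S\cap T|=1$.

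For the converse I would reconstruct a web directly from $f$. Define a complete labelled graph $G$ on the relevant variables of $f$ by labelling the edge $\{x,y\}$ with $\wedge$ whenever some minterm contains both $x$ and $y$, and with $\vee$ whenever some maxterm does. The hypothesis makes this labelling consistent: a pair cannot lie in both a common minterm $S$ and a common maxterm $T$, for that would force $|S\cap T|\ge 2$. The substance is then three lemmas, each drawing on $|S\cap T|=1$: (i) \emph{totality}, that every pair lies in a common minterm \emph{or} a common maxterm, so $G$ really is complete; (ii) \emph{$P_4$-freeness}, that $G$ contains no induced copy of the forbidden configuration \eqref{eq:P4}, so by the characterisation of webs as $P_4$-free graphs it equals $\web t$ for some negation-free constant-free linear term $t$; and (iii) \emph{clique correspondence}, that the maximal $\wedge$-cliques of $G$ are precisely $\MIN(f)$ and the maximal $\vee$-cliques precisely $\MAX(f)$. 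Granting (iii), Theorem~\ref{thm:rel-minmax-webs} gives $\MIN(t)=\MIN(f)$, so $t$ and $f$ have the same minterms and therefore compute the same function by Observation~\ref{obs:minmax}; hence $f$ is read-once.

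The main obstacle is this converse direction, and within it the passage from the purely \emph{pairwise} cooccurrence data recorded in $G$ to the \emph{global} clique structure, i.e.\ lemmas (i)--(iii). A convenient way to package them is induction on the number of variables: since every minterm is a $\wedge$-clique and every maxterm a $\vee$-clique of $G$, the connected components of the minterm-cooccurrence relation, if there is more than one, exhibit $f$ as a disjunction over disjoint variable sets, and dually the maxterm relation yields a conjunction; each part inherits the hypothesis (its minterms and maxterms are restrictions of those of $f$, and the singleton-intersection property is preserved), so one recurses, assembling $t$ from the decomposition. The crux is precisely to rule out, using $|S\cap T|=1$, that both cooccurrence relations are connected on two or more variables --- equivalently, that no induced $P_4$ occurs --- since a read-once formula always has a top connective that disconnects one of the two relations. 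This is where Gurvich's numerical condition does its real work: a careful analysis of a minterm/maxterm pair realising the $P_4$ configuration, together with the singleton intersections it would be forced to have with the remaining terms, produces the contradiction. The remaining bookkeeping --- the constant and single-variable base cases, inheritance of the hypothesis by restrictions, and linearity of the assembled term (automatic, as the parts have disjoint variable sets) --- is routine.
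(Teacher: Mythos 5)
Your forward direction is correct and is essentially the paper's own proof: the lower bound $|S\cap T|\ge 1$ by monotonicity (otherwise $f$ would be forced to $0$ and $1$ simultaneously), and the upper bound by passing through Theorem~\ref{thm:rel-minmax-webs} and observing that a $\wedge$-clique and a $\vee$-clique of $\web t$ can share at most one vertex because each edge carries a single label. Note, though, that this is \emph{all} the paper proves: it explicitly states that only the left-to-right direction is needed and defers the full theorem to Gurvich's paper and to \cite{crama2011boolean}.

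For the converse, what you have is a correct \emph{plan} (indeed, the standard one, via the cooccurrence graph), but the three lemmas you label (i)--(iii) are exactly where the content of Gurvich's theorem lives, and none of them is actually proved. Concretely: (i) totality of the labelling is not obvious for an arbitrary monotone $f$ --- you must show that any two relevant variables co-occur in some minterm or some maxterm; (ii) the exclusion of the configuration~\eqref{eq:P4} is asserted to follow from ``a careful analysis'' of the singleton intersections, but that analysis is the heart of the argument and is not supplied; and (iii) the clique correspondence is genuinely nontrivial in the direction you need --- every minterm is a $\wedge$-clique of $G$ by construction, but to conclude $\MIN(t)=\MIN(f)$ you must show that every \emph{maximal} $\wedge$-clique of $G$ is itself a minterm, i.e.\ rule out a maximal clique whose pairs are each covered by different minterms without the clique being one; this step again requires the hypothesis $|S\cap T|=1$ and an argument you do not give. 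The inductive packaging via connected components of the two cooccurrence relations is sound in outline, but as written the converse direction remains a sketch rather than a proof. Since the paper itself does not prove this direction, the honest options are either to cite \cite{gurvich1977repetition} or \cite{crama2011boolean} for it, as the paper does, or to carry out (i)--(iii) in full.
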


In this paper we will actually only need one direction of this
theorem: that for monotone read-once functions, minterms and
maxterms have singleton intersection. Using the different settings we have introduced, we arrive at a remarkably simple proof of this direction:
\begin{proof}
[Proof of left-right direction of Theorem~\ref{thm:gurevich}]
A minterm and maxterm of $f$ must intersect since, otherwise, we could simultaneously force $f$ to evaluate to $0$ and $1$. On the other hand, by Theorem~\ref{thm:rel-minmax-webs}, a minterm is a $\wedge$-maxclique of $\web{t}$ and a maxterm is a $\vee$-maxclique of $\web{t}$, and cliques with different labels can intersect at most once.
\end{proof}

This simple proof  exemplifies the usefulness of considering both
the graph theoretic viewpoint and the Boolean function viewpoint.
Such interplays will prove to be very useful in the remainder of this work.

\begin{example}
	\label{ex:bool-fn-summary}
	Consider the function computed by the term $t=\vls[((v \vlor w).x).(y.z)]$ from Example~\ref{ex:linterm-web}. Appealing to Proposition~\ref{prop:altminterm}, $t$ has minterms $\{v,x\}$, $\{w,x\}$ and $\{y,z\}$, and maxterms $\{v,w,y\}$, $\{ v,w,z\}$, $\{x,y\}$ and $\{x,z\}$.

	Now consider the Boolean `threshold' functions $\mathit{TH}^X_k : \{ 0,1\}^X \to \{0,1\}$, which return $1$ on just those $Y \subseteq X$ such that $|Y| \geq k$. 
	By defnition, this has minterms $S\subseteq X$ such that $|S| = k$ and maxterms $T\subseteq X$ such that $|T| = n-k + 1$.
	This means that for each minterm there is a maxterm that contains it or vice versa, depending on whether $k\geq \frac{|X|}{2}$.
	Therefore by Gurvich's result, Theorem~\ref{thm:gurevich}, $\mathit{TH}^X_k$ is read-once just when $k=1$, where it is computed by the disjunction of $X$, or when $k = |X| -1$, where it is computed by the conjunction of $X$.
	
	Now let $X = \{ v,w,x,y,z \}$. Appealing to Proposition~\ref{prop:sound-minmax}, we have that $t \leq \mathit{TH}^X_2$, since all minterms of $t$ have size $2$ and so are also minterms of $\mathit{TH}^X_2$. Dually, the maxterms of $\mathit{TH}^X_2$ are just the quartets of $X$, each of which contains some maxterm of $t$: if it does not contain $v$ or $w$ then it must contain both $\{x,y\}$ and $\{x,z\}$, if it does not contain $x$ then it must contain both $\{v,w,y\} $ and $\{v,w,z\}$, and if it does not contain $y$ (or $z$) then it must contain both $\{v,w,z\}$ and $\{x,z\}$ (respectively $\{v,w,y\}$ and $\{x,y\}$).
\end{example}

\section{Linear inferences, triviality and a polynomial bound on length}\label{sect:inference}

In the previous section we considered the semantics of linear terms
via Boolean functions. In this section we study sound rewriting steps
between linear terms, with respect to this semantics, and prove our main result, Theorem~\ref{thm:main-result}, about the length of
such rewriting paths, corresponding to point \eqref{item:nontriv-der-short} in the Introduction, Section~\ref{sect:intro}.

\begin{definition}[Soundness]
  We say that a rewrite rule $s \to t$ is \emph{sound} if $s$ and $t$
  compute Boolean functions $f$ and $g$, respectively, such that
  $f\leq g$.  We say that a TRS is sound if all its rules are sound.
  A \emph{linear inference} is a sound linear rewrite rule.  
\end{definition}

\begin{notation}
  To switch conveniently between the settings of terms and Boolean
  functions, we freely interchange notations, e.g.\ writing $s\leq t$
  to denote that $s\to t $ is sound, and saying $f\to g$ is sound when
  $f\leq g$.
\end{notation}

We immediately have the following, which can also be found
in~\cite{Das:13:Rewritin:uq}.

\begin{proposition}
  Any sound negation-free linear TRS, modulo $\ACU'$, is terminating in exponential-time.\footnote{Strictly speaking, we mean that any derivation can be `expressed' as one of exponential length: if either associativity or commutativity is in the TRS then we could pathologically create arbitrarily long derivations.}
\end{proposition}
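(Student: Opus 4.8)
The plan is to exhibit a measure on terms that is bounded by an exponential and strictly increases along every reduction step that actually changes the computed function, and then to argue that the remaining, function-preserving, steps are exactly the ones absorbed by the equivalence $\ACU'$. Throughout, I would fix an arbitrary $R$-derivation modulo $\ACU'$,
\[
\pi:\quad t_0\sim t_0'\annotatedarrow{}{R}t_1\sim t_1'\annotatedarrow{}{R}\cdots\annotatedarrow{}{R}t_l\sim t_l',
\]
and let $V$ be the finite set of all variables occurring in any term of $\pi$. Each $t_i$ computes a monotone Boolean function, which I regard as a function on $\{0,1\}^V$ by treating the unused variables of $V$ as dummies (the function genuinely does not depend on them). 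As measure I would take $m(t):=|\{Y\subseteq V : f_t(Y)=1\}|$, the number of $1$-inputs over this common domain.

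First I would establish two monotonicity facts. Since $\ACU'$ preserves the computed function (Corollary~\ref{cor:acu'}), $m(t_i)=m(t_i')$, so $m$ is constant across the $\sim$-links. Since each step $t_i'\annotatedarrow{}{R}t_{i+1}$ is sound, $f_{t_i'}\leq f_{t_{i+1}}$ pointwise on $\{0,1\}^V$ (soundness gives the inequality on $\Var(t_i')\cup\Var(t_{i+1})$, and it persists under extension to the larger domain $V$), whence $m(t_i')\leq m(t_{i+1})$. Thus $m$ is nondecreasing along $\pi$.

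Next I would isolate the strict steps. If an $R$-step has $m(t_i')=m(t_{i+1})$, then $f_{t_i'}\leq f_{t_{i+1}}$ together with equally many $1$-inputs forces $f_{t_i'}^{-1}(1)=f_{t_{i+1}}^{-1}(1)$, so the two terms compute the same function and are therefore $\ACU'$-equivalent by Corollary~\ref{cor:acu'}. Such a step can be folded into the surrounding equivalence: from $t_i'\sim t_{i+1}$ and $t_{i+1}\sim t_{i+1}'$ we get $t_i'\sim t_{i+1}'$ by transitivity, so the $R$-step may be deleted, yielding a derivation modulo $\ACU'$ with the same endpoints and one fewer $R$-step. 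Iterating, any derivation is equal, modulo $\ACU'$, to one in which $m$ strictly increases at every $R$-step. As $m$ takes integer values in $[0,2^{|V|}]$, such a derivation has length at most $2^{|V|}$, which is exponential. (When no rule introduces variables on its right-hand side---as for switch, medial and $\gwu$---one has $V\subseteq\Var(t_0)$, so the bound is exponential in the size of the starting term.) This is precisely the content of the footnote: modulo $\ACU'$ the loops generated by associativity and commutativity vanish, and every genuine step pays for itself by strictly enlarging the set of satisfied inputs.

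The hard part is the bookkeeping of the domain on which $m$ lives. A reduction step of a general linear rule need not preserve the variable set: it may delete a subterm, or, through the substitution instantiating a right-hand-side variable, introduce new variables. Hence the inequality $f_{t_i'}\leq f_{t_{i+1}}$ must be read after extending both functions to the common domain $\{0,1\}^V$, and one must verify that this extension respects the pointwise order (it does, since extending two comparable functions to a larger domain merely adds the same dummy coordinates to both) and is compatible with soundness as stated. Once the domain is pinned down the remaining steps are routine; the only load-bearing ingredients are the soundness hypothesis, for the monotonicity of $m$, and Corollary~\ref{cor:acu'}, for the elimination of the function-preserving steps.
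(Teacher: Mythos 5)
Your proof is correct and follows essentially the same route as the paper's: the paper also argues that, modulo $\ACU'$ (via Corollary~\ref{cor:acu'}), each genuine step must strictly increase the computed Boolean function under $\leq$, and that a strictly increasing chain of such functions has length at most $2^n$. Your measure $m(t)=|f_t^{-1}(1)|$ and the domain bookkeeping are just a careful spelling-out of what the paper leaves implicit.
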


\begin{proof}
  The result follows by Boolean semantics and
  Corollary~\ref{cor:acu'}: each consequent term must compute a
  distinct Boolean function that is strictly bigger, under $\leq$, and
  the graph of $\leq$ has length $2^n$, where $n$ is the number of
  variables in the input term.
\end{proof}

The purpose of this section is now to put a polynomial bound on the
length of certain linear derivations. For this, the fundamental concept we use
is that of ``triviality'', first introduced in
\cite{Das:13:Rewritin:uq} as ``semantic triviality''.

\begin{definition}[Triviality]
	Let $f$ and $g$ be Boolean functions on a set of variables $X$, and let $x\in X$. We say $\fnimplies{f}{g}$  is \emph{trivial} at $x$ if for all $Y\subseteq X$, we have $f(Y\cup\set{x}) \leq g(Y\setminus\set{x})$. We say simply that $f\to g$ is \emph{trivial} if it is trivial at one of its variables.
\end{definition}

The idea behind triviality of a variable in an inference is that the validity of the inference is ``independent'' of the behaviour of that variable. 

\begin{example}
	\label{ex:triviality}
	Recalling the Boolean threshold functions $\mathit{TH}^X_k$ from Example~\ref{ex:bool-fn-summary}, notice that $\mathit{TH}^X_{k+1} \to \mathit{TH}^X_{k}$ is trivial at any (but at most one) variable of $X$.
	More concretely, the linear inference
	$
	x \vlan y 
	 \to 
	x \vlor y
	$
	is trivial at $x$ or $y$, whereas the linear inference,
	\begin{equation}
	\label{eqn:smix}
	x \vlan (y_1 \vlor \vldots \vlor y_n )
	\quad \to \quad
	x \vlor (y_1 \vlan \vldots \vlor y_n)
	\end{equation}
	is trivial at all $y_i$ simultaneously.
\end{example}

As observed in \cite{Das:13:Rewritin:uq}, the inference \eqref{eqn:smix} above can be used to create exponential-length (constant-free) linear derivations. 
The idea is to construct a derivation from the conjunction of a variable set $X$ to its disjunction, by induction on $|X|$, as follows,
\[
\begin{array}{rc}
& \vls(x . \underline{(y_1 \vlan \vldots  \vlan y_n)}) \\
\to &   \\
\vdots & \\
\to & \underline{x \vlan (y_1 \vlor \vldots \vlor y_n)} \\
\annotatedarrow{}{} & x \vlor \underline{(y_1 \vlan \vldots \vlan y_n)}\\
\to & \\
\vdots & \\
\to & \vls[x . (y_1 \vlor \vldots \vlor y_n)]
\end{array}
\]
where redexes are underlined and the two intermediate derivations are obtained from the inductive hypothesis.
We will show in the remainder of this section that such exponential length rewrite paths \emph{only} occur when deriving a triviality.

\begin{remark}[Hereditariness of triviality]\label{rmk:hered}
	Notice that the triviality property is somehow hereditary: if a sound sequence $f_0\to f_1\to\ldots \to f_l$ of Boolean functions is trivial at some point $f_i\to f_{i+1}$ for $0\le i<l$ then $f_1\to f_l$ is trivial.  However the converse does not hold: if the first and last function of a sound sequence constitutes a trivial pair it may be that there is no local triviality in the sequence. For example the endpoints of the derivation,
	\begin{equation}
	\label{eqn:triv-not-local}
	(w \vlan x) \vlor (y \vlan z) \quad \to \quad (w \vlor y) \vlan (x \vlor z)  \quad \to     \quad w \vlor x \vlor (y \vlan z)
	\end{equation}
	form a pair that is trivial at $w$ (or trivial at $x$), but no local step witnesses this. In these cases we call the sequence \emph{globally} trivial. This phenomenon is what we will need to address later in Lemma~\ref{intersectionlemma}, on which our main result crucially relies.
\end{remark}

In a similar way to how we expressed soundness via minterms or maxterms in Proposition~\ref{prop:sound-minmax}, we can also define triviality via minterms or maxterms.

\begin{proposition}\label{prop:trivial-minmax}
	The following are equivalent:
	\begin{enumerate}
		\item\label{item:trivial-inference} $\fnimplies{f}{g}$ is trivial at $x$.
		\item\label{item:trivial-min} $\forall S \in \MIN(f) .\; \exists S' \in \MIN(g) .\; S' \subseteq S\setminus\{x\}$.
		\item\label{item:trivial-max} $\forall T \in \MAX(g) .\; \exists T' \in \MAX(f) .\; T' \subseteq T\setminus\{x\}$.
	\end{enumerate}
\end{proposition}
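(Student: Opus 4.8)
The plan is to prove $\ref{item:trivial-inference}\Leftrightarrow\ref{item:trivial-min}$ and $\ref{item:trivial-inference}\Leftrightarrow\ref{item:trivial-max}$ directly, closely following the argument for Proposition~\ref{prop:sound-minmax}, the only genuinely new ingredient being that the witnessing minterm (resp.\ maxterm) must \emph{avoid} the variable $x$. Conceptually it helps to note that triviality at $x$ is nothing other than soundness of the restricted inference $\fnimplies{f|_{x=1}}{g|_{x=0}}$: writing $f'(Y)=f(Y\cup\set x)$ and $g'(Y)=g(Y\setminus\set x)$, the defining condition $f(Y\cup\set x)\leq g(Y\setminus\set x)$ for all $Y$ is exactly $f'\leq g'$, and both $f'$ and $g'$ are monotone. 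One could therefore simply apply Proposition~\ref{prop:sound-minmax} to $f'$ and $g'$ after identifying their minterms and maxterms with those of $f$ and $g$ not containing $x$; but the direct argument below is cleaner, as it sidesteps the minimality bookkeeping involved in passing from $\MIN(f)$ to $\MIN(f')$.

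For $\ref{item:trivial-inference}\Rightarrow\ref{item:trivial-min}$, I would take $S\in\MIN(f)$. Since $f(S)=1$ and $S\subseteq S\cup\set x$, monotonicity gives $f(S\cup\set x)=1$; instantiating triviality at $Y=S$ then yields $g(S\setminus\set x)=1$, and Observation~\ref{obs:minmax} produces some $S'\in\MIN(g)$ with $S'\subseteq S\setminus\set x$, as required. For the converse $\ref{item:trivial-min}\Rightarrow\ref{item:trivial-inference}$, fix any $Y\subseteq X$ with $f(Y\cup\set x)=1$ (the case $f(Y\cup\set x)=0$ being vacuous). By Observation~\ref{obs:minmax} there is $S\in\MIN(f)$ with $S\subseteq Y\cup\set x$, and by \ref{item:trivial-min} there is $S'\in\MIN(g)$ with $S'\subseteq S\setminus\set x\subseteq(Y\cup\set x)\setminus\set x=Y\setminus\set x$; hence $g(Y\setminus\set x)=1$ by monotonicity, establishing triviality at $x$.

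The equivalence $\ref{item:trivial-inference}\Leftrightarrow\ref{item:trivial-max}$ is proved dually, using the maxterm half of Observation~\ref{obs:minmax}: from $g(\cmplmt T)=0$ for $T\in\MAX(g)$ one deduces, via the contrapositive $g(Y\setminus\set x)=0\Rightarrow f(Y\cup\set x)=0$ of triviality, that $f(\cmplmt{(T\setminus\set x)})=0$, whence a maxterm $T'\subseteq T\setminus\set x$ of $f$ exists. The only real care needed in the whole argument lies precisely here, in the dual direction: one must choose the auxiliary set $Y$ so that $Y\setminus\set x$ and $Y\cup\set x$ realise the assignments $\cmplmt T$ and $\cmplmt{(T\setminus\set x)}$ (splitting according to whether $x\in T$, and absorbing the discrepancy through monotonicity), and then invoke the maxterm part of Observation~\ref{obs:minmax} in the correct orientation. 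Everything else reduces to routine set manipulation together with monotonicity, exactly as in Proposition~\ref{prop:sound-minmax}.
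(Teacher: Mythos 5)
Your proof is correct and follows essentially the same route as the paper's: both directions of $\ref{item:trivial-inference}\Leftrightarrow\ref{item:trivial-min}$ are argued exactly as in the paper via Observation~\ref{obs:minmax} and monotonicity (you merely phrase the forward direction directly where the paper uses a contradiction), and the maxterm equivalence is handled dually, as the paper also does. The added remark that triviality at $x$ is soundness of $\fnimplies{f|_{x=1}}{g|_{x=0}}$ is a nice observation but does not change the argument.
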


\begin{proof}
	We first show that $\ref{item:trivial-inference}\implies \ref{item:trivial-min}$. Assume $\fnimplies{f}{g}$ is trivial at $x$, and let $S\in\MIN(f)$. We have $f(S)=1$, and hence also $f(S\cup\set{x})=1$. By way of contradiction assume there is no $S' \in \MIN(g)$ with $S' \subseteq S\setminus\set{x}$. Therefore $g(S\setminus\set{x})=0$, by Observation~\ref{obs:minmax}, contradicting triviality at~$x$. Next, we show $\ref{item:trivial-min}\implies \ref{item:trivial-inference}$. For this, let $Y$ be such that $f(Y\cup\set{x})=1$. Then there is a minterm $S\in\MIN(f)$ with $S\subseteq Y\cup\set{x}$, by Observation~\ref{obs:minmax}. By \ref{item:trivial-min}, there is a minterm $S' \in \MIN(g)$ with $S' \subseteq S\setminus\set{x}$. Hence $S'\subseteq Y\setminus\set{x}$ so $g(Y\setminus\set{x})=1$, by monotonicity, and thus $\fnimplies{f}{g}$ is trivial at~$x$. We prove $\ref{item:trivial-inference}\implies \ref{item:trivial-max}\implies \ref{item:trivial-inference}$ analogously.
\end{proof}

Let us now fix a sequence $f = f_0 < f_1 < \vldots < f_l = g$ of
strictly increasing read-once Boolean functions on a variable set $X$.
Intuitively, we would like to build a decreasing chain of minterms, whence we could extract an appropriate bound for $l$.
The problem, however, is that new minterms can appear too, for example in the case of medial \eqref{eqn:medial-rule},
so this process does not clearly terminate in reasonable time.

To address this issue, we will show that there must exist particular chains of minterms, for each variable, which will strictly decrease sufficiently often. Unless $f\to g$ is trivial, 
for each variable $x\in X$ we must be able to associate a minterm $S^x$ of $f$ such that, 
for any $S\subseteq S^x$ that is a minterm of some $f_i$, it must be that $S\ni x$.
This is visualized in Figure~\ref{fig:intersectionlemma} together with the dual property for maxterms.

\begin{figure}[!t]
		\[
		\begin{tikzpicture}
		\node at (-3.5,2.75) {$f_0$};
		\node at (-3.5,0) {$x$};
		\node at (-2.25,2.75) {$<$};
		\node at (-2.25,-0.75) {$\subseteq$};
		\node at (-2.25,0.75) {$\supseteq$};
		\node at (-1,2.75) {$f_1$};
		\node at (-1,0) {$x$};
		\node at (0.25,2.75) {$<$};
		\node at (0.25,-0.75) {$\subseteq$};
		\node at (1.5,2.75) {$\vldots$};
		\node at (0.25,0.75) {$\supseteq$};
		\node at (1.5,0) {$\vldots$};
		\node at (2.75,2.75) {$<$};
		\node at (2.75,-0.75) {$\subseteq$};
		\node at (2.75,0.75) {$\supseteq$};
		\node at (4,2.75) {$f_l$};
		\node at (4,0) {$x$};
		\draw[r,rounded corners] (-3,0.5) .. controls (-3,0) and (-3.5,-1) .. (-4,0.5) .. controls (-4,1) and (-4.5,1.5) .. (-3.5,2) .. controls (-2.5,2) and (-3,1) .. (-3,0.5);
		\draw[rounded corners,green,dashed,thick] (-3.8,-0.2) .. controls (-3.8,0.3) and (-3.3,0.3) .. (-2.8,-0.2) .. controls (-2.8,-0.3) and (-2.8,-0.3) .. (-3.3,-1.2) .. controls (-3.8,-1.2) and (-3.8,-0.7) .. (-3.8,-0.2);
		\draw[r,rounded corners] (-0.5,0.5) .. controls (-0.5,0) and (-1,-1) .. (-1.5,0.5) .. controls (-1.5,1) and (-1.5,1.5) .. (-1,2) .. controls (0,2) and (-0.5,1) .. (-0.5,0.5);
		\draw[r,rounded corners] (4.5,0.5) .. controls (4.5,0) and (4,-1) .. (3.5,0.5) .. controls (3.5,0.5) and (3.5,1) .. (4,1) .. controls (4,0.5) and (4.5,1) .. (4.5,0.5);
		\draw[rounded corners,green,dashed,thick] (-1.3,-0.2) .. controls (-1.3,0.3) and (-0.8,0.3) .. (-0.3,-0.2) .. controls (-0.3,-0.3) and (-0.3,-0.8) .. (-0.8,-1.7) .. controls (-1.3,-1.2) and (-1.3,-0.7) .. (-1.3,-0.2);
		\draw[green,thick,dashed,rounded corners] (3.5,0) .. controls (3.5,0.5) and (4,0.5) .. (4.5,0) .. controls (4.5,-0.5) and (5,-2) .. (4,-2.5) .. controls (3,-1.5) and (3.5,-0.5) .. (3.5,0);
		\end{tikzpicture}
		\]
		\caption{The critical minterms and maxterms of a sound sequence, cf.\ Lemma~\ref{intersectionlemma}.}
                \label{fig:intersectionlemma}
\end{figure}
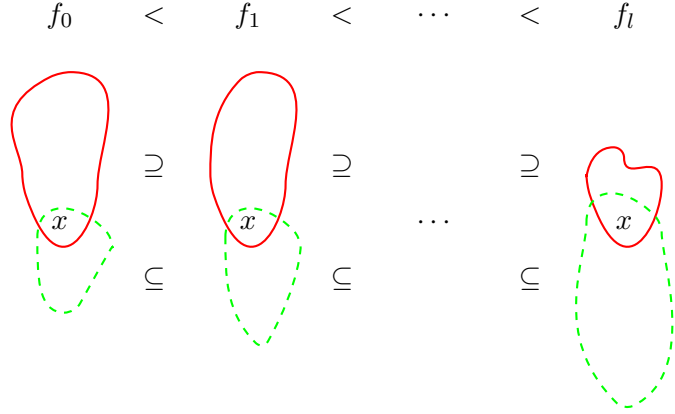

\begin{lemma}[Subset and intersection lemma]\label{intersectionlemma}
	Suppose $f\to g$ is not trivial.
	For every variable $x\in X$, there is a minterm $S^x$ of $f$ and a maxterm $T^x$ of $g$ such that:
	\begin{enumerate}
		\item\label{subsetmintermsitem} $\forall S_i \in \MIN(f_i). (S_i \subseteq S^x \implies x\in S_i)$.
		\item\label{subsetmaxtermsitem} $\forall T_i \in \MAX(g_i). (T_i \subseteq T^x \implies x\in T_i)$.
		\item\label{intersectionitem} $\forall S_i \in \MIN(f_i) . \forall T_i \in \MAX(g_i) . (S_i \subseteq S^x, T_i \subseteq T^x \implies S_i \cap T_i = \{x\})$.
	\end{enumerate}
\end{lemma}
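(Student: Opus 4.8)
The plan is to reduce everything to non-triviality \emph{at each individual variable} and then push this through the telescoping machinery of Proposition~\ref{prop:sound-minmax} together with Gurvich's theorem. Since $\fnimplies{f}{g}$ is not trivial, it is not trivial at any $x\in X$, so for a fixed $x$ the equivalent minterm condition~\ref{item:trivial-min} of Proposition~\ref{prop:trivial-minmax} fails: there is a minterm $S^x\in\MIN(f)$ such that no minterm of $g$ is contained in $S^x\setminus\{x\}$. Dually, failure of condition~\ref{item:trivial-max} supplies a maxterm $T^x\in\MAX(g)$ such that no maxterm of $f$ is contained in $T^x\setminus\{x\}$. I would take these as the witnesses $S^x$ and $T^x$ and spend the rest of the argument checking the three closure properties; throughout I read $g_i$ as the same function as $f_i$, the $i$-th member of the sequence $f=f_0<\vldots<f_l=g$.

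For item~\ref{subsetmintermsitem}, fix $i$ and $S_i\in\MIN(f_i)$ with $S_i\subseteq S^x$. Because the sequence is increasing we have $f_i\le f_l=g$, so Proposition~\ref{prop:sound-minmax} gives a minterm $S'\in\MIN(g)$ with $S'\subseteq S_i\subseteq S^x$. The defining property of $S^x$ forbids $S'\subseteq S^x\setminus\{x\}$, so $x\in S'\subseteq S_i$. Item~\ref{subsetmaxtermsitem} is the exact mirror image: for $T_i\in\MAX(g_i)$ with $T_i\subseteq T^x$, use $f=f_0\le f_i$ and the maxterm direction of Proposition~\ref{prop:sound-minmax} to obtain $T'\in\MAX(f)$ with $T'\subseteq T_i\subseteq T^x$, whence $x\in T'\subseteq T_i$.

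Item~\ref{intersectionitem} should then be almost immediate. Given $S_i\in\MIN(f_i)$ and $T_i\in\MAX(g_i)$ with $S_i\subseteq S^x$ and $T_i\subseteq T^x$, the previous two items yield $x\in S_i\cap T_i$; since $f_i$ is read-once, the left-to-right direction of Gurvich's Theorem~\ref{thm:gurevich} forces $|S_i\cap T_i|=1$, so $S_i\cap T_i=\{x\}$. The genuinely delicate part is the first paragraph: correctly reading the witnessing minterm and maxterm out of non-triviality and, above all, keeping the two telescoping directions straight, since $f_i\le g$ drives the minterm side while $f\le f_i$ drives the maxterm side. Once that bookkeeping is fixed, the main payoff is that the two independently chosen sets $S^x$ and $T^x$ never have to be compared directly: everything in item~\ref{intersectionitem} is routed through the single read-once function $f_i$, where Gurvich's theorem does the remaining work.
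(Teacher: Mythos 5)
Your proof is correct and follows essentially the same route as the paper's: both extract $S^x$ and $T^x$ from the failure of the minterm/maxterm characterisations of triviality (Proposition~\ref{prop:trivial-minmax}), telescope through the sequence via Proposition~\ref{prop:sound-minmax} (using $f_i\le g$ for minterms and $f\le f_i$ for maxterms), and finish item~\ref{intersectionitem} with the left-to-right direction of Theorem~\ref{thm:gurevich}. The only difference is presentational: the paper argues by contradiction ("if no minterm of $f$ works, then $f\to g$ is trivial"), whereas you take the contrapositive and exhibit the witnesses directly, which is the same argument.
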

\begin{proof}
	Suppose that, for some variable $x$ no minterm of $f$ has property \ref{subsetmintermsitem}. In other words, for every minterm $S^x$ of $f$ containing $x$ there is some minterm $S_i$ of some $f_i $ that is a subset of $S^x$ yet does not contain $x$. Since $\fnimplies{f_i}{f_l}$ is sound for every $i$ we have that, by Proposition~\ref{prop:sound-minmax}, for every minterm $S^x$ of $f$ containing $x$ there is some minterm $S_l$ of $f_l =g $ that is a subset of $S^x$ not containing $x$. I.e.\ $\fnimplies{f}{g}$ is trivial, by Proposition~\ref{prop:trivial-minmax}, which is a contradiction.
	Property \ref{subsetmaxtermsitem} is proved analogously.
	Finally, Property \ref{intersectionitem} is proved by appealing to read-onceness: any such $S_i$ and $T_i$ must contain $x$ by properties \ref{subsetmintermsitem} and \ref{subsetmaxtermsitem}, yet their intersection must be a singleton by Theorem~\ref{thm:gurevich} since all $f_i$ are read-once.
\end{proof}

Notice that, since some such $S_i$ and $T_i$ must exist for all $i$, by soundness, we can build a chain
of such minterms and maxterms preserving the intersection point. For a given derivation, let us call a choice of such minterms and maxterms \emph{critical} (see Figure~\ref{fig:intersectionlemma}).

We now state the main result of this section, also the main technical contribution of this work, for which Lemma~\ref{intersectionlemma} will play a crucial role and from which we can obtain our further results. While we state this result for terms, in order to access simultaneously the notions of relation webs and Boolean semantics, this could equally be stated in the setting of read-once Boolean functions due to Gurvich's result, Theorem~\ref{thm:gurevich}.

\begin{theorem}\label{thm:main-result}
  Let $s=t_0 < t_1 < \vldots < t_l =t$ be a (strictly increasing under $\leq$) sequence of negation-free constant-free linear terms on variable set $X$ of size $n$,
  such that $l>0$ and such that $s\to t$ is not trivial. We have that $l = O(n^4)$.
\end{theorem}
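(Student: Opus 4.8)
The plan is to exploit Lemma~\ref{intersectionlemma} to attach to every variable a pair of ``critical'' sets whose behaviour along the derivation is monotone, and then to charge the length $l$ against the limited amount by which these sets can change. Throughout I would work with the read-once functions $f_i$ computed by the $t_i$, freely using that $t_i$ and $t_{i+1}$ are never $\AC$-equivalent (Theorem~\ref{thm:ro-mod-ac}), so that the web $\web{t_i}$ genuinely changes --- i.e.\ at least one edge flips its label between $\wedge$ and $\vee$ --- at every single step.

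First I would fix the critical data. Since $s\to t$ is not trivial, Lemma~\ref{intersectionlemma} supplies, for each $x\in X$, a minterm $S^x\in\MIN(f_0)$ and a maxterm $T^x\in\MAX(f_l)$ with the stated subset and intersection properties. Using soundness of each prefix $f_i\le f_{i+1}$ together with Proposition~\ref{prop:sound-minmax}, I would telescope these through the whole sequence: starting from $S^x_0:=S^x$, choose at each step a minterm $S^x_{i+1}\in\MIN(f_{i+1})$ with $S^x_{i+1}\subseteq S^x_i$, and dually build maxterms $T^x_l:=T^x\supseteq\cdots\supseteq T^x_0$ with $T^x_i\in\MAX(f_i)$. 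By property~\ref{subsetmintermsitem} every $S^x_i$ contains $x$ and by property~\ref{subsetmaxtermsitem} every $T^x_i$ contains $x$, while property~\ref{intersectionitem} (read-onceness, via Theorem~\ref{thm:gurevich}) forces $S^x_i\cap T^x_i=\{x\}$. This realises exactly the nested picture of Figure~\ref{fig:intersectionlemma}: for each $x$ the critical minterms shrink, the critical maxterms grow, and they always meet precisely in $x$.

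Second, I would count. Each minterm chain $S^x_0\supseteq\cdots\supseteq S^x_l$ is a descending chain of subsets of $X$, so it changes strictly at most $|S^x|-1\le n-1$ times, and dually for the maxterm chains; hence the measures $\sum_x|S^x_i|$ (nonincreasing) and $\sum_x|T^x_i|$ (nondecreasing) each range over only $O(n^2)$ values. The remaining task is to transfer this into a bound on $l$. For this I would track, for each ordered pair $(x,y)$, the position of $y$ relative to $x$'s critical sets: whether $y\in S^x_i$, $y\in T^x_i$, or neither. Because $S^x_i$ only shrinks, $T^x_i$ only grows, and the two meet only in $x$, this three-valued ``profile'' is itself monotone in $i$ and so changes only boundedly often for each pair; moreover $y\in S^x_i$ forces the edge $\{x,y\}$ to be $\wedge$-labelled in $\web{t_i}$ and $y\in T^x_i$ forces it $\vee$-labelled (Theorem~\ref{thm:rel-minmax-webs}). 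Summing the admissible profile changes over all $O(n^2)$ pairs and charging each web edge-flip to such a change is what should deliver $l=O(n^4)$.

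The main obstacle is precisely this last transfer. A derivation step always flips some web edge, yet it need not alter any one critical minterm or maxterm: the flipped edge $\{x,y\}$ may involve a variable $y$ lying in the ``neutral'' region (in neither $S^x_i$ nor $T^x_i$), so that its colour change is not witnessed by $x$'s critical data --- and symmetrically from $y$'s side. The crux is therefore to show that this neutral region cannot absorb unboundedly many flips of a fixed edge, so that the number of flips per edge is still $O(n^2)$; this is where read-onceness and the $P_4$-freeness of webs must be used carefully. Granting that per-edge bound, since every step flips at least one edge one obtains $l\le\sum_{\text{edges}}(\text{flips per edge})=O(n^2)\cdot O(n^2)=O(n^4)$. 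I expect establishing this per-edge flip bound, rather than constructing the critical chains, to be the hard part of the argument.
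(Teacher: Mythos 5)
Your setup is exactly the paper's: the critical chains $S^x_0\supseteq\cdots\supseteq S^x_l$ and $T^x_0\subseteq\cdots\subseteq T^x_l$ with $S^x_i\cap T^x_i=\{x\}$ from Lemma~\ref{intersectionlemma}, and the measures $\sum_x|S^x_i|$ and $\sum_x|T^x_i|$ bounded by $n^2$. But the transfer from ``the web changes at every step'' to ``a bounded measure moves'' is precisely where you stop, and the per-edge-flip bound you propose to close it is not how the argument goes; as you suspect, the ``neutral region'' objection is real and your three-valued profile does not see every flip. The missing idea is an \emph{asymmetric} treatment of the two kinds of flips. First (Proposition~\ref{prop:minterm-decrease}): if an edge $\{x,y\}$ flips from $\wedge$ to $\vee$, take \emph{any} maximal $\wedge$-clique of $\web{t_i}$ containing both $x$ and $y$; by soundness (Proposition~\ref{prop:sound-minmax} and Theorem~\ref{thm:rel-minmax-webs}) it contains a maximal $\wedge$-clique of $\web{t_{i+1}}$, which cannot contain both $x$ and $y$, so \emph{some} minterm --- not necessarily a critical one --- strictly shrinks. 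Second (Proposition~\ref{prop:mindec-maxinc}): any strict shrinkage $S_j\subsetneq S_i$ of minterms forces some \emph{critical} maxterm to strictly grow: pick $x\in S_i\setminus S_j$; then $T^x_i\cap S_i=\{x\}$, and the unique element $y$ of $T^x_j\cap S_j$ lies in $S_i$, differs from $x$, and cannot lie in $T^x_i$ (else $|T^x_i\cap S_i|\ge 2$, contradicting Theorem~\ref{thm:gurevich}), so $T^x_i\subsetneq T^x_j$. This is the step your profile argument lacks: the witness is routed through an arbitrary shrinking minterm and the singleton-intersection property, not through membership of $y$ in $x$'s critical sets.

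With these two facts the paper avoids any per-edge counting by using the \emph{lexicographic} measure $\critmaxmeas\times\redmeas$ (Lemma~\ref{lem:increasingmeasure}): $\critmaxmeas$ is non-decreasing throughout; if a step performs any $\wedge\to\vee$ flip then $\critmaxmeas$ strictly increases by the two propositions; if it performs none, then (since the web must change) only $\vee\to\wedge$ flips occur and the secondary component $\redmeas$ strictly increases. Both components are $O(n^2)$, giving $l=O(n^4)$ directly. So the flips you worried about --- those invisible to all critical data --- are simply the $\vee\to\wedge$ flips, and they are paid for by counting $\wedge$-edges, not by bounding how often a fixed edge can flip. Your proposal is therefore a correct skeleton with a genuine gap at its load-bearing joint; Propositions~\ref{prop:minterm-decrease} and~\ref{prop:mindec-maxinc} together with the lexicographic ordering are what fill it.
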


The remainder of this section is devoted to the proof of
Theorem~\ref{thm:main-result}. For this let us fix $\pi$ to denote the sequence $s=t_0 < t_1 < \vldots < t_l =t$. 
Recall that, since $t_i < t_{i+1}$, $t_i $ and $t_{i+1}$ have distinct minterms and maxterms, by Observation~\ref{obs:minmax}, and so must have distinct relation webs by Theorem~\ref{thm:rel-minmax-webs}.

We now
fix, for each $x\in X$ and $0\le i\le l$, some choice of $S^x_i$ and $T^x_i$ as
critical minterms and maxterms, respectively, of $t_i$, under
Lemma~\ref{intersectionlemma}.  I.e.\ we have that, for each $x \in
X$:
\begin{enumerate}
	\item\label{item:intfixedcrit} $S^x_i \cap T^x_i = \{x\}$ for each $i \leq l$.
	\item\label{item:minfixedcrit} $S^x_0 \supseteq S^x_1 \supseteq \vldots \supseteq S^x_l$.
	\item\label{item:maxfixedcrit} $T^x_0 \subseteq T^x_1 \subseteq \vldots \subseteq T^x_l$.
\end{enumerate}
We denote the size of the critical minterms and maxterms of $t_i$ by $|S^x_i|$ and $|T^x_i|$, respectively. Now we define:  
\begin{equation}
  \critminmeas(t_i) := \sum\limits_{x\in X} |S^x_i|
  \qquand
  \critmaxmeas(t_i) := \sum\limits_{x\in X} |T^x_i|
\end{equation}

\begin{observation}\label{obs:critmin}
  Note that we always have $|S^x_i|,|T^x_i| \leq n$ because a minterm
  or maxterm is a subset of $X$, and therefore we have
  $\critminmeas(t_i) , \critmaxmeas(t_i) \leq n^2$ for all $t_i$ in
  $\pi$.
\end{observation}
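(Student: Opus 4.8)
The plan is simply to unfold the relevant definitions and count. By Definition~\ref{def:minterm}, every minterm and every maxterm of $t_i$ is by construction a subset of $\Var(t_i) = X$; in particular the critical minterms $S^x_i$ and maxterms $T^x_i$ fixed above under Lemma~\ref{intersectionlemma} satisfy $S^x_i, T^x_i \subseteq X$. Since $|X| = n$, this immediately yields $|S^x_i|, |T^x_i| \leq n$ for every $x \in X$ and every $i \leq l$, which is the first of the two claimed bounds.

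For the second bound, I would recall that $\critminmeas(t_i) = \sum_{x\in X} |S^x_i|$ is a sum of exactly $|X| = n$ nonnegative terms, each of which is at most $n$ by the first part. Hence $\critminmeas(t_i) \leq n \cdot n = n^2$. The identical argument applied to $\critmaxmeas(t_i) = \sum_{x\in X} |T^x_i|$ gives $\critmaxmeas(t_i) \leq n^2$. Since these estimates are uniform in $i$, they hold for every $t_i$ occurring in $\pi$.

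There is no genuine obstacle here: the observation follows entirely from the fact that minterms and maxterms live inside the variable set, combined with the count of summands. The only point that requires a moment's care is ensuring that the $S^x_i$ and $T^x_i$ are bona fide minterms and maxterms of $t_i$, so that the subset property indeed applies; this is guaranteed by their definition as critical minterms and maxterms via Lemma~\ref{intersectionlemma}.
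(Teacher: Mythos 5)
Your proof is correct and follows exactly the paper's own reasoning: each $S^x_i$ and $T^x_i$ is a subset of $X$, hence of size at most $n$, and summing $n$ such terms gives the bound $n^2$ uniformly in $i$. Nothing further is needed.
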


The following two propositions now form the core of the argument. The
first says that whenever a $\wedge$-edge changes to a $\vee$-edge,
some minterm strictly decreases in size, and the second one says that
if a minterm strictly decreases in size then some critical maxterm must
strictly increase in size. 
Thus the proof of Theorem~\ref{thm:main-result} that follows again relies crucially on the interplay between the Boolean function setting and the graph-theoretic setting.

\begin{figure}
	\[
	\begin{tikzpicture}
	\node at (-8.5,1) {$\web{t_i}:$};
	\node at (-2.5,1) {$\to$};
	\node at (-1,1) {$\web{t_{i+1}}:$};
	\node (v2) at (-4,0) {$x$};
	\node (v4) at (4,0) {$x$};
	\node (v1) at (-6,2) {$y$};
		\draw[r]  (v1) edge (v2);
		\node (v3) at (2,2) {$y$};
		\draw[green,thick,dashed]  (v3) edge (v4);
		\node at (-7,1.5) {$S$};
		\draw[r,rounded corners] (-6,2.5) .. controls (-7,2.5) and (-8,1.5) .. (-7,1) .. controls (-6.5,1) and (-5.5,1) .. (-5,0.5) .. controls (-4.5,-0.5) and (-3.5,-0.5) .. (-3.5,0.5) .. controls (-3.5,1.5) and (-4.5,2.5) .. (-6,2.5);
		\draw[r,rounded corners] (2,2.5) .. controls (1,2.5) and (0,1.5) .. (1,1) .. controls (2.5,1) and (3.5,0.5) .. (4,1) .. controls (4,1.5) and (3.5,2.5) .. (2,2.5);
		\node at (1,1.5) {$S'$};
	\end{tikzpicture}
	\]
	\caption{In the proof of
          Proposition~\ref{prop:minterm-decrease}, $S'$ cannot contain
          both $x$ and $y$, so we can assume without loss of
          generality that it does not contain $x$ (although it need not necessarily contain $y$ either).}
        \label{fig:minterm-decrease}
\end{figure}
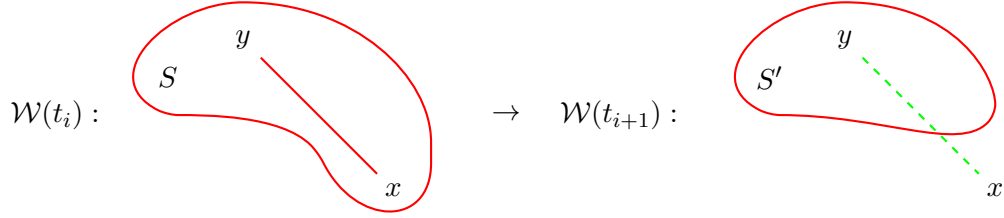

\begin{proposition}\label{prop:minterm-decrease}
	Suppose, for some $i<l$, we have that $\redge[]xy$ in $\web{t_i}$ and $\gedge[]xy$ in $\web{t_{i+1}}$. 
	Then there is a minterm $S$ of $t_i$, and a minterm $S'$ of $t_{i+1}$ such that $S' \subsetneq S$.
\end{proposition}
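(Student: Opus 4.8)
The plan is to read everything through the correspondence of Theorem~\ref{thm:rel-minmax-webs}, under which minterms of a negation-free constant-free linear term are exactly the maximal $\wedge$-cliques of its web. First I would produce the larger minterm $S$. Since the edge $\{x,y\}$ is labelled $\wedge$ in $\web{t_i}$, the pair $\{x,y\}$ is itself a $\wedge$-clique and so is contained in some maximal $\wedge$-clique of $\web{t_i}$; by Theorem~\ref{thm:rel-minmax-webs} this clique is a minterm $S$ of $t_i$ with $\{x,y\}\subseteq S$.

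Next I would descend to $t_{i+1}$ using soundness. Because the sequence is increasing we have $t_i\leq t_{i+1}$, so Proposition~\ref{prop:sound-minmax} (the implication from \ref{item:f-lessthan-g} to \ref{item:min-telescope}, applied to $S$) yields a minterm $S'$ of $t_{i+1}$ with $S'\subseteq S$. All that then remains is to upgrade this containment to a strict one.

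For the strictness I would use the change of label. The edge $\{x,y\}$ is labelled $\vee$ in $\web{t_{i+1}}$, so $x$ and $y$ lie in no common $\wedge$-clique of $\web{t_{i+1}}$. As $S'$ is a minterm of $t_{i+1}$, it is a maximal $\wedge$-clique (Theorem~\ref{thm:rel-minmax-webs} again), hence cannot contain both $x$ and $y$; this is precisely the configuration drawn in Figure~\ref{fig:minterm-decrease}. Since $S$ contains both $x$ and $y$ whereas $S'$ omits at least one of them, we obtain $S'\neq S$ and therefore $S'\subsetneq S$, as desired.

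The argument is short once the web characterisation is in place, so I do not anticipate a genuine obstacle. The only point deserving care is that a plain appeal to soundness delivers merely $S'\subseteq S$; what makes the inclusion strict is the single edge $\{x,y\}$ flipping from $\wedge$ to $\vee$ between $\web{t_i}$ and $\web{t_{i+1}}$, which forces $S'$ to shed one of the two endpoints that $S$ retains.
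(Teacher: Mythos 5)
Your argument is correct and is essentially identical to the paper's own proof: both take a maximal $\wedge$-clique of $\web{t_i}$ containing $x$ and $y$, use Proposition~\ref{prop:sound-minmax} together with Theorem~\ref{thm:rel-minmax-webs} to find a minterm of $t_{i+1}$ contained in it, and conclude strictness from the fact that the $\vee$-labelled edge $\{x,y\}$ in $\web{t_{i+1}}$ prevents that subclique from containing both endpoints. Your write-up merely spells out the web/minterm translation more explicitly than the paper does.
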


\begin{proof}
	Take any maximal $\wedge$-clique in $\web{t_i}$ containing $x$ and $y$, 
	of which there must be at least one. 
	This must have a $\wedge$-subclique which is maximal in $\web{t_{i+1}}$, by
	Proposition~\ref{prop:sound-minmax} and Theorem~\ref{thm:rel-minmax-webs}. 
	This subclique cannot contain both $x$ and $y$, so the inclusion must be strict (see Figure~\ref{fig:minterm-decrease}).
\end{proof}

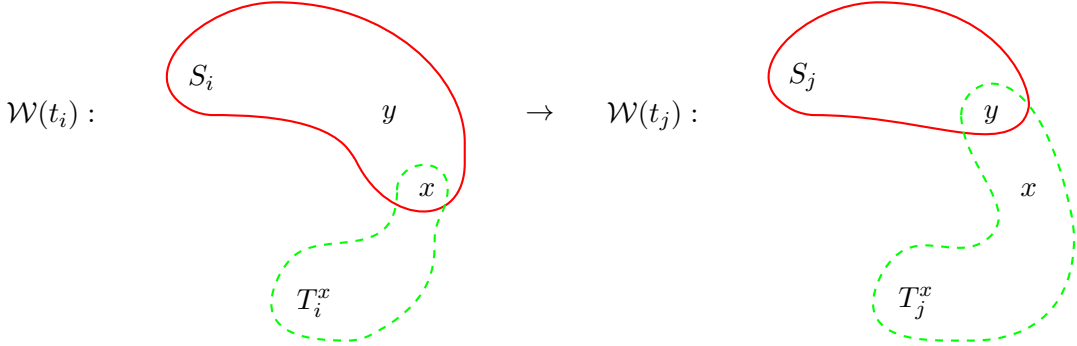
\begin{figure}
	\[
	\begin{tikzpicture}
	\node at (-9,1) {$\web{t_i}:$};
	\node at (-2.5,1) {$\to$};
	\node at (-1,1) {$\web{t_j}:$};
	\node (v2) at (-4,0) {$x$};
	\node (v4) at (4,0) {$x$};
		\node at (-7,1.5) {$S_i$};
		\draw[r,rounded corners] (-6,2.5) .. controls (-7,2.5) and (-8,1.5) .. (-7,1) .. controls (-6.5,1) and (-5.5,1) .. (-5,0.5) .. controls (-4.5,-0.5) and (-3.5,-0.5) .. (-3.5,0.5) .. controls (-3.5,1.5) and (-4.5,2.5) .. (-6,2.5);
\node at (1,1.5) {$S_j$};
		\draw[r,rounded corners] (2,2.5) .. controls (1,2.5) and (0,1.5) .. (1,1) .. controls (2.5,1) and (3.5,0.5) .. (4,1) .. controls (4,1.5) and (3.5,2.5) .. (2,2.5);
		\node at (2.5,-1.5) {$T^x_{j}$};
		\draw[green,dashed,thick,rounded corners] (3.1,1) .. controls (3.1,1.5) and (4.1,2) .. (4.6,-0.5) .. controls (4.6,-1.5) and (4.1,-2) .. (3.1,-2) .. controls (2.6,-2) and (1.6,-2) .. (2.1,-1) .. controls (2.6,-0.5) and (3.1,-1) .. (3.6,-0.5) .. controls (3.6,0) and (3.1,0.5) .. (3.1,1);
\node at (3.5,1) {$y$};
		\node at (-5.5,-1.5) {$T^x_i$};
		\draw[green,dashed,thick,rounded corners] (-4.4,0) .. controls (-4.4,0.5) and (-3.4,0.5) .. (-3.9,-0.5) .. controls (-3.9,-1) and (-3.9,-1.5) .. (-4.9,-2) .. controls (-5.4,-2) and (-6.4,-2) .. (-5.9,-1) .. controls (-5.4,-0.5) and (-4.4,-1) .. (-4.4,0);
\node at (-4.5,1) {$y$};
	\end{tikzpicture}
	\]
	\caption{If some minterm becomes smaller then some critical maxterm must become bigger.}
        \label{fig:mindec-maxinc}
\end{figure}

\begin{proposition}\label{prop:mindec-maxinc}
	Suppose for $j>i$ there is some minterm $S_i$ of $t_i$ and some minterm $S_j$ of $t_j$ such that $S_j\subsetneq S_i$. 
	Then, for some variable $x\in X$, we have that $T^x_i \subsetneq T^x_j$.
\end{proposition}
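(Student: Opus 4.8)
The plan is to pin down a single variable that witnesses the conclusion, namely some $x\in S_i\setminus S_j$, and then to convert the strict shrinking $S_j\subsetneq S_i$ of minterms into strict growth of the associated critical maxterm $T^x$. The engine driving this conversion is the singleton-intersection property of read-once functions, Gurvich's Theorem~\ref{thm:gurevich}, applied at both stages $i$ and $j$; recall that each $t_k$ is linear, hence read-once, so every minterm of $t_k$ meets every maxterm of $t_k$ in exactly one point.

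First I would fix $x\in S_i\setminus S_j$, which is nonempty precisely because $S_j\subsetneq S_i$. By the defining property of the critical data we have $S^x_i\cap T^x_i=\{x\}$ and $S^x_j\cap T^x_j=\{x\}$, so in particular $x\in T^x_i$ and $x\in T^x_j$. Applying Gurvich's theorem to $t_i$, the set $S_i\cap T^x_i$ is a singleton; since $x$ lies in both $S_i$ and $T^x_i$, this singleton must be $\{x\}$, i.e.\ $S_i\cap T^x_i=\{x\}$ exactly.

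Next I would apply Gurvich's theorem to $t_j$: the set $S_j\cap T^x_j$ is again a singleton, say $\{z\}$. Here $z\neq x$, because $z\in S_j$ whereas $x\notin S_j$; and $z\in S_i$, because $S_j\subseteq S_i$. Now $z\in S_i$, $z\neq x$ and $S_i\cap T^x_i=\{x\}$ together force $z\notin T^x_i$, while $z\in T^x_j$ by construction. Thus $z$ separates the two maxterms, and combined with the monotonicity $T^x_i\subseteq T^x_j$ of the critical maxterms this yields $T^x_i\subsetneq T^x_j$, as required.

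The argument is really a short diagram chase, faithfully depicted in Figure~\ref{fig:mindec-maxinc} (the auxiliary point $z$ appearing there as $y$). I expect the only genuine subtlety to be the bookkeeping around the two singleton intersections: one must guarantee that the intersection point $z$ of $S_j$ with $T^x_j$ is forced both to differ from $x$ and to lie inside the larger minterm $S_i$, since it is exactly this that lets $z$ escape $T^x_i$ and so certify strictness. This is precisely where the hypothesis $x\in S_i\setminus S_j$ and the inclusion $S_j\subseteq S_i$ are consumed; everything else is a direct appeal to the critical-data properties and to read-onceness.
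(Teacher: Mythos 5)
Your proof is correct and follows essentially the same route as the paper's: fix $x\in S_i\setminus S_j$, use the critical-data property to get $x\in T^x_i$ and hence $S_i\cap T^x_i=\{x\}$ via Gurvich's theorem, then locate the singleton intersection point of $S_j$ with $T^x_j$ (your $z$, the paper's $y$) and show it lies in $T^x_j\setminus T^x_i$, which together with $T^x_i\subseteq T^x_j$ gives strictness. The only cosmetic difference is that you spell out the intermediate step $x\in T^x_i$ that the paper compresses into the phrase ``by construction''.
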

\begin{proof}
	We let $x$ be some variable in $x\in S_i \setminus S_j $,
        which must be nonempty by hypothesis.  By
        Theorem~\ref{thm:gurevich} we have that $|T^x_i \cap S_i |=1$,
        so it must be that $T^x_i \cap S_i = \{x\}$ by construction.
	On the other hand we also have that $|T^x_j \cap S_j| = 1$,
        and so there is some (unique) $y \in T^x_j \cap S_j$.  Now,
        since $S_i \supsetneq S_j$ we must have $y\in S_i$.  However
        we cannot have $y \in T^x_i$ since that would imply that
        $\{x,y\} \subseteq T^x_i \cap S_i$, contradicting the above.
        Since we have that $T^x_i \subseteq T^x_j$ we can now
        conclude that $T^x_i \subsetneq T^x_j$ as required, because $y
        \in T^x_j$ and $y \notin T^x_i$ (see Figure~\ref{fig:mindec-maxinc}).
\end{proof}

Notice that both of the two propositions above rely crucially on the notion of linearity. Proposition~\ref{prop:minterm-decrease} assumes the existence of relation webs for a term, a property peculiar to linear terms, whereas Proposition~\ref{prop:mindec-maxinc} does not remain true for terms that do not compute read-once Boolean functions: there is no requirement for minterms and maxterms of arbitrary Boolean functions to intersect at most once, cf.~Example~\ref{ex:bool-fn-summary}.

\begin{lemma}[Increasing measure] 
  \label{lem:increasingmeasure}
  The lexicographical product $\critmaxmeas \times \redmeas$ is
  strictly increasing at each step of $\pi$.
\end{lemma}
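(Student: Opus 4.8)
The plan is to fix an arbitrary step $i<l$ and show that the pair $(\critmaxmeas(t_i),\redmeas(t_i))$ lies strictly below $(\critmaxmeas(t_{i+1}),\redmeas(t_{i+1}))$ in the lexicographic order with $\critmaxmeas$ as primary coordinate. The first thing I would record is a monotonicity observation that handles the primary coordinate globally: by our fixed choice of critical maxterms we have $T^x_i\subseteq T^x_{i+1}$ for every $x\in X$, hence $|T^x_i|\le|T^x_{i+1}|$, and summing over $x$ gives $\critmaxmeas(t_i)\le\critmaxmeas(t_{i+1})$. Thus $\critmaxmeas$ never decreases, and it suffices to prove that at every step either $\critmaxmeas$ strictly increases or else $\redmeas$ strictly increases. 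A second fact I would invoke at the outset is that $\web{t_i}\neq\web{t_{i+1}}$: since $t_i<t_{i+1}$ they have distinct minterms by Observation~\ref{obs:minmax}, hence distinct webs by Theorem~\ref{thm:rel-minmax-webs}. In particular at least one edge of the web changes its label between the two steps.

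Next I would split into two cases according to the direction in which some edge changes. In the first case, some edge is $\wedge$-labelled in $\web{t_i}$ and $\vee$-labelled in $\web{t_{i+1}}$. Then Proposition~\ref{prop:minterm-decrease} yields a minterm $S$ of $t_i$ and a minterm $S'$ of $t_{i+1}$ with $S'\subsetneq S$, and Proposition~\ref{prop:mindec-maxinc} applied with $j=i+1$ then yields a variable $x$ with $T^x_i\subsetneq T^x_{i+1}$, so that $|T^x_i|<|T^x_{i+1}|$. Combined with the non-decrease of every other summand, this gives $\critmaxmeas(t_i)<\critmaxmeas(t_{i+1})$, so the primary coordinate strictly increases and the lexicographic product strictly increases irrespective of what $\redmeas$ does.

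In the second case, no edge changes from $\wedge$-labelled to $\vee$-labelled. Since the webs still differ, every differing edge must instead change from $\vee$-labelled to $\wedge$-labelled, and there is at least one such edge; counting labels then gives $\redmeas(t_{i+1})=\redmeas(t_i)+\#(\vee\text{-to-}\wedge\text{ flips})>\redmeas(t_i)$, i.e.\ $\redmeas$ strictly increases. Since $\critmaxmeas$ is non-decreasing, the lexicographic product again strictly increases: either the primary coordinate grows, or it is unchanged and the secondary coordinate grows. The two cases are exhaustive, which completes the argument.

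I expect the only delicate point to be keeping the directions straight: one must be careful that $\redmeas$ counts $\wedge$-edges (so that a $\vee$-to-$\wedge$ flip is what increases it) and that $\critmaxmeas$ is the primary coordinate (so that the $\redmeas$-increase of the second case is needed only when $\critmaxmeas$ happens to be constant). The substantive content is already isolated in Propositions~\ref{prop:minterm-decrease} and~\ref{prop:mindec-maxinc}; the present lemma is essentially a clean combinatorial assembly of these two facts together with the monotonicity of the critical maxterms, so I do not anticipate any obstacle beyond this bookkeeping.
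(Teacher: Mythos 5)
Your proof is correct and follows essentially the same route as the paper: non-decrease of $\critmaxmeas$ from the nested critical maxterms, plus Propositions~\ref{prop:minterm-decrease} and~\ref{prop:mindec-maxinc} to force a strict increase of $\critmaxmeas$ when a $\wedge$-edge becomes a $\vee$-edge. If anything, your case split (on whether some edge flips $\wedge$-to-$\vee$, rather than on whether $\redmeas$ decreases) is slightly more careful than the paper's, since it explicitly covers the corner case where $\redmeas$ is unchanged but the web still differs.
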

\begin{proof}
  Notice that, by
  Lemma~\ref{intersectionlemma}.\ref{subsetmaxtermsitem}, we have that
  $T^x_0 \subseteq T^x_1 \subseteq \vldots \subseteq T^x_l$, which means that
  $\critmaxmeas$ is non-decreasing.  So let us consider the case that
  $\redmeas$ decreases at some step and show that $\critmaxmeas$ must
  strictly increase. If $\redmeas(t_i) > \redmeas(t_{i+1}) $ then we
  must have that some edge is labelled $\wedge$ in $\web{t_i}$ and
  labelled $\vee$ in $\web{t_{i+1}}$. Hence, by
  Proposition~\ref{prop:minterm-decrease} some minterm has strictly
  decreased in size and so by Proposition~\ref{prop:mindec-maxinc} some
  critical maxterm must have strictly increased in size.
\end{proof}

From here we can finally prove our main result.

\begin{proof}
  [Proof of Theorem~\ref{thm:main-result}] By
  Observation~\ref{obs:critmin} and Proposition~\ref{prop:numredgreen}
  we have that $\critmaxmeas = O(n^2) = \redmeas$ and so, since $s\to
  t$ is nontrivial, it must be that the length $l$ of $\pi$ is
  $O(n^4)$, as required.
\end{proof}

Notice that, while the various settings exhibit a symmetry between
$\wedge$ and $\vee$, it is the property of soundness that induces the
necessary asymmetry required to achieve this result.

\begin{remark}
Let us take a moment to reflect on what might happen if the inference that is derived were trivial. Consider the following:
	\[
	\vls(w.x.(y\vlor z))
	\quad \to \quad
	\vls(w.((x\vlan y) \vlor z))
	\quad \to \quad
	\vls(w.(x\vlor y\vlor z))
	\]
	This derivation is trivial at $x$, in fact witnessed by the second inference.\footnote{Although notice we could equally consider a (globally) trivial derivation with no local triviality if, say, $z$ were replaced by a conjunction $z_1 \vlan z_2$, appealing to Remark~\ref{rmk:hered} and using \eqref{eqn:triv-not-local} to derive the second step.}	
Notice that there is no `critical' minterm for $y$ in this derivation: the only minterm containing $y$ on the left is $\{w,x,y\}$, but this contains a minterm $\{w,x\}$ on the right. This is similarly true for $z$, although here the situation is rather worse: while the minterm $\{w,x,z\}$ on the left indeed contains $\{w,x\}$ on the right, there is no intermediate minterm. This prevents us from proving termination via a step-by-step analysis of the subsets of $\{w,x,z\}$ that occur as minterms in the derivation, which we are able to do in the presence of critical minterms and maxterms.
\end{remark}

\section{No complete linear term rewriting system for propositional logic}\label{sect:no-compl}

Recall that a linear inference is a sound linear rewrite rule. We
denote the set of all linear inferences by $\Lin$. 
We will now show that there is no sound linear term rewriting system
that is complete for $\Lin$ unless $\coNP = \NP$. 
The work in this section corresponds to point \eqref{item:reduction-to-nontriv-frag} in the Introduction, culminating in Theorem~\ref{thm:exists-polyder}, and ultimately point \eqref{item:np-alg-for-conp} by way of Corollary~\ref{cor:nolinsys}.

We start with the
following observation made in~\cite{Stra:08:Extensio:kk}:

\begin{proposition}\label{prop:coNP}
	$\Lin$ is $\coNP$-complete.
\end{proposition}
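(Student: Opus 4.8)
The plan is to establish both halves of completeness, the membership being routine and the hardness carrying the content. For membership I would argue that the complement of $\Lin$ lies in $\NP$: given a candidate rule $s\to t$ one first checks in polynomial time that $s$ and $t$ are linear terms, and then a \emph{witness of unsoundness} is simply an assignment $Y\subseteq \Var(s)\cup\Var(t)$ at which the function computed by $s$ returns $1$ while that computed by $t$ returns $0$. Such a $Y$ is verified in polynomial time by evaluating both terms, so non-membership in $\Lin$ is an $\NP$ property, whence $\Lin\in\coNP$.

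For hardness I would reduce from $\TAUT$, the canonical $\coNP$-complete problem, realising the promise that ``every tautology can be written as a linear rule.'' Given $\phi$, first put it in negation normal form and, padding with tautologous conjuncts $x_i\vlor\dual{x_i}$ if needed, assume each variable $x_i$ has at least one positive and one negative occurrence. Then \emph{linearise} $\phi$ by replacing its $j$-th positive occurrence of $x_i$ with a fresh variable $p_{i,j}$ and its $l$-th negative occurrence with $\dual{q_{i,l}}$ for a fresh $q_{i,l}$, obtaining a linear term $\tilde\phi$; set the right-hand side $t:=\tilde\phi$. The crux is the left-hand side, a ``choice gadget'' forcing the fresh copies of each variable to cohere on any counterexample:
\[
s\;:=\;\bigwedge_i\Big(\big(\textstyle\bigwedge_l \dual{q_{i,l}}\big)\vlor\big(\textstyle\bigwedge_j p_{i,j}\big)\Big)\quadfs
\]
Both $s$ and $t$ are linear, since each fresh variable occurs once in $s$ and once in $t$, which is permitted because linearity is demanded only per side; and the pair is clearly computable in polynomial time.

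It then remains to show $s\to t$ is sound iff $\phi$ is valid, and the organising observation is that $\tilde\phi$, being in NNF, is monotone increasing in every $p_{i,j}$ and decreasing in every $q_{i,l}$. If $\phi$ is falsified by a consistent assignment $b$, then setting every $p_{i,j}$ and $q_{i,l}$ to $b_i$ makes $\tilde\phi$ compute $\phi(b)=0$ while every conjunct of $s$ evaluates to $1$, yielding a counterexample. Conversely, if $\phi$ is valid I would take any $\alpha$ with $s(\alpha)=1$ and show $\tilde\phi(\alpha)=1$: the gadget guarantees, for each $i$, that either all $q_{i,l}$ are $0$ or all $p_{i,j}$ are $1$, i.e.\ $\max_l\alpha(q_{i,l})\le\min_j\alpha(p_{i,j})$, so one may pick a single $b_i$ in this interval and conclude $\tilde\phi(\alpha)\ge\tilde\phi(\beta)=\phi(b)=1$ by monotonicity, where $\beta$ is the consistent assignment induced by $b$. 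The hard part is precisely this consistency-forcing, which dictates the shape of the construction: a naive ``inconsistency detector'' for the copies of a variable is inherently non-linear, as it must read each copy in both polarities, so one cannot simply guard the implication with it. The choice gadget evades this by ruling out only those configurations that would obstruct a monotone rounding to a consistent assignment, while itself remaining linear.
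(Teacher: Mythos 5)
Your proof is correct, and while the membership half coincides with the paper's (unsoundness of a linear rule is witnessed by a single falsifying assignment, so $\Lin\in\coNP$), your hardness gadget is genuinely different from the one in the paper. The paper handles a variable $x$ with $n$ positive and $m$ negative occurrences by introducing a grid of $2mn$ fresh \emph{positive} variables $x'_{i,j},x''_{i,j}$, replacing the $i$\textsuperscript{th} occurrence of $x$ by $x'_{i,1}\vlor\cdots\vlor x'_{i,m}$ and the $j$\textsuperscript{th} occurrence of $\dual{x}$ by $x''_{1,j}\vlor\cdots\vlor x''_{n,j}$, and taking as LHS the conjunction of all two-clauses $x'_{i,j}\vlor x''_{i,j}$; the coherence mechanism is that if some renamed positive occurrence evaluates to $0$ then every renamed negative occurrence is forced to $1$, and the correctness argument proceeds by substituting $x''_{i,j}\mapsto\dual{x'_{i,j}}$, which makes the LHS identically true. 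Your construction instead uses only one fresh variable per occurrence ($n+m$ copies), keeps the negations $\dual{q_{i,l}}$ on both sides, enforces coherence with the per-variable choice clause $\bigl(\bigwedge_l\dual{q_{i,l}}\bigr)\vlor\bigl(\bigwedge_j p_{i,j}\bigr)$, and closes with a monotone rounding of an arbitrary satisfying assignment of $s$ to a consistent one --- an argument the paper does not need because its LHS is valid under the dualising substitution. Both reductions are polynomial and both are legitimate, since $\Lin$ admits negative variables and linearity is counted per occurrence of a (possibly negative) variable; what the paper's version additionally buys is that hardness already holds for \emph{negation-free} linear inferences, which sits more comfortably with the later passage to negation-free, constant-free, nontrivial inferences (cf.\ Proposition~\ref{prop:neg-elim}), though nothing in the statement of the proposition requires this.
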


This result is the reason, from the point of proof theory, why one
might restrict attention to only linear inferences at all: every
Boolean tautology can be written as a linear inference.  As we can see
from the proof that follows, the translation is not very complicated,
and it induces an at most quadratic blowup in size from an input
tautology to a linear inference.

We include the proof here for completeness,
and also since the statement here differs slightly from that in
\cite{Stra:08:Extensio:kk}.

\begin{proof}[Proof of Proposition~\ref{prop:coNP}]  
	That $\Lin$ is in $\coNP$ is due to the fact that checking soundness of a
	rewrite rule $s\to t$ can be reduced to checking validity of the formula $\bar
	s\cor t$. To prove $\coNP$-hardness, we reduce validity of
	general tautologies to soundness of linear
	rewrite rules.  Let~$t'$ be the term obtained from~$t$ (which is assumed to be in NNF) by doing the following for each positive variable $x$: let $n$ be the
	number of occurrences of $x$ in $t$, and let $m$ be
	the number of occurrences of $\bar x$ in $t$. If $n=0$ replace every occurence of $\bar x$ by $\fff$, and if $m=0$ replace every occurrence of $x$ by $\fff$. Otherwise,
	introduce $2mn$ fresh (positive) variables
	$x'_{i,j},x''_{i,j}$ for $1\le i\le n$ and $1\le j\le m$. Now, for $1\le i\le n$, replace the $i$\textsuperscript{th} occurrence of $x$ by
	$x'_{i,1}\cor\ldots\cor x'_{i,m}$ and, for $1\le j\le
	m$, replace the $j$\textsuperscript{th} occurrence of $\bar x$ by $x''_{1,j}\cor\ldots\cor
	x''_{n,j}$. 

	Now $t'$ is a linear term (without negation),
	and its size is quadratic in the size of $t$. Let $s'$ be the
	conjunction of all pairs $x'\cor x''$ of variables introduced in the
	construction of $t'$. Clearly $\Var(s')=\Var(t')$ and $s'$ is also a
	linear term of the same size as $t'$. Furthermore, $t$ is a
	tautology if and only if $s'\to t'$ is sound. To see this, let
	$s''$ and $t''$ be obtained from $s'$ and $t'$, respectively, by
	replacing each $x''$ by $\bar x'$. Then $s''$ always evaluates to
	$1$, and $t''$ is a tautology if and only if $t$ is a tautology.
\end{proof}

In the next step we extend the result of the previous section to all
linear inferences, i.e., we have to deal with constants, negation,
erasure, and trivialities. Some of the following results appeared already in
\cite{Das:13:Rewritin:uq}, so we present only brief arguments here.

\begin{definition}\label{def:sm}
	We define the following rules:
	\[
	\swi : x \vlan (y \vlor z) \to (x\vlan y) \vlor z
	\qquad \quad
	\med : \vls[(w.x).(y.z)] \to \vls((w \vlor y).(x \vlor z))
	\]
	We call the former \emph{switch} and the latter \emph{medial} \cite{BrunTiu:01:A-Local-:mzF}.
\end{definition}

In what follows we implicitly assume that rewriting is conducted modulo $\ACU$.

\begin{lemma}\label{lem:triv-aside}
	If $s$ and $t$ are negation-free linear terms on a variable set $X$ of size $n$ and $s\leq t$, then there are linear terms $s' , t', u$ such that:
	\begin{enumerate}
		\item\label{item:shuffle-trivialities} There are derivations $s \longannotatedarrow{*}{ \swi ,\med } s' \vlor u$ and $t' \vlor u \longannotatedarrow{*}{ \swi ,\med } t$ of length $O(n^2)$. 
		\item $\fnimplies{s'}{t'}$ is sound and nontrivial.
	\end{enumerate}
\end{lemma}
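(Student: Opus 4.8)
The plan is to induct on the number of variables at which $\fnimplies{s}{t}$ is trivial, peeling off one trivial variable at a time into $u$. If $s\to t$ is already nontrivial there is nothing to do: I would take $s' := s$, $t' := t$ and $u := \fff$, with both derivations empty. Otherwise $s\to t$ is trivial at some variable $x$, and the aim is to isolate $x$ as a top-level disjunct common to both sides, i.e.\ to produce switch/medial derivations $s \longannotatedarrow{*}{\swi,\med} \hat s \vlor x$ and $\hat t \vlor x \longannotatedarrow{*}{\swi,\med} t$, each of length $O(n)$, such that $\fnimplies{\hat s}{\hat t}$ is sound on $X\setminus\{x\}$ and is trivial at strictly fewer variables. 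Applying the induction hypothesis to $\hat s \to \hat t$ yields $\hat s', \hat t', \hat u$, and I would then set $s' := \hat s'$, $t' := \hat t'$ and $u := \hat u \vlor x$; concatenating the derivations (and absorbing the isolation steps into the surrounding contexts) gives the factorization, with total length $\sum_{k\le n} O(n) = O(n^2)$, which is part~\eqref{item:shuffle-trivialities}. Termination and nontriviality of the final core, part~(2), then come for free from the induction.

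To carry out the isolation step I would translate triviality into the combinatorics of relation webs. By Proposition~\ref{prop:trivial-minmax}, triviality at $x$ says exactly that every minterm of $s$ contains a minterm of $t$ avoiding $x$, and dually every maxterm of $t$ contains a maxterm of $s$ avoiding $x$; by Theorem~\ref{thm:rel-minmax-webs} these become statements about maximal $\wedge$- and $\vee$-cliques in $\web s$ and $\web t$. The plan is to use $P_4$-freeness, i.e.\ the cotree underlying a linear term, to locate $x$ and then to argue that $\swi$ (which reorganises the cotree by moving a subterm out of a conjunction into a top-level disjunct), together with $\med$, can reposition $x$ as a top-level disjunct simultaneously in a term $\AC$-equivalent to $s$ and in a term from which $t$ is recovered, leaving the residual inference intact on $X\setminus\{x\}$. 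Before attempting this I would also invoke Remark~\ref{rmk:hered} to localise any globally-but-not-locally trivial behaviour, using a switch/medial derivation in the style of~\eqref{eqn:triv-not-local}, so that the triviality being removed is genuinely witnessed by the term structure.

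The main obstacle is precisely this isolation step, and it is genuinely delicate. A trivial variable need not sit at top level and can be \emph{buried} inside a conjunction or disjunction on one side while being free on the other — exactly what happens to each $y_i$ in the inference~\eqref{eqn:smix} — so one cannot naively pull it to the surface on both sides by local steps, and the direction forced by soundness (switch and medial only increase the computed function) must be respected throughout. The real content of the lemma therefore lies in choosing \emph{which} trivial variable to extract and in combining $\swi$ and $\med$, working modulo $\ACU$ and exploiting the $\wedge$/$\vee$ asymmetry induced by soundness, so that the residual $\hat s \to \hat t$ is again linear, sound, and strictly less trivial; establishing this with an $O(n)$ bound per variable is where essentially all the work goes, and it is the step I would expect to require the most careful web/cotree analysis. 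Everything else — the recombination, the length count for part~(1), and the nontriviality of the core for part~(2) — is then routine bookkeeping over the induction.
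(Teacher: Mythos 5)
Your plan is essentially the paper's own argument: the proof given in the paper for this lemma simply cites Das (2013) and describes the construction as repeatedly ``moving aside'' trivial variables using $\swi$, $\med$ and $\ACU$ until no trivialities remain in $s'\to t'$, with the $O(n^2)$ bound read off by inspection of that construction. The isolation step you rightly flag as the delicate core is exactly the part the paper outsources to that reference, so your sketch matches the paper's proof in both strategy and level of detail.
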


\begin{proof}
	See \cite{Das:13:Rewritin:uq}. Briefly, the idea is that $u$ is obtained by repeatedly `moving aside' trivial variables, using $\swi ,\med $ and $\ACU$, until there are no trivialities remaining in~\hbox{$s'\to t'$}. The bound of $O(n^2)$ is not explicitly mentioned in~\cite{Das:13:Rewritin:uq}, but it is clear from direct inspection of that construction.
\end{proof}

\begin{remark}
Notice that, while the derivations from Lemma~\ref{lem:triv-aside}.\eqref{item:shuffle-trivialities} above are small in size, they are in general difficult to compute, due to the inherent complexity of detecting triviality. This problem is in fact already $\coNP$-complete, since validity of an arbitrary linear inference $s \to t$ can be reduced to detecting triviality at $x$ in $\vls(s.x)\to \vls[t.x]$, where $x$ is fresh. This is not an issue in what follows since we are only concerned with the existence of small derivations, and so the existence of an $\NP$-algorithm, for various inferences.
\end{remark}

A left- and right-linear rewrite rule may still erase or introduce variables, i.e.\ there may be variables on one side that do not occur on the other.\footnote{Usually, term rewrite rules are required to not introduce new variables from left to right, but it does no harm to make this generalisation here.}
However, notice that any such situation must constitute a triviality at such a variable, since the soundness of the step is not dependent on the value of that variable.

\begin{proposition}
\label{prop:nontrivial-implies-nonerasing-nonintroducing}
	Suppose $\rho: l\to r$ is linear, and there is some variable $x$ occurring in only one of\/ $l$ and $r$. Then $\rho$ is trivial at $x$.
\end{proposition}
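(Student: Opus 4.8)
The plan is to verify triviality directly from its definition at the level of Boolean functions, splitting into the two possible cases according to which side of $\rho$ contains $x$. Write $f$ and $g$ for the functions computed by $l$ and $r$, viewed as functions on the common variable set $X = \Var(l)\cup\Var(r)$, and use that $\rho$ is sound, so $f\leq g$. (The statement names only linearity, but soundness is the standing hypothesis of this section on linear inferences and is genuinely needed: the linear rule $x\to y$ has $x$ on the left only yet is not trivial at $x$.) The one fact doing all the work is that a function is independent of any variable absent from the term computing it, so adding or deleting $x$ from an argument leaves that function's value unchanged.

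First I would handle the \emph{erasing} case, where $x\in\Var(l)\setminus\Var(r)$. Here $g$ is independent of $x$, hence $g(Z\cup\set{x})=g(Z\setminus\set{x})$ for every $Z\subseteq X$. For an arbitrary $Y\subseteq X$ I then chain soundness with this independence to get $f(Y\cup\set{x})\leq g(Y\cup\set{x})=g(Y\setminus\set{x})$, which is exactly triviality at $x$.

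The \emph{introducing} case, $x\in\Var(r)\setminus\Var(l)$, is dual: now $f$ is independent of $x$, so $f(Y\cup\set{x})=f(Y\setminus\set{x})$, and soundness applied to the argument $Y\setminus\set{x}$ gives $f(Y\setminus\set{x})\leq g(Y\setminus\set{x})$. Combining these yields $f(Y\cup\set{x})\leq g(Y\setminus\set{x})$ for all $Y$, so again $\rho$ is trivial at $x$. Since exactly one of the two cases applies, this completes the argument.

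I do not anticipate a real obstacle: the result is immediate once soundness and one-sided variable-independence are combined. The only point needing attention is the asymmetry baked into the definition of triviality---the $Y\cup\set{x}$ on the $f$-side against the $Y\setminus\set{x}$ on the $g$-side---which is precisely what lets independence on just one side suffice. Should a purely combinatorial proof be preferred, the same two cases go through via Propositions~\ref{prop:sound-minmax} and~\ref{prop:trivial-minmax}, using that every minterm lying entirely on the $x$-free side must avoid $x$.
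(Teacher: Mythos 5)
Your proof is correct and matches the paper's (unstated, one-line) justification, which is exactly the remark preceding the proposition: soundness of the step cannot depend on the value of a variable absent from one side; you simply formalise this by combining $f\leq g$ with one-sided variable-independence in each of the two cases. Your observation that soundness must be read into the hypothesis (the paper only says ``linear'', but the statement is false for unsound rules like $x\to y$) is a fair and worthwhile clarification.
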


If a (positive) variable $x$ occurs negatively on both sides of a linear rule then $\bar x$ can be replaced soundly by $x$ on both sides. Otherwise, if $x$ occurs positively on one side and negatively on the other, it must be that we have a triviality at $x$. 

\begin{proposition}\label{prop:neg-elim}
	For each linear rule $\rho$ either there is a negation-free linear rule that is equivalent to $\rho$ (i.e.\ with the same reduction steps), or $\rho$ is trivial.
\end{proposition}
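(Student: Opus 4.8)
The plan is to eliminate negation by a single syntactic move and then isolate soundness as the only real issue. I would rename every negative literal $\bar x$ occurring in $\rho\colon l\to r$ to a fresh positive variable, obtaining a negation-free linear rule $\rho^+$. Since a substitution fills the symbols $x$ and $\bar x$ independently, $\rho^+$ has \emph{exactly} the same reduction steps as $\rho$; hence if $\rho^+$ is again sound it is a negation-free linear inference equivalent to $\rho$, which is the first alternative. The whole content of the statement is therefore: whenever this renaming destroys soundness, $\rho$ must be trivial. The point to keep in mind is that soundness of $\rho$ only constrains the computed functions on the ``diagonal'' where each $\bar x=\neg x$, whereas soundness of $\rho^+$ demands the inequality $f\le g$ on \emph{all} inputs.

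Next I would dispatch the cases where this is transparent. By linearity each literal occurs at most once per side, so the only data is which of the occurrences of $x,\bar x$ in $l$ and in $r$ are present. If some atom occurs on exactly one side of $\rho$ — in particular if $x$ sits positively in one term and $\bar x$ in the other — then Proposition~\ref{prop:nontrivial-implies-nonerasing-nonintroducing} already gives that $\rho$ is trivial; in that special case one even sees it directly, since $f$ is monotone and $g$ antitone in $x$, so $f(Y\cup\set{x})\le g(Y\cup\set{x})\le g(Y\setminus\set{x})$ and $\fnimplies fg$ is trivial at $x$. Otherwise every occurring atom occurs on both sides, and if for a positive variable $x$ only one of $x,\bar x$ is present, then its absent partner imposes no constraint, the diagonal reparametrises rather than restricts, and renaming the (single) negative literal to a fresh positive variable preserves soundness. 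Consequently soundness can be broken by the renaming only when, for some positive variable, both $x$ and $\bar x$ occur together on a common side.

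The main obstacle is exactly this degenerate configuration, where $f$ and $g$ are genuinely non-monotone in $x$ and the easy monotonicity argument is unavailable. Here I would branch on the least common connective of $x$ and $\bar x$: if it is $\vlan$, then on the diagonal $x\vlan\bar x=\fff$ annihilates every minterm that uses both literals, so the value of the inference ceases to depend on $x$ and one reads off a triviality; if it is $\vlor$, the dual argument with $x\vlor\bar x=\ttt$ and maxterms applies, while in the remaining situations (such as when $x,\bar x$ lie in separate branches) no forcing occurs and the renaming in fact stays sound. Turning this into a theorem — precisely, that an off-diagonal failure of soundness always reflects such a collapse and hence a triviality — is where the read-once structure is essential, via the minterm/maxterm–clique correspondence (Theorem~\ref{thm:rel-minmax-webs}) and Gurvich's singleton-intersection property (Theorem~\ref{thm:gurevich}); this is the genuinely delicate step, and at this point I would follow the detailed construction of \cite{Das:13:Rewritin:uq}.
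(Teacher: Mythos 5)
Your proposal matches the paper's own treatment of this proposition, which is given no formal proof at all: the only justification is the two sentences of prose preceding the statement, namely that a variable occurring negatively on both sides can be renamed to a positive one, and that a variable occurring positively on one side and negatively on the other forces triviality at that variable --- which is precisely your renaming move together with your monotone/antitone inequality $f(Y\cup\set{x})\le g(Y\cup\set{x})\le g(Y\setminus\set{x})$, with variables occurring on only one side dispatched by Proposition~\ref{prop:nontrivial-implies-nonerasing-nonintroducing} exactly as you do. The one place where your sketch goes beyond the paper is the configuration where $x$ and $\bar x$ cohabit a single side of the rule; there your argument is the only genuinely loose part (minterms and maxterms are defined in the paper only for monotone functions, so ``annihilating every minterm that uses both literals'' is not yet meaningful for a term containing both $x$ and $\bar x$, and the assertion that ``no forcing occurs'' in the remaining configurations is stated rather than argued), but the paper's prose silently skips this case as well and, like you, defers the detailed construction to \cite{Das:13:Rewritin:uq}.
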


Recall that $\ACU'$ preserves the Boolean function computed by a term, and that every linear term is $\ACU'$-equivalent to $\fff$, $\ttt$ or a unique constant-free linear term. 
Let us write $R\cdot S$ for the composition of relations $R$ and $S$, and $=_{\ACU'}$ for equivalence under $\ACU'$.

\begin{proposition}
\label{prop:nontrivial-implies-constant-free}
	If $R$ is a complete linear system then any constant-free nontrivial linear inference has a constant-free derivation in $=_{{\ACU'}} \cdot \annotatedarrow{}{R} \cdot =_{\ACU'}$.
\end{proposition}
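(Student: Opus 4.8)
The plan is to invoke completeness to obtain an $R$-derivation and then to normalise each of its terms under $\ACU'$, discarding the constants. Since $s\to t$ is itself a linear inference and $R$ is complete for $\Lin$, there is an $R$-derivation modulo $\ACU$,
\[
s=v_0\annotatedarrow{}{R}v_1\annotatedarrow{}{R}\cdots\annotatedarrow{}{R}v_m=t\quadfs
\]
The intermediate $v_i$ may well contain the constants $\ttt,\fff$, and may involve variables not occurring in $s$ or $t$. Recall, however, that every linear term normalises under $\ACU'$ to $\fff$, to $\ttt$, or to a constant-free linear term; I would write $w_i$ for such a constant-free normal form of $v_i$ whenever one exists.

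The crux is to rule out the first two cases, i.e.\ to show that no $v_i$ normalises to $\fff$ or $\ttt$. Because $R$ is sound, and soundness is preserved under the (positive) contexts of our language, the functions $h_i$ computed by the $v_i$ form a monotone chain $f=h_0\le h_1\le\cdots\le h_m=g$, where $f,g$ are the functions of $s,t$ and each $h_i$ is read as a function on the union $X^\ast$ of all variables occurring in the derivation (extending trivially over absent variables). Nontriviality of $s\to t$ forces $f\neq\fff$ and $g\neq\ttt$: if $f=\fff$ then $f(Y\cup\{x\})=0\le g(Y\setminus\{x\})$ for all $x$ and $Y$, so $s\to t$ would be trivial, and dually for $g=\ttt$. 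Hence there are $Y,Z\subseteq X^\ast$ with $f(Y)=1$ and $g(Z)=0$, giving $h_i(Y)=1$ and $h_i(Z)=0$ for every $i$, so that each $h_i$ is non-constant. As $\ACU'$ preserves the computed function, $v_i=_{\ACU'}\fff$ would force $h_i=\fff$ and $v_i=_{\ACU'}\ttt$ would force $h_i=\ttt$; both are impossible, so each $v_i$ has a constant-free normal form $w_i=_{\ACU'}v_i$.

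It then remains to assemble these into a derivation. For each $i$ we have $w_i=_{\ACU'}v_i\annotatedarrow{}{R}v_{i+1}=_{\ACU'}w_{i+1}$, and since $\ACU$-equivalence is contained in $\ACU'$-equivalence this is a single step of ${=_{\ACU'}}\cdot\annotatedarrow{}{R}\cdot{=_{\ACU'}}$ between the constant-free terms $w_i$ and $w_{i+1}$. Collapsing each maximal run of $\ACU'$-equivalent consecutive $w_i$ (using transitivity of $=_{\ACU'}$ to bridge the collapsed steps), the surviving terms form a constant-free derivation in ${=_{\ACU'}}\cdot\annotatedarrow{}{R}\cdot{=_{\ACU'}}$ from $w_0=_{\ACU'}s$ to $w_m=_{\ACU'}t$, as required. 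The step I expect to be most delicate is precisely the exclusion of constant intermediate terms: the subtlety is that individual $R$-steps may introduce or erase variables, so a priori the $h_i$ live over different domains; extending every function to the common variable set $X^\ast$ is what renders the monotone chain $f\le\cdots\le g$ meaningful and lets nontriviality of the endpoints propagate to every intermediate term.
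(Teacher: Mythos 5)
Your proof is correct and follows essentially the same route as the paper: obtain an $R$-derivation by completeness, normalise every line under $\ACU'$, and exclude lines that would collapse to $\fff$ or $\ttt$ by soundness. The only (harmless) difference is that you rule out constant intermediate functions via nontriviality of $s\to t$, whereas the paper does so via constant-freeness of the endpoints together with Corollary~\ref{cor:acu'}; both hypotheses are available and either suffices.
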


\begin{proof}
	Let $s\to t$ be a constant-free nontrivial linear inference. By completeness there is an $R$-derivation of $\fnimplies{s}{t}$, in which we may simply reduce every line by $\ACU'$ to a constant-free term or $\fff$ or $\ttt$. However, if some line were to reduce to $\fff$ or $\ttt$ then either $s$ or $t$ would contain a constant, by soundness and Corollary~\ref{cor:acu'}, so the resulting sequence is a derivation of the appropriate format.
\end{proof}

Now, combining our results from Section~\ref{sect:inference} with the normal forms obtained above, we arrive at the main result of this work:

\begin{theorem}\label{thm:exists-polyder}
	If there is a sound and complete linear system for $\Lin$, 
	then there is one that has a $O(n^4)$-length derivation for each linear inference on $n$ variables.
\end{theorem}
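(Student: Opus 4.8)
The plan is to produce the required system by augmenting the hypothesised complete system $R$ with a fixed finite set of sound linear rules: switch $\swi$, medial $\med$, and the weakening rules $\awd\colon \fff \to A$ and $\awu\colon A \to \ttt$, all taken modulo $\ACU'$. Call the result $R'$. Since $R\subseteq R'$, completeness for $\Lin$ is inherited; since each added rule is sound and has a polynomial-time decidable one-step relation, $R'$ is again a sound linear system. (Each weakening rule is linear in the generalised sense of this paper, being variable-introducing or -erasing but left- and right-linear, and sound because $\fff\leq u$ and $u\leq\ttt$ for every $u$.) The theorem is thus entirely about the \emph{length} of derivations, and I would show that every linear inference $s\to t$ on $n$ variables admits an $R'$-derivation of length $O(n^4)$.

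First I would reduce $s\to t$ to a negation-free, constant-free, \emph{nontrivial} core. By Proposition~\ref{prop:neg-elim} we may assume $s,t$ are negation-free (its negation-free equivalent has the same reduction steps, so a derivation of that yields one of $s\to t$ at no extra cost), and by the same proposition we may assume the whole system $R$ is negation-free; this is the point that guarantees no intermediate line will carry a negative variable. Any variable occurring on only one side is then a triviality by Proposition~\ref{prop:nontrivial-implies-nonerasing-nonintroducing}, so using $O(n)$ steps of $\awd,\awu$ I equalise the variable sets of the two sides. Normalising each side modulo $\ACU'$ (Corollary~\ref{cor:acu'}) either leaves a side equal to $\fff$ or $\ttt$ — a degenerate case dispatched directly by $\awd$ or $\awu$ in $O(1)$ steps — or makes both sides constant-free, in which case I proceed. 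Finally, Lemma~\ref{lem:triv-aside} splits off the trivial part: its clause~\eqref{item:shuffle-trivialities} gives derivations $s\longannotatedarrow{*}{\swi,\med} s'\vlor u$ and $t'\vlor u\longannotatedarrow{*}{\swi,\med} t$ of length $O(n^2)$, leaving a sound \emph{nontrivial} core $s'\to t'$ that is negation-free and constant-free by construction.

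For the core, completeness of $R$ gives, via Proposition~\ref{prop:nontrivial-implies-constant-free}, a constant-free $R$-derivation of $s'\to t'$ inside $=_{\ACU'}\cdot\annotatedarrow{}{R}\cdot=_{\ACU'}$; because $R$ was taken negation-free, every line of it is a negation-free constant-free linear term. This derivation is strictly increasing under $\leq$, since any function-preserving step would be an $\ACU'$-equivalence (Corollary~\ref{cor:acu'}) and is absorbed, and its endpoints are nontrivial, so Theorem~\ref{thm:main-result} bounds its length by $O(n^4)$. Embedding it in the context $\square\vlor u$ yields $s'\vlor u\annotatedarrow{*}{R} t'\vlor u$ of the same length; concatenating with the two $O(n^2)$ pieces from Lemma~\ref{lem:triv-aside} and the $O(n)$ weakening steps produces a single $R'$-derivation of $s\to t$ of total length $O(n^4)+O(n^2)=O(n^4)$.

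The step I expect to be most delicate is ensuring the core derivation stays inside the negation-free, constant-free, read-once world where Theorem~\ref{thm:main-result} genuinely applies: Proposition~\ref{prop:nontrivial-implies-constant-free} only removes constants, so the real work is justifying that $R$ may be replaced by a negation-free system without losing completeness on the fragment to which we have reduced — this must be argued rule-by-rule from Proposition~\ref{prop:neg-elim}, with the trivial rules of $R$ absorbed into the added $\swi,\med,\awd,\awu$. The secondary point requiring care is the bookkeeping of the preprocessing, namely checking that erasures, introductions, constants and \emph{global} trivialities are all handled within $O(n^2)$ steps, so that they are dominated by the $O(n^4)$ core; this is exactly what Lemma~\ref{lem:triv-aside} and Propositions~\ref{prop:neg-elim}–\ref{prop:nontrivial-implies-constant-free} are designed to package.
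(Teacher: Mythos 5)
Your proposal is correct and follows essentially the same route as the paper: reduce to a negation-free, constant-free, nontrivial core via Proposition~\ref{prop:neg-elim}, Proposition~\ref{prop:nontrivial-implies-nonerasing-nonintroducing} and Lemma~\ref{lem:triv-aside}, extract a constant-free $R$-derivation modulo $\ACU'$ by Proposition~\ref{prop:nontrivial-implies-constant-free}, bound it by Theorem~\ref{thm:main-result}, and reassemble in $R$ augmented with $\swi$, $\med$ and $\ACU'$. Your only deviations are cosmetic but arguably tidier: you make the handling of one-sided variables explicit via weakening rules (where the paper leaves this implicit in its appeal to triviality), and you flag the rule-by-rule negation-elimination of $R$ that the paper's invocation of Proposition~\ref{prop:nontrivial-implies-constant-free} glosses over.
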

\begin{proof}
  Assume we have a sound and complete linear system $R$ for $\Lin$, and let $s \to t$ be a linear inference on $n$ variables. 
  By Lemma~\ref{lem:triv-aside} we have linear terms $s',t'$ such that $|s'|\le|s|$ and $s'\to t'$ is sound, linear, and nontrivial. By Propositions~\ref{prop:nontrivial-implies-nonerasing-nonintroducing}, \ref{prop:neg-elim} and reduction under $\ACU'$ we can assume that $s', t'$ have the same size and are free of negation and constants.\footnote{If $s'$ or $t'$ is not equivalent to a constant-free term under $\ACU'$, then it is equivalent to $\bot$ or $\top$, whence we must have $s'=t'$ by non-triviality.}
  By Proposition~\ref{prop:nontrivial-implies-constant-free} there is thus a derivation of $s'\to t'$ in $=_{\ACU'} \cdot \annotatedarrow{}{R}\cdot =_{\ACU'}$ that is constant-free and negation-free. We can assume that each term in this derivation computes a distinct Boolean function, by Corollary~\ref{cor:acu'}, and so, by Theorem~\ref{thm:main-result}, the length of this derivation is $O(n^4)$. 
  Finally, by Lemma~\ref{lem:triv-aside}.\eqref{item:shuffle-trivialities}, this means that we can construct a derivation of $s\to t$ with overall length $O(n^4)$ in $R\cup \set{ \swi ,\med }\cup \ACU'$.
\end{proof}

\begin{corollary}\label{cor:nolinsys}
	There is no sound linear system complete for $\Lin$ unless $\coNP = \NP$.
\end{corollary}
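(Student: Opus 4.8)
The plan is to argue by contraposition: assuming a sound and complete linear system for $\Lin$ exists, I will exhibit an $\NP$-algorithm for $\Lin$, and then invoke its $\coNP$-completeness (Proposition~\ref{prop:coNP}) to collapse $\coNP$ into $\NP$. So the whole corollary is essentially a packaging of Theorem~\ref{thm:exists-polyder} into a complexity-theoretic statement.

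First I would fix, by Theorem~\ref{thm:exists-polyder}, a sound and complete linear system---call it $R'$ (namely $R\cup\set{\swi,\med}\cup\ACU'$)---that admits an $O(n^4)$-length derivation of every linear inference on $n$ variables. The key observation I would stress is that such a derivation is not merely of polynomial \emph{length} but of polynomial \emph{total size}: every intermediate term lies, modulo $\ACU'$, between the two endpoints under $\leq$, and so, being linear over a variable set of size $n$, it has size $O(n)$. Hence the entire derivation can be written down in space $O(n^5)$, which is what makes it a legitimate polynomial-size certificate rather than just a short chain of potentially huge objects.

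Next I would check that this certificate is verifiable in polynomial time. Since $R'$ is by definition a TRS, and the added rules $\swi$, $\med$ and the equations of $\ACU'$ all have reduction steps that are plainly polynomial-time decidable, the one-step reduction relation $\annotatedarrow{}{R'}$ is decidable in polynomial time; hence each of the $O(n^4)$ steps of a candidate derivation can be validated in polynomial time. Combining guess and check, membership $s\to t\in\Lin$ is decided by a nondeterministic machine that guesses a derivation and verifies each step: soundness of $R'$ guarantees that any accepted rule is genuinely a linear inference, while completeness together with Theorem~\ref{thm:exists-polyder} guarantees that every linear inference admits an accepting ($O(n^4)$-length) derivation. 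This places $\Lin$ in $\NP$.

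Finally, since $\Lin$ is $\coNP$-complete by Proposition~\ref{prop:coNP}, every problem in $\coNP$ many-one reduces to $\Lin\in\NP$, so $\coNP\subseteq\NP$ and therefore $\coNP=\NP$. The only delicate point---and the step I would treat most carefully---is the polynomial size bound on the intermediate terms: without it, bounded length alone would not yield a polynomial-size $\NP$-witness, and the reduction to an $\NP$-algorithm would break down. Everything else is routine bookkeeping once Theorems~\ref{thm:main-result} and~\ref{thm:exists-polyder} are available.
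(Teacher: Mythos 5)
Your proof is correct and follows essentially the same route as the paper: combine the $\coNP$-completeness of $\Lin$ (Proposition~\ref{prop:coNP}) with the $O(n^4)$-length derivations guaranteed by Theorem~\ref{thm:exists-polyder} to get a guess-and-verify $\NP$ algorithm for $\Lin$. Your additional care about the polynomial \emph{size} of the certificate (intermediate terms being linear, hence of size $O(n)$) and the polynomial-time checkability of each step is a point the paper leaves implicit, but it does not change the argument.
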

\begin{proof}
  By Proposition~\ref{prop:coNP}, $\Lin$ is $\coNP$-complete, and the existence of such a system would lead to a $\NP$ decision procedure for $\Lin$ by
  Theorem~\ref{thm:exists-polyder}: for any linear inference on $n$ variables we could simply guess a correct $O(n^4)$ length derivation in an appropriate system.
\end{proof}

\section{On the canonicity of switch and medial}\label{sect:canon}

In this section we investigate to what extent the two rules switch
and medial from Definition~\ref{def:sm}, which play a crucial role in the proof theory of classical
propositional logic, are ``canonical''. 
Let us restrict our attention to constant-free terms and rules for this section.

Recall that the switch and medial rules are as follows:
	\[
	\swi : x \vlan (y \vlor z) \to (x\vlan y) \vlor z
	\qquad \quad
	\med : \vls[(w.x).(y.z)] \to \vls((w \vlor y).(x \vlor z))
	\]
First we observe that both rules are minimal in the following sense:

\begin{definition}
  A sound linear rewrite rule $\rho\colon l\to r$ is \emph{minimal} if
  there is no linear term $t$ on the same variables as $l$ and $r$
  such that $l<t<r$.
\end{definition}

\begin{proposition}
Switch and medial are minimal.
\end{proposition}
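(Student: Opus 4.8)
The plan is to work entirely with the web and Boolean-function machinery of Sections~\ref{sect:webs} and~\ref{sect:prelim-bool}. Since we restrict to constant-free terms, a linear term on a fixed variable set is determined up to $\AC$ by its web (Proposition~\ref{prop:web-mod-ac}), its minterms and maxterms are the maximal $\wedge$- and $\vee$-cliques of that web (Theorem~\ref{thm:rel-minmax-webs}), and the relation $\leq$ can be tested through the minterm/maxterm telescoping of Proposition~\ref{prop:sound-minmax}. Both rules involve only a bounded variable set (three variables for $\swi$, four for $\med$), so there are only finitely many candidate intermediate terms and the statement is ultimately a finite check; the work is in organising this check so that it stays short. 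I would present the argument for $\med$ in full and remark that $\swi$ is analogous and easier.

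For medial, I first record, via Proposition~\ref{prop:altminterm}, that $l=(w \wedge x)\vee(y\wedge z)$ has minterms $\{w,x\},\{y,z\}$ and maxterms $\{w,y\},\{w,z\},\{x,y\},\{x,z\}$, while $r=(w\vee y)\wedge(x\vee z)$ has minterms $\{w,x\},\{w,z\},\{x,y\},\{y,z\}$ and maxterms $\{w,y\},\{x,z\}$. Now suppose $t$ is a constant-free linear term on $\{w,x,y,z\}$ with $l\leq t\leq r$; the goal is to show that $\web t$ is one of $\web l$, $\web r$. From $l\leq t$ the minterm telescoping of Proposition~\ref{prop:sound-minmax} yields minterms $S_1\subseteq\{w,x\}$ and $S_2\subseteq\{y,z\}$ of $t$, and from $t\leq r$ the dual maxterm telescoping yields maxterms $T_1\subseteq\{w,y\}$ and $T_2\subseteq\{x,z\}$ of $t$.

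The crucial step is to apply Gurvich's theorem (Theorem~\ref{thm:gurevich}) to the read-once function $t$: each minterm and each maxterm of $t$ meet in exactly one variable. Since each of the two minterm-bounds $\{w,x\},\{y,z\}$ meets each of the two maxterm-bounds $\{w,y\},\{x,z\}$ in a single variable (namely $\{w\},\{x\},\{y\},\{z\}$ respectively), the four conditions $|S_i\cap T_j|=1$ force $S_1=\{w,x\}$, $S_2=\{y,z\}$, $T_1=\{w,y\}$, $T_2=\{x,z\}$ exactly. Hence in $\web t$ the edges $wx,yz$ are $\wedge$-labelled and $wy,xz$ are $\vee$-labelled. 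Only the two edges $wz$ and $xy$ remain free, giving four candidate webs: taking both $\vee$ reproduces $\web l$ and taking both $\wedge$ reproduces $\web r$, whereas each of the two mixed choices yields an induced subgraph of the form~\eqref{eq:P4} and so, by $P_4$-freeness, is not the web of any linear term. Therefore $t$ is $\AC$-equivalent to $l$ or $r$, and medial is minimal.

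For switch the same scheme applies on $\{x,y,z\}$: computing the minterms and maxterms of $l=x\wedge(y\vee z)$ and $r=(x\wedge y)\vee z$, the telescoping conditions together with Gurvich constrain the minterms and maxterms of any intermediate $t$; since there is no $P_4$ on three vertices, one finishes by directly checking the handful of remaining webs, all of which coincide with $l$ or $r$. The step I expect to require the most care is the Gurvich-forcing: one must verify that the four bounding sets obtained from telescoping pairwise meet, across the minterm/maxterm divide, in exactly one variable, so that read-onceness of $t$ leaves no slack and every candidate minterm and maxterm is pinned down rather than merely contained in its bound. Once that is secured, the residual edge analysis and the appeal to $P_4$-freeness are routine.
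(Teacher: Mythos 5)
Your argument is correct, but it is organised quite differently from the paper's. The paper disposes of this proposition in one line, ``by exhaustive search on all terms of size 3 (for switch) and 4 (for medial)'', i.e.\ a brute-force enumeration of the finitely many candidate intermediate terms. You instead give a structured argument: the telescoping characterisation of $\leq$ (Proposition~\ref{prop:sound-minmax}) plants a minterm of $t$ inside each minterm of $l$ and a maxterm of $t$ inside each maxterm of $r$; Gurvich's singleton-intersection property (Theorem~\ref{thm:gurevich}) then pins all four sets down exactly (each pairwise intersection of the bounding sets is already a singleton, so read-onceness leaves no slack); this fixes four of the six edge labels of $\web t$, and the two remaining labels are resolved by $P_4$-freeness, leaving only $\web l$ and $\web r$. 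I checked the details: the minterm/maxterm computations for $l$ and $r$ are right, the forcing $S_1=\{w,x\}$, $S_2=\{y,z\}$, $T_1=\{w,y\}$, $T_2=\{x,z\}$ is sound, and the two mixed labellings of $wz,xy$ do produce the forbidden configuration~\eqref{eq:P4} ($\wedge$-edges forming a path on four vertices), so they are not webs; the three-variable case for switch has no room for a $P_4$ and closes immediately. What your approach buys is that it is genuinely a proof rather than a certificate of a computation, it scales conceptually (the same telescoping-plus-Gurvich-plus-$P_4$ pattern is exactly what the paper deploys later in the proof of Theorem~\ref{thm:medial}), and it makes visible \emph{why} minimality holds. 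What the paper's approach buys is brevity and immunity to bookkeeping errors on a four-element search space. One small point of care: the definition of minimality quantifies over linear terms on the same variables, and your web machinery presupposes that $t$ is constant-free and negation-free; this is licensed by the standing restriction to constant-free terms in Section~\ref{sect:canon} and by the fact that ``same variables'' excludes negated occurrences, but it is worth saying explicitly.
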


\begin{proof}
  By exhaustive search on all terms of size 3 (for switch) and 4 (for medial).
\end{proof}

Observe that, seen as an action on relation webs, switch and medial preserve $\vee$-edges and $\wedge$-edges, respectively. Formally, let us consider the following two properties of a linear inference $\rho$:
\begin{itemize}[label=(**)]
\item[(*)]\label{item:switch-preserves-dis-edges} If $s \annotatedarrow{}{\rho} t$ then, whenever $\gedge[]xy$ in $\web s$, we have that $\gedge[]xy$ in $\web t$.
\item[(**)]\label{item:medial-preserves-con-edges} If $s \annotatedarrow{}{\rho} t$ then, whenever $\redge[]xy$ in $\web s$, we have that $\redge[]xy$ in $\web t$.
\end{itemize}

Our first canonicity result is that medial is the \emph{only}
sound linear inference that is minimal and satisfies (**). 
In fact, we will show the stronger property that any sound linear rule satisfying (**) is already derivable by medial.
First, we will require a certain relation between the webs of terms, which was defined in \cite{Stra:07:A-Charac:fk}.

\begin{definition}
	Let $s$ and $t$ be linear terms on a set $X$ of variables. We write $\MedialCriterion{s}{t}$\/ if:
	\begin{enumerate}
		\item Whenever $\redge[]xy$ in $\web s$ we have that $\redge[]xy$ in $\web t$.
		\item Whenever $\gedge[]xy$ in $\web s$ and $\redge[]xy$ in $\web t$, there are $w,z\in X$ such that,
		\begin{equation*}
		\FourGraph{w,x,y,z}rggggr \text{ in $\web s$}
		\quad \text{and} \quad
		\FourGraph{w,x,y,z}rgrrgr \text{ in $\web t$.}
		\end{equation*}
	\end{enumerate}
\end{definition}

This relation allows us to relate structural properties of graphs to derivability by medial, via the characterisation result below. The proof from \cite{Stra:07:A-Charac:fk} relies on careful analysis of subterms which is  beyond the scope of this paper.

\begin{proposition}[Medial criterion]\label{MedialCriterion}
	$\MedialCriterion{s}{t}$ if and only if $s\annotatedarrow{*}{\med}t$. 
\end{proposition}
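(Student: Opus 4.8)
The plan is to prove the two implications separately, treating $\MedialCriterion{\cdot}{\cdot}$ as the target invariant in the soundness direction ($s\annotatedarrow{*}{\med}t\Rightarrow\MedialCriterion{s}{t}$) and as the starting hypothesis in the completeness direction ($\MedialCriterion{s}{t}\Rightarrow s\annotatedarrow{*}{\med}t$). Throughout I work with webs of constant-free negation-free linear terms, so that Proposition~\ref{prop:web-mod-ac} and the $P_4$-freeness (cograph) characterisation of Section~\ref{sect:webs} are available.

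For the soundness direction I would first dispatch the single-step case $s\annotatedarrow{}{\med}t$. Writing the redex as $(A\vlan B)\vlor(C\vlan D)\to(A\vlor C)\vlan(B\vlor D)$ inside a context, a direct computation of least common connectives shows that the only edges of $\web s$ whose label changes are those between $\Var(A)$ and $\Var(D)$ and those between $\Var(B)$ and $\Var(C)$, each flipping from $\vee$ to $\wedge$; every other edge, in particular every $\wedge$-edge, is preserved. This gives condition~1 at once, and for condition~2 a flipped edge $\{x,y\}$ with $x\in\Var(A)$, $y\in\Var(D)$ has as witnesses any $w\in\Var(B)$, $z\in\Var(C)$ (these exist as the terms are constant-free); checking the six edges against the two displayed squares in the definition of $\MedialCriterion{\cdot}{\cdot}$ is then routine. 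A by-product is the invariant that along any medial derivation the set of $\wedge$-edges is monotonically non-decreasing.

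To pass to $s\annotatedarrow{*}{\med}t$ I would compose. Condition~1 is immediate from the monotonicity invariant. For condition~2, given an edge that is $\vee$ in $s$ and $\wedge$ in $t$, I would locate the first step at which it flips and transport the witness square obtained there: its $\vee$-edges persist backwards to $s$ since $\vee$-edges only shrink, and its $\wedge$-edges persist forwards to $t$ since $\wedge$-edges only grow. The one delicate point is that the two $\wedge$-edges of the witness square must already be present in $s$; here I would choose the witnesses $w,z$ inside the conjuncts $B,C$ of the flipping redex so that $w$-$x$ and $y$-$z$ are $\wedge$-edges inherited from $s$, using the cograph structure of $\web s$. (Equivalently, one proves $\MedialCriterion{\cdot}{\cdot}$ transitive and composes the single-step instances, the transitivity proof being the same square-combination argument.)

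For the completeness direction I would induct on $d(s,t)\defined\redmeas(t)-\redmeas(s)$, which is non-negative by condition~1 together with Proposition~\ref{prop:numredgreen}. If $d=0$ then $s$ and $t$ have identical $\wedge$-edge sets, hence identical webs, hence are $\AC$-equal by Proposition~\ref{prop:web-mod-ac}, yielding a length-$0$ derivation. If $d>0$ there is a discrepancy edge, $\vee$ in $s$ and $\wedge$ in $t$; condition~2 supplies a witness square in $\web s$, and using $P_4$-freeness I would promote this square to a genuine medial redex $(A\vlan B)\vlor(C\vlan D)$ of $s$ by choosing $A,B,C,D$ as suitable maximal subterms. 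Firing it yields $s'$ with $s\annotatedarrow{}{\med}s'$, and it remains to check $\MedialCriterion{s'}{t}$ and $d(s',t)<d(s,t)$, after which induction closes the argument. The main obstacle is exactly this inductive step: turning the abstract witness square into an honest redex, and verifying that firing it does not overshoot, i.e.\ that every edge it flips to $\wedge$ is already $\wedge$ in $t$ (so condition~1 to $t$ is kept and $d$ strictly drops) while condition~2 survives for the remaining discrepancies. Controlling which edges flip forces the careful choice of the four subterms and a detailed analysis of how maximal $\wedge$- and $\vee$-cliques sit inside the cograph $\web s$ relative to $\web t$; this is the subterm analysis carried out in \cite{Stra:07:A-Charac:fk}, which is why we only sketch it here.
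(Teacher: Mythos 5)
First, a point of reference: the paper does not actually prove this proposition. It states it and defers entirely to \cite{Stra:07:A-Charac:fk}, remarking that the proof ``relies on careful analysis of subterms which is beyond the scope of this paper''. So your proposal is not competing with an in-paper argument but with an external one, and the fair comparison is between your sketch and the full proof in that reference.

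Your outline is structurally right where it is actually carried out: the single-step soundness computation is correct (only the $\Var(A)$--$\Var(D)$ and $\Var(B)$--$\Var(C)$ edges flip, both from $\vee$ to $\wedge$, and any $w\in\Var(B)$, $z\in\Var(C)$ witness the required square for an $A$--$D$ flip), and the base case of the completeness induction correctly reduces to Proposition~\ref{prop:web-mod-ac}. However, both points you flag as ``delicate'' are genuine gaps, and they are precisely where the content of \cite{Stra:07:A-Charac:fk} lives. For soundness of a multi-step derivation: at the step $s_i\annotatedarrow{}{\med}s_{i+1}$ where $\{x,y\}$ first flips, the witness pair $(w,z)$ has $w$--$x$ and $y$--$z$ labelled $\wedge$ in $s_i$, but these edges may themselves have been created by earlier medial steps and hence be $\vee$ in $s$; you assert that witnesses with $\wedge$-edges ``inherited from $s$'' can be chosen inside $B$ and $C$, but nothing in the sketch shows such a choice exists --- conceivably every variable of $B$ acquired its $\wedge$-edge to $x$ only after $s_0$. (Your parenthetical fallback, transitivity of $\MedialCriterion{\cdot}{\cdot}$, is not the ``same square-combination argument'' either: composing $\MedialCriterion{s}{u}$ with $\MedialCriterion{u}{t}$, the two $\vee$-edges of the witness square in $\web u$ may become $\wedge$ in $\web t$, destroying the square, so a fresh witness pair must be manufactured via $\pfour$-freeness.) For completeness, promoting an abstract witness square to an honest redex $(A\vlan B)\vlor(C\vlan D)$ of $s$, and verifying that firing it flips only edges already labelled $\wedge$ in $t$ (so that $\MedialCriterion{s'}{t}$ and the strict decrease of $\redmeas(t)-\redmeas(s')$ both survive), is exactly the subterm analysis you explicitly outsource. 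Since both halves of the equivalence bottom out in these unproved steps, the proposal as it stands is a proof plan rather than a proof; it is, to be fair, a faithful plan of the argument in \cite{Stra:07:A-Charac:fk}, which is also all the paper itself offers.
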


Using this result we can show that any sound linear rule satisfying (**) is already derivable by medial:

\begin{theorem}\label{thm:medial}
	Let $s$ and $t$\/ be linear terms on a variable set $X$. The following are equivalent:
	\begin{enumerate}
		\item\label{MedialThm-RedPres+Sound} $s\leq t$ and  for all $x,y\in X$ we have $\redge[]xy$ in $\web s$ implies $\redge[]xy$ in $\web t$.
		\item\label{MedialThm-RedPres+Crit} $\MedialCriterion{s}{t}$.
		\item\label{MedialThm-Medial} $s\annotatedarrow{*}{\med} t$.
	\end{enumerate}
\end{theorem}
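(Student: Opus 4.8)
The plan is to prove the cycle $\eqref{MedialThm-RedPres+Sound}\Rightarrow\eqref{MedialThm-RedPres+Crit}\Rightarrow\eqref{MedialThm-Medial}\Rightarrow\eqref{MedialThm-RedPres+Sound}$. The equivalence $\eqref{MedialThm-RedPres+Crit}\Leftrightarrow\eqref{MedialThm-Medial}$ is already delivered by the medial criterion, Proposition~\ref{MedialCriterion}, so the entire content reduces to the two implications touching~\eqref{MedialThm-RedPres+Sound}; of these, the backward one is routine and the forward one is the real work.

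First I would dispatch $\eqref{MedialThm-Medial}\Rightarrow\eqref{MedialThm-RedPres+Sound}$. Each $\med$-step is sound, so $s\leq t$ follows by transitivity of $\leq$ along the derivation $s\annotatedarrow{*}{\med}t$. Moreover, inspecting the rule shows that a single $\med$-step transforms a medial redex into the corresponding contractum on the four active variables, turning exactly two $\vee$-edges into $\wedge$-edges and leaving every other least common connective untouched (the step being applied in a positive context). In particular no $\wedge$-edge is ever destroyed, and since this is preserved along the whole derivation we obtain the second conjunct of~\eqref{MedialThm-RedPres+Sound}, namely that $\wedge$-edges of $\web s$ remain $\wedge$-edges in $\web t$.

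The heart of the argument is $\eqref{MedialThm-RedPres+Sound}\Rightarrow\eqref{MedialThm-RedPres+Crit}$. Condition~(1) of $\MedialCriterion{s}{t}$ is literally the $\wedge$-edge preservation hypothesis, so the task is to verify condition~(2): given a $\vee$-edge between $x$ and $y$ in $\web s$ that has become a $\wedge$-edge in $\web t$, I must exhibit $w,z$ realising the two $P_4$-patterns displayed in the definition of $\MedialCriterion{\cdot}{\cdot}$. I would first use soundness to show that $x$ (and symmetrically $y$) has at least one $\wedge$-neighbour in $\web s$: if $x$ were $\wedge$-isolated then $\set x$ would be a minterm of $s$ (a maximal $\wedge$-clique, by Theorem~\ref{thm:rel-minmax-webs}), so by Proposition~\ref{prop:sound-minmax} some minterm of $t$ would sit inside $\set x$, forcing $\set x$ itself to be a minterm of $t$; but that makes $x$ also $\wedge$-isolated in $\web t$, contradicting the $\wedge$-edge between $x$ and $y$ there. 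Having secured $\wedge$-neighbours, I would then locate a suitable pair: candidates $w$ (a $\wedge$-neighbour of $x$) and $z$ (a $\wedge$-neighbour of $y$) must be chosen so that $w$--$y$ and $x$--$z$ are $\vee$-edges of $\web s$ that stay $\vee$-edges in $\web t$, while $w$--$z$ is a $\vee$-edge of $\web s$ that flips to a $\wedge$-edge in $\web t$. I would extract $w$ and $z$ from a minterm of $t$ containing both $x$ and $y$, pull this $\wedge$-clique back to $\web s$, and combine $P_4$-freeness of webs with Gurvich's singleton-intersection property (Theorem~\ref{thm:gurevich}) to discard the colourings in which the required cross-edges come out wrong.

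The main obstacle is exactly this witness selection. The $\wedge$-edge preservation hypothesis tells us only that the flips are one-directional (every changed edge goes $\vee\!\to\!\wedge$), but it gives no immediate handle on producing a single pair $(w,z)$ for which \emph{all four} surrounding edges carry the colours demanded by the medial patterns at once, and in particular on guaranteeing that $w$--$y$ and $x$--$z$ do not themselves flip. I expect to resolve this by a local case analysis on the induced subgraph on $\{w,x,y,z\}$ in both webs, repeatedly invoking $P_4$-freeness to rule out the impossible colourings and the singleton-intersection property to force the diagonal $\wedge$-edges $w$--$x$ and $y$--$z$ to coexist with the $\vee$-edges $w$--$y$ and $x$--$z$; this is precisely where the structural characterisation of \cite{Stra:07:A-Charac:fk} underlying Proposition~\ref{MedialCriterion} carries the weight. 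Once condition~(2) is established, $\eqref{MedialThm-RedPres+Crit}\Rightarrow\eqref{MedialThm-Medial}$ and $\eqref{MedialThm-Medial}\Rightarrow\eqref{MedialThm-RedPres+Sound}$ close the loop, so that $\eqref{MedialThm-RedPres+Sound}$ need not be derived from $\eqref{MedialThm-RedPres+Crit}$ directly.
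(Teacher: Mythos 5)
Your skeleton matches the paper's: both arguments reduce everything to \ref{MedialThm-RedPres+Sound}$\implies$\ref{MedialThm-RedPres+Crit}, with \ref{MedialThm-RedPres+Crit}$\implies$\ref{MedialThm-Medial} delegated to Proposition~\ref{MedialCriterion} and \ref{MedialThm-Medial}$\implies$\ref{MedialThm-RedPres+Sound} by inspection of the rule; your observation that $x$ (and $y$) must have a $\wedge$-neighbour in $\web s$ is also essentially the paper's first step. But the core of the implication --- actually producing $w$ and $z$ and verifying all the required edge colours in \emph{both} webs --- is exactly what you leave open (``the main obstacle'', ``I expect to resolve this by a local case analysis''), and the route you sketch for it does not work: you propose to extract $w$ and $z$ from a minterm of $t$ containing both $x$ and $y$ and to ``pull this $\wedge$-clique back to $\web s$'', but hypothesis~\ref{MedialThm-RedPres+Sound} only guarantees that $\wedge$-edges of $\web s$ survive into $\web t$; $\wedge$-cliques of $\web t$ need not be $\wedge$-cliques of $\web s$ (the edge $\set{x,y}$ itself is the counterexample), so there is nothing to pull back. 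You also misattribute the remaining work to the characterisation underlying Proposition~\ref{MedialCriterion}: that proposition is used only for \ref{MedialThm-RedPres+Crit}$\implies$\ref{MedialThm-Medial}, and gives no help in verifying condition~(2) of $\MedialCriterion{s}{t}$.

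The missing idea is a transfer principle that hypothesis~\ref{MedialThm-RedPres+Sound} gives you for free: since $\wedge$-edges are preserved from $\web s$ to $\web t$, $\vee$-edges are reflected from $\web t$ to $\web s$, so every minterm of $s$ is a $\wedge$-clique of $\web t$ and every maxterm of $t$ is a $\vee$-clique of $\web s$; soundness (via Proposition~\ref{prop:sound-minmax} and Theorem~\ref{thm:rel-minmax-webs}) then forces these cliques to be maximal, i.e.\ every minterm of $s$ is a minterm of $t$ and every maxterm of $t$ is a maxterm of $s$. Now take a minterm $S\ni x$ of $s$ and a maxterm $T\ni y$ of $t$ (both proper supersets of the singletons, by the argument you already gave). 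Both are minterms, resp.\ maxterms, of \emph{both} terms, so Theorem~\ref{thm:gurevich} yields a unique $z\in S\cap T$, which is a $\wedge$-neighbour of $x$ and a $\vee$-neighbour of $y$ in both webs simultaneously; a symmetric argument gives $w$ with the dual property, $w\neq z$, and the one remaining edge $\set{w,z}$ is then forced to be $\vee$ in $\web s$ and $\wedge$ in $\web t$ by $P_4$-freeness. Without this transfer step your ``local case analysis'' has no means of controlling the colours of the cross-edges $\set{w,y}$ and $\set{x,z}$ in both webs at once, which is precisely what condition~(2) of the criterion demands.
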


For the proof let us say, if $t$ is a linear term with $x,y,z\in\Var(t)$, that \emph{$y$ separates $x$ from $z$} in $\web{t}$ if $\TwoGraph{x,y}{r}$ in $\web t$ and $\TwoGraph{y,z}{g}$ in $\web t$.

\begin{proof}
[Proof of Theorem~\ref{thm:medial}]
	We have that \ref{MedialThm-RedPres+Crit}$\implies$\ref{MedialThm-Medial} by Proposition~\ref{MedialCriterion} and \ref{MedialThm-Medial}$\implies$\ref{MedialThm-RedPres+Sound} by inspection of medial, so it suffices to show \ref{MedialThm-RedPres+Sound}$\implies$\ref{MedialThm-RedPres+Crit}.
	For this, assume \ref{MedialThm-RedPres+Sound} and 
	suppose $\TwoGraph{x,y}{g}$ in $\web s$ and $\TwoGraph{x,y}{r}$ in $\web t$, and let $S$ be a minterm of $s$ containing $x$. We must have $S \supsetneq \{x\}$ since $\TwoGraph{x,y}{r}$ in $\web{t}$ and $s\to t$ is sound.\footnote{By Proposition~\ref{prop:sound-minmax} and Theorem~\ref{thm:rel-minmax-webs}, there must a subset of $S$ which is a maximal $\wedge$-clique in $\web{t}$.} Similarly there must be a maxterm $T$ of $t$ containing $y$ such that $T\supsetneq \{y\}$. Now, by \ref{MedialThm-RedPres+Sound}, it must be that $S$ (resp.\ $T$) is also a minterm (resp.\ maxterm) of $t$ (resp.\ $s$),\footnote{Since by~\ref{MedialThm-RedPres+Sound}, $\wedge$-edges (resp.\ $\vee$-edges) are preserved left-to-right (resp.\ right-to-left) and so $\wedge$-cliques (resp. $\vee$-cliques) must be preserved (resp.\ reflected). Of course, these must be maximal by soundness.} and so, by Theorem~\ref{thm:gurevich}, there is some (unique) $z\in S\cap T$ which, by definition, separates $x$ from $y$ in both $\web s$ and $\web t$. By a symmetric argument we obtain a $w$ separating $y$ from $x$ in both $\web s$ and $ \web t$. 
	By construction, $w$ and $z$ must be distinct, so we have the following situation,
	\[
	\FourGraph{x,z,w,y}rggugr \text{ in $\web{s}$}
	\quad\text{and}\quad
	\FourGraph{x,z,w,y}rgrugr \text{ in $\web{t}$.}
	\]
	whence \ref{MedialThm-RedPres+Crit} follows by $P_4$-freeness.
\end{proof}

\begin{corollary}[Canonicity of medial]\label{cor:medial}
  Medial is the only sound
linear inference that is minimal and has property (**).
\end{corollary}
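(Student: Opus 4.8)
The plan is to prove both directions of the characterisation: that $\med$ enjoys the three properties, and that it is the only sound linear inference that does. Throughout I work, as in this section, with constant-free (and negation-free) linear terms, so that relation webs are defined. For the easy direction, $\med$ is sound by inspection of its defining clauses, it is minimal by the proposition asserting that switch and medial are minimal, and it satisfies (**) because, as already observed, medial preserves the $\wedge$-labelled edges of relation webs; this preservation is moreover inherited by every reduction step of the rule, since applying a context or a substitution only duplicates or relocates web edges without changing their labels. Hence $\med$ qualifies.

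For uniqueness, let $\rho\colon l\to r$ be any sound linear inference that is minimal and satisfies (**). The rule itself induces the reduction step $l\annotatedarrow{}{\rho}r$ (empty context, identity substitution), so by (**) every $\wedge$-labelled edge of $\web{l}$ is again $\wedge$-labelled in $\web{r}$. Combined with soundness, $l\leq r$, this is precisely condition~\ref{MedialThm-RedPres+Sound} of Theorem~\ref{thm:medial} for the pair $(l,r)$. I would then invoke that theorem to obtain $l\annotatedarrow{*}{\med}r$, i.e.\ $r$ is reachable from $l$ by a sequence of medial steps.

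It remains to collapse this derivation to a single step using minimality. First I would record that medial preserves the variable set and that every medial step is \emph{strictly} increasing under $\leq$: a medial step turns at least one $\vee$-edge of the web into a $\wedge$-edge, so it genuinely alters the web and hence, by Theorem~\ref{thm:ro-mod-ac}, the computed read-once function; together with soundness this yields $<$ rather than $=$. Consequently, if the derivation $l\annotatedarrow{*}{\med}r$ had length $\geq 2$, its first intermediate term $m$ would be a linear term on the same variable set with $l<m<r$, contradicting minimality of $\rho$. Therefore the derivation has length exactly one, so $\rho$ is realised by a single application of medial; that is, $\rho$ is medial.

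The step I expect to be the main obstacle is the strictness claim underlying this final collapse: one must ensure that no medial step, under \emph{any} matching context and substitution, can leave the computed function unchanged, for otherwise a multi-step medial derivation need not supply a genuine interpolant. I plan to settle this through the web invariant---each medial instance flips a nonempty set of $\vee$-edges to $\wedge$-edges between the (nonempty) variable sets matched to its metavariables---so that Theorem~\ref{thm:ro-mod-ac} forces a strict change of function at every step. A minor point to state explicitly is the convention that a single medial reduction step is itself regarded as the inference medial, so that ``$\rho$ is a single medial step'' is exactly the assertion that $\rho$ is medial.
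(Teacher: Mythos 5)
Your proof is correct and follows essentially the same route as the paper's: invoke Theorem~\ref{thm:medial} to obtain a derivation $l\annotatedarrow{*}{\med}r$ and then use minimality to collapse it to a single medial step. You additionally spell out the strictness of each medial step (via the web and read-once-function invariants), a detail the paper leaves implicit in its one-line appeal to minimality.
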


\begin{proof}
  By Theorem~\ref{thm:medial}, any linear inference satisfying (**) can be derived by medial. The result then follows by minimality of medial.
\end{proof}

Using these results, we are actually able to improve the length bound on nontrivial linear derivations that we proved earlier:

\begin{corollary}\label{cor:cubic-bound}
	The bound in Theorem~\ref{thm:main-result} can be improved to $O(n^3)$.
\end{corollary}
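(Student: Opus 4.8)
The plan is to abandon the lexicographic measure $\critmaxmeas\times\redmeas$ (which only yields $O(n^2)\cdot O(n^2)=O(n^4)$) and instead bound the \emph{total number of edge relabellings} occurring across the webs $\web{t_0},\dots,\web{t_l}$. Since consecutive terms have distinct webs (as recalled just before Theorem~\ref{thm:main-result}), each step changes the label of at least one edge, so $l$ is at most the total number of flips (from $\wedge$-labelled to $\vee$-labelled, and conversely). It therefore suffices to show that the number of $\wedge$-to-$\vee$ flips and the number of $\vee$-to-$\wedge$ flips are each $O(n^3)$. I would fix the critical minterms $S^x_i$ and maxterms $T^x_i$ exactly as in the proof of Theorem~\ref{thm:main-result}, and write $\delta_i:=\critmaxmeas(t_{i+1})-\critmaxmeas(t_i)\ge 0$.

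For the $\wedge$-to-$\vee$ flips I would use a charging argument against the critical maxterms. Suppose at step $i$ an edge $\{x,y\}$ flips, i.e.\ $\redge[]{x}{y}$ in $\web{t_i}$ and $\gedge[]{x}{y}$ in $\web{t_{i+1}}$. By Proposition~\ref{prop:minterm-decrease} there is a minterm $S$ of $t_i$ and a minterm $S'$ of $t_{i+1}$ with $S'\subsetneq S$, where $S'$ cannot contain both $x$ and $y$; hence at least one endpoint $a\in\{x,y\}$ lies in $S\setminus S'$. Applying Proposition~\ref{prop:mindec-maxinc} with this witness gives $T^a_i\subsetneq T^a_{i+1}$. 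Thus every $\wedge$-to-$\vee$ flip can be \emph{charged} to one of its own endpoints $a$ whose critical maxterm strictly grows at that step. Now at a fixed step $i$ the number of variables $a$ with $T^a_i\subsetneq T^a_{i+1}$ is at most $\delta_i$ (each contributes at least $1$ to the sum defining $\critmaxmeas$), and since the web is complete there are at most $n-1$ edges incident to $a$, hence at most $n-1$ flips charged to $a$. So the number of $\wedge$-to-$\vee$ flips at step $i$ is at most $(n-1)\delta_i$, and this telescopes: the total is at most $(n-1)\sum_i\delta_i=(n-1)\big(\critmaxmeas(t_l)-\critmaxmeas(t_0)\big)\le (n-1)n^2=O(n^3)$, using Observation~\ref{obs:critmin}.

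The $\vee$-to-$\wedge$ flips I would then control not by a second charging argument but by conservation of the number of $\wedge$-edges. At each step, (number of $\vee$-to-$\wedge$ flips) minus (number of $\wedge$-to-$\vee$ flips) equals $\redmeas(t_{i+1})-\redmeas(t_i)$; summing over all steps, the total number of $\vee$-to-$\wedge$ flips equals the total number of $\wedge$-to-$\vee$ flips plus $\redmeas(t_l)-\redmeas(t_0)$. By Proposition~\ref{prop:numredgreen} the latter difference is bounded in absolute value by $\frac12 n(n-1)=O(n^2)$, so the total number of $\vee$-to-$\wedge$ flips is also $O(n^3)$. Adding the two contributions bounds the total number of flips, and therefore $l$, by $O(n^3)$, as required.

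The main obstacle — and the point on which the whole scheme hinges — is the charging step: one must guarantee that each $\wedge$-to-$\vee$ flip is charged to an endpoint \emph{of the flipped edge} whose critical maxterm genuinely grows, so that a single unit of growth of $T^a$ absorbs at most $n-1$ flips and the per-step bound $(n-1)\delta_i$ can be telescoped against the globally monotone quantity $\critmaxmeas$. This is exactly why I take $a$ to be an endpoint omitted from $S'$ (which Proposition~\ref{prop:minterm-decrease} guarantees to exist) \emph{before} invoking Proposition~\ref{prop:mindec-maxinc}: that keeps each flip attached to one of its own two incident edges and makes the incidence bound $n-1$ available. As in the cruder bound, the asymmetry that lets us count one direction of flips against maxterm growth and then recover the other direction by conservation is forced by soundness.
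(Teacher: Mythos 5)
Your argument is correct, but it is a genuinely different route from the paper's. The paper keeps the lexicographic scheme of Lemma~\ref{lem:increasingmeasure} and simply swaps the second component: instead of $\redmeas$ (range $O(n^2)$) it uses $\numor$, the number of $\vee$-symbols in the term (range $O(n)$), invoking Theorem~\ref{thm:medial} to argue that any step in which no $\wedge$-edge becomes a $\vee$-edge is derivable by medial alone and hence strictly increases $\numor$; the product of ranges then gives $O(n^2)\cdot O(n)=O(n^3)$. You instead abandon the lexicographic measure and do an amortised count of edge relabellings: you charge each $\wedge$-to-$\vee$ flip to an endpoint of the flipped edge whose critical maxterm grows, telescope against $\critmaxmeas\le n^2$ with the incidence factor $n-1$, and recover the $\vee$-to-$\wedge$ flips by conservation of $\redmeas$ via Proposition~\ref{prop:numredgreen}. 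The one point worth making explicit is that you use Proposition~\ref{prop:mindec-maxinc} in a slightly stronger form than stated --- you need $T^a_i\subsetneq T^a_{i+1}$ for your \emph{designated} $a\in S\setminus S'$, not merely for ``some variable'' --- but this is exactly what the paper's proof of that proposition delivers, since it fixes an arbitrary element of $S_i\setminus S_j$ and derives the conclusion for it. What each approach buys: the paper's proof is a one-line modification once Section~6 is in place, but it leans on the medial criterion (Proposition~\ref{MedialCriterion}), whose proof is imported from elsewhere; your proof is self-contained within the Section~5 machinery and shows that the cubic bound does not actually require the canonicity-of-medial result at all, at the cost of a somewhat longer bookkeeping argument.
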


For the proof, let us first define $\numand (t)$ (resp.\ $\numor (t)$) to be the number of $\wedge$ (resp.\ $\vee$) symbols occurring in $t$.

\begin{proof}[Proof of Corollary~\ref{cor:cubic-bound}]
	Instead of using $\redmeas$ in Lemma~\ref{lem:increasingmeasure}, use $\numor$, which is linear in the size of the term. If no $\wedge$-edge changes to a $\vee$-edge in some step, it follows by Theorem~\ref{thm:medial} that the step is derivable using medial, and so $\numor$ must have strictly increased.
\end{proof}

While we have just shown a fairly succinct form of canonicity for medial, it turns out that we cannot obtain an analogous result for switch: switch is \emph{not} the only sound linear
inference that is minimal and satisfies (*). To see this, simply recall the example of \eqref{eqn:php32-linear} from the Introduction:
\begin{equation*}
\begin{array}{rl}
& ( u \vlor (v \vlan v') )
\vlan 
( ( w \vlan w' ) \vlor (x \vlan x') )
\vlan
((y \vlan y') \vlor z ) 
\\
\noalign{\smallskip}
\to & (u \vlan ( w \vlor y )) \vlor ( w' \vlan y' )
\vlor
(v' \vlan x') \vlor ( (v\vlor x) \vlan z )
\end{array}
\end{equation*}
Notice, however, that this inference does not preserve the number~$\numand$ of conjunction symbols in a term.
In fact, switch is the only nontrivial linear inference we know of that preserves~$\numand$, although there are known trivial examples that even \emph{increase} $\numand$, for instance the ``supermix'' rules from \cite{Das:13:Rewritin:uq} that we considered earlier in Example~\ref{ex:triviality}, \eqref{eqn:smix}:
\[
x \vlan ( y_1 \vlor \vldots \vlor y_n )
\quad \to \quad
x \vlor (y_1 \vlan \vldots \vlan y_n)
\]
This leads us to the following conjecture: 
\begin{conjecture}\label{conj:switch-canon}
	If $s\to t$ is sound, nontrivial, satisfies (*) and $\numand (s) \leq \numand (t)$, then $s\annotatedarrow{*}{\swi} t$. 
\end{conjecture}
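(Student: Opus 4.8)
The plan is to mirror the canonicity argument for medial (Theorem~\ref{thm:medial} and Corollary~\ref{cor:cubic-bound}), exchanging throughout the roles of $\wedge$-edges and minterms with those of $\vee$-edges and maxterms, and to proceed by induction on the quantity $\greenmeas(t)-\greenmeas(s)$, which is non-negative by (*) and strictly decreases with each switch (a switch turns $\wedge$-edges into $\vee$-edges and never the reverse). The base case is $\web s=\web t$, where $s$ and $t$ are $\AC$-equal by Proposition~\ref{prop:web-mod-ac} and we are done. For the inductive step it suffices to produce a single switch redex in $s$ whose contraction $s\annotatedarrow{}{\swi}s'$ leaves $s'\to t$ sound, nontrivial, satisfying (*), and with $\numand(s')\le\numand(t)$. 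Since switch preserves $\numand$, this last point also shows that the conclusion can hold only when $\numand(s)=\numand(t)$, so a preliminary step is to argue that (*) together with nontriviality already forces this equality under the stated inequality.

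The heart of the inductive step is locating the redex, and here I would use minterms and maxterms exactly as in the proof of Theorem~\ref{thm:medial}. Since $\web s\neq\web t$ there is an edge $\{x,z\}$ labelled $\wedge$ in $\web s$ and $\vee$ in $\web t$. The least common connective of $x$ and $z$ in $s$ is $\wedge$, so in the term tree their lowest common ancestor is a $\wedge$-node $N$; writing the subterm at $N$ as $u_1\wedge u_2$ with $x\in\Var(u_1)$ and $z\in\Var(u_2)$, the goal is to exhibit $s$ as containing a subterm of the shape $u_1\wedge(u_2'\vee u_3')$ with $z\in\Var(u_3')$. Any variable $y\in\Var(u_2')$ then separates $x$ from $z$ ($\wedge$ to $x$, $\vee$ to $z$) in both webs, which is precisely the configuration witnessed by a switch; the existence of the separating $\vee$-node can be extracted from a maxterm of $t$ through $\{x,z\}$ together with Gurvich's Theorem~\ref{thm:gurevich} and $P_4$-freeness, as in the separator construction of Theorem~\ref{thm:medial}. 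To guarantee the step is genuinely \emph{towards} $t$, I would choose the redex so that every edge it relabels (the $\Var(u_1)$--$\Var(u_3')$ edges) is already $\vee$-labelled in $\web t$; then no $\vee$-edge of $s'$ is $\wedge$ in $t$, which preserves (*), and since all these edges are $\vee$ in $t$ no minterm of $t$ meets both $\Var(u_1)$ and $\Var(u_3')$, from which a short check via Proposition~\ref{prop:sound-minmax} should yield $s'\le t$.

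The main obstacle is precisely that a switch redex of this ``towards $t$'' form need not exist for the first differing edge one picks: if both $x$ and $z$ are leaf-children of the $\wedge$-node $N$, no rotation turns $\{x,z\}$ green directly, and one must instead first rearrange neighbouring structure, so a naive greedy induction can stall. This is where the hypotheses $\numand(s)\le\numand(t)$ and nontriviality must do essential work, since the example~\eqref{eqn:php32-linear} shows that soundness and (*) alone are insufficient --- and note that \eqref{eqn:php32-linear} is ruled out exactly because it strictly decreases $\numand$. Making this rigorous seems to require either a \emph{switch criterion} playing the role of Proposition~\ref{MedialCriterion} for medial, or an extremal argument showing that when $\numand(s)=\numand(t)$ some relabelled edge always admits the separating $\vee$-node. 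I expect the difficulty to be that $\numand$ is a global count on the cotree rather than a local edge condition, so it resists the purely local, Gurvich-style separator analysis that made the medial case clean; this is, I believe, the gap that keeps the statement a conjecture.
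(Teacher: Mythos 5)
This statement is stated in the paper as a \emph{conjecture}, and the paper offers no proof of it; so there is nothing to compare your attempt against, and the only question is whether your argument closes the problem. It does not, and you say so yourself. The outline is reasonable and mirrors the right template --- induct on $\greenmeas(t)-\greenmeas(s)$, locate a switch redex all of whose relabelled edges are already $\vee$ in $\web t$, and check that soundness, nontriviality and (*) survive the step (nontriviality does survive, since $s\le s'$ makes any triviality of $s'\to t$ a triviality of $s\to t$). But the entire content of the conjecture is concentrated in the one step you leave open: the \emph{existence} of such a ``towards-$t$'' redex. For medial, Theorem~\ref{thm:medial} gets this for free because soundness plus (**) already implies the medial criterion of Proposition~\ref{MedialCriterion}, a purely local, edge-by-edge condition that the Gurvich/$P_4$-freeness separator argument can verify. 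For switch there is no analogous criterion in the paper, and soundness plus (*) demonstrably does \emph{not} suffice (the inference~\eqref{eqn:php32-linear} satisfies both and is not switch-derivable); the only thing excluding that counterexample is the global count $\numand(s)\le\numand(t)$, and you give no mechanism by which a global inequality on connective counts forces the existence of a local redex. Your proposal names this obstacle accurately but does not overcome it.

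There is also a second, smaller unproven claim buried in your ``preliminary step'': that soundness, nontriviality, (*) and $\numand(s)\le\numand(t)$ together force $\numand(s)=\numand(t)$. Since $\swi$ preserves $\numand$, this equality is a \emph{consequence} of the conjecture, but deriving it directly from the hypotheses is itself open --- the paper only records that switch is the sole known nontrivial linear inference preserving $\numand$, not that the hypotheses rule out a strict increase. So the proposal should be read as a plausible attack plan with two acknowledged gaps, not as a proof; the statement remains a conjecture.
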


Notice that this conjecture would already imply our main result, Theorem~\ref{thm:main-result}, since $\numand \times \redmeas$ would be a strictly decreasing measure.
This measure can also be used for the usual proof of termination of $\{\swi ,\med \}$ (constant-free and modulo $\AC$) and also yields a cubic bound on termination.\footnote{In fact, using a different measure, it can also be shown that $\{\swi ,\med \}$ terminates with a quadratic bound.} We point out that, in this work, we have matched that bound for \emph{all} linear derivations that are not trivial.

The supermix rules are also examples of linear inferences that satisfy neither (*) nor (**). However, again, we have not been able to identify any nontrivial examples of this, and we further conjecture the following:
\begin{conjecture}\label{con:minimal}
  There is no nontrivial minimal sound linear inference that satisfies neither \emph{(*)} nor \emph{(**)}. 
\end{conjecture}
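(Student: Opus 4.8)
The plan is to argue the contrapositive: assuming $\rho\colon s\to t$ is sound, nontrivial, and satisfies neither (*) nor (**), I would exhibit a linear term $w$ with $s<w<t$, contradicting minimality. The starting point is to partition the pairs of variables into four classes according to their labels in $\web s$ and $\web t$: $\wedge$ in both, $\vee$ in both, $\wedge$-then-$\vee$, and $\vee$-then-$\wedge$. Failure of (**) says the $\wedge$-then-$\vee$ class is nonempty, and failure of (*) says the $\vee$-then-$\wedge$ class is nonempty. Morally the first class is ``switch-like'' activity (it satisfies (*)) and the second is ``medial-like'' (it satisfies (**)), so the idea is to factor $\rho$ through an intermediate $w$ whose web agrees with $\web s$ except that the $\wedge$-then-$\vee$ edges are relabelled $\vee$: then $s\to w$ is a switch-like phase and $w\to t$ a medial-like phase. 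This is the same factorisation underlying the relationship between Conjecture~\ref{conj:switch-canon} and Theorem~\ref{thm:medial}: the $w\to t$ half would be governed by medial canonicity and the $s\to w$ half by switch canonicity.

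Having fixed a candidate $w$, two things must be checked. First, \emph{realizability}: $\web w$ must be $P_4$-free, so that it is genuinely the web of a linear term (via the $P_4$-freeness characterisation together with Theorem~\ref{thm:ro-mod-ac}). Second, \emph{soundness} of both halves, $s\le w$ and $w\le t$; by Theorem~\ref{thm:rel-minmax-webs} and Proposition~\ref{prop:sound-minmax} this amounts to tracking how maximal $\wedge$- and $\vee$-cliques telescope as the chosen edges are flipped, with Theorem~\ref{thm:gurevich} (singleton intersection of minterms and maxterms) controlling the bookkeeping exactly as in Section~\ref{sect:inference}. Once $w$ is realizable and differs from both $s$ and $t$ as a web, Theorem~\ref{thm:ro-mod-ac} upgrades the inequalities to $s<w<t$, which is the contradiction sought.

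The hard part, and the reason this is stated only as a conjecture, is that the two requirements pull against each other: flipping an arbitrary set of $\wedge$-then-$\vee$ edges almost always destroys $P_4$-freeness, so $w$ as naively defined need not be realizable, and one is forced to flip only a well-chosen nonempty subset, for which $w\le t$ must then be established directly rather than by pure medial-likeness. The flips that \emph{do} preserve realizability appear to be exactly the switch-like moves whose soundness is the content of Conjecture~\ref{conj:switch-canon}. The crux is therefore to show that nontriviality guarantees at least one realizability- and soundness-preserving elementary flip toward $t$. I would attack this by isolating a minimal offending configuration: given a $\wedge$-then-$\vee$ edge and a $\vee$-then-$\wedge$ edge, analyse the induced subweb on the variables involved together with the separators produced by Theorem~\ref{thm:gurevich} (as in the proof of Theorem~\ref{thm:medial}), and argue that either a local switch- or medial-redex is available, yielding $w$, or else a genuine $P_4$ obstruction is forced, which one would aim to show can only occur when $\rho$ is trivial. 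The supermix rules of Example~\ref{ex:triviality}, which fail both (*) and (**) yet are trivial, demonstrate that nontriviality is indispensable here and mark precisely the boundary case that such an argument must exclude.
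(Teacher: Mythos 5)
The statement you are proving is an open conjecture: the paper offers no proof of it, and explicitly says the authors ``have not been able to identify any nontrivial examples'' of inferences failing both (*) and (**), which is why it is \emph{conjectured} rather than proved. So there is nothing in the paper to compare your argument against, and the only question is whether your proposal closes the gap. It does not, and to your credit you say so yourself. The strategy --- contradict minimality by interpolating a linear term $w$ with $s<w<t$, obtained by relabelling some nonempty subset of the $\wedge$-then-$\vee$ edges of $\web{s}$ --- is the natural one, and the bookkeeping around it is right: if such a $w$ exists then it differs from $\web s$ (since (**) fails, some edge is flipped) and from $\web t$ (since (*) fails, some $\vee$-then-$\wedge$ edge is still $\vee$ in $w$), so Theorem~\ref{thm:ro-mod-ac} gives strictness and minimality is contradicted. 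But the entire content of the conjecture is concentrated in the step you defer: showing that nontriviality forces the existence of at least one flip that simultaneously (i) preserves $P_4$-freeness, so that $w$ is realizable as a linear term, and (ii) keeps both halves $s\le w$ and $w\le t$ sound. Your own discussion shows why the naive choice (flip all $\wedge$-then-$\vee$ edges) fails (i), and the repaired version reduces to a local claim that is essentially Conjecture~\ref{conj:switch-canon} plus a case analysis you have not carried out. Reducing one open conjecture to another open conjecture, or to an unproved local lemma of comparable difficulty, is not a proof.

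Two further cautions if you pursue this. First, even soundness of the ``easy-looking'' half is not free: it is not immediate that relabelling a $\wedge$-edge to $\vee$ (when the result is $P_4$-free) yields a $\le$-larger read-once function; this needs an argument via Proposition~\ref{prop:sound-minmax} and Theorem~\ref{thm:rel-minmax-webs}, since a maximal $\wedge$-clique of $\web s$ contained in a given minterm need not contain a clique that remains \emph{maximal} in $\web w$ without leaving that minterm. Second, your proposed endgame --- ``argue that a genuine $P_4$ obstruction can only occur when $\rho$ is trivial'' --- is exactly where the supermix counterexamples live, and you give no mechanism by which nontriviality (a global, semantic condition on minterms and maxterms, cf.\ Proposition~\ref{prop:trivial-minmax}) excludes a purely local graph-theoretic obstruction. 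Until that mechanism is identified, the proposal is a research plan, not a proof.
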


An interesting observation is that Conjecture~\ref{con:minimal} and
Corollary~\ref{cor:medial} together entail that medial is the only
linear inference that allows contraction to be reduced to atomic form. To see what this means, consider again~\eqref{eqn:conred} from
the introduction. The steps marked $\gcd$ are instances of the contraction rule $x\vlor x\to x$. 
If the contractum of such a step is simply a variable, then we call that instance of contraction \emph{atomic}, denoted by
$\acd$ as in \cite{BrunTiu:01:A-Local-:mzF}. 
Dually, the atomic instances of `cocontraction' $x\to \vls(x.x)$, when the redex is simply a variable,
are denoted by $\acu$. 
We say that a linear inference $\rho : l \to
r$ \emph{reduces contraction to atomic form} if, for every
term $t$, we have $t \vlor t \longannotatedarrow{*}{\rho, \acd } t$ and $t
\longannotatedarrow{*}{\rho, \acu } t \vlan t$, modulo $\ACU$.

\begin{conjecture}
  Medial is the only minimal linear inference that reduces contraction to atomic form. More precisely, for every linear
  inference $\rho : l \to r$ that reduces contraction to
  atomic form we have $l \annotatedarrow{*}{\med } r$.
\end{conjecture}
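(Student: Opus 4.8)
The plan is to reduce the statement to the canonicity machinery of Theorem~\ref{thm:medial}. Fix a linear inference $\rho:l\to r$ that reduces contraction to atomic form; recall we are working constant-free throughout this section. The first task is to show that $\rho$ is \emph{nontrivial}: by Propositions~\ref{prop:nontrivial-implies-nonerasing-nonintroducing} and~\ref{prop:neg-elim} any erasing, introducing, or negation-carrying linear rule is trivial, so once nontriviality is established we also get $\Var(l)=\Var(r)=X$ with $l,r$ negation-free, and hence $\web l$ and $\web r$ are defined on the common vertex set $X$. Granting this, $\rho$ is sound so $l\leq r$, and it then suffices, by the implication \ref{MedialThm-RedPres+Sound}$\implies$\ref{MedialThm-Medial} of Theorem~\ref{thm:medial}, to prove that $\rho$ satisfies property (**), i.e.\ that every $\wedge$-edge of $\web l$ is again a $\wedge$-edge of $\web r$; this gives $l\annotatedarrow{*}{\med}r$ directly. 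For the sharper ``only minimal inference'' claim, note that in a constant-free linear context every variable is relevant, so each medial step \emph{strictly} increases the computed function under $\leq$; thus a medial derivation of length $\ge 2$ would exhibit a linear term strictly between $l$ and $r$, contradicting minimality, and $\rho$ must then be a single medial step, i.e.\ medial.

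The key structural fact I would exploit is that a contraction-reduction derivation is \emph{function-preserving}. Since $t\vlor t$ and $t$ both compute $f_t$ (as $f\vlor f=f$) and every step of $t\vlor t\longannotatedarrow{*}{\rho,\acd}t$ is sound, hence monotone non-decreasing, all intermediate terms compute $f_t$ once repeated occurrences are identified. Consequently each line carries exactly the minterms and maxterms of $f_t$, and each $\acd$-step merely fuses two $\vee$-joined occurrences of a single variable; the dual holds for cocontraction. This tightly constrains what the $\rho$-steps are allowed to do, and in particular should already preclude genuinely trivial rules such as the supermix inferences of Example~\ref{ex:triviality}, yielding the nontriviality required above.

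To obtain property (**) I would argue by contraposition: assume $\redge{x}{y}$ in $\web l$ but $\gedge{x}{y}$ in $\web r$ and show $\rho$ cannot reduce contraction. The strategy is to pick a test term $t$ whose web realises the $\wedge$-relationship of $x$ and $y$ (most naturally $t$ built from $l$) and to follow, along the putative derivation $t\vlor t\to\cdots\to t$, an invariant that records for the two occurrences of each variable whether their least common connective is $\wedge$ or $\vee$. Since $\acd$ can only fire on a $\vee$-joined pair of like occurrences, the obstruction to completing the contraction is a pair of occurrences that the available steps are forced to keep $\wedge$-joined; turning the $\wedge$-edge $\{x,y\}$ into a $\vee$-edge should manufacture exactly such a stuck pair.

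The hard part --- and the reason this remains a conjecture --- is the last step: transferring the combinatorial constraint from the \emph{nonlinear} intermediate terms back to the \emph{linear} webs $\web l$ and $\web r$. All of our sharp tools (relation webs, Theorem~\ref{thm:rel-minmax-webs}, Gurvich's Theorem~\ref{thm:gurevich}, and $P_4$-freeness) apply only to linear terms, whereas $t\vlor t$ and every line of the contraction-reduction are genuinely nonlinear, and a single $\rho$-step deep inside such a term can reorganise minterms globally. Designing an invariant that is simultaneously preserved by $\acd$ and by arbitrary in-context $\rho$-steps yet provably violated when a $\wedge$-edge of $\web l$ degrades to a $\vee$-edge of $\web r$ is the main obstacle; making this bridge between the nonlinear and linear settings rigorous is what the proof ultimately requires.
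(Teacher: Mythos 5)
This statement is a \emph{conjecture} in the paper, and the paper's own ``proof'' is explicitly conditional: it derives the claim from Conjecture~\ref{con:minimal}. Your proposal shares the same overall skeleton --- establish that $\rho$ satisfies property (**) (preservation of $\wedge$-edges) and then conclude $l\annotatedarrow{*}{\med}r$ via Theorem~\ref{thm:medial} --- but the two routes to (**) differ. The paper first makes the easy, concrete observation that any $\rho$ reducing contraction to atomic form must \emph{violate} (*) (contracting $t\vlor t$ for $t$ containing a conjunction forces $\vee$-edges between the two copies to become $\wedge$-edges, e.g.\ $(x\vlan y)\vlor(x\vlan y)\to(x\vlor x)\vlan(y\vlor y)$), and then invokes Conjecture~\ref{con:minimal} to rule out the possibility that $\rho$ \emph{also} violates (**). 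You instead attack (**) head-on by contraposition, proposing an invariant on the nonlinear intermediate terms of the derivation $t\vlor t\longannotatedarrow{*}{\rho,\acd}t$, and you correctly identify that the inability to transfer constraints from these nonlinear terms back to the linear webs $\web l$, $\web r$ is exactly why this remains open. So your gap is real, but it is honestly flagged and sits in the same place the paper parks its own assumption; your approach, if completed, would be strictly stronger since it would bypass Conjecture~\ref{con:minimal} for this special class of inferences.

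Two smaller points. First, your nontriviality argument is itself only a sketch: the fact that every line of the derivation computes $f_t$ does not by itself preclude $\rho$ being a trivial rule (a trivial rule can still be applied function-preservingly inside a context), so ``should already preclude genuinely trivial rules'' is an unproven claim, and nontriviality is needed both to apply Conjecture~\ref{con:minimal} in the paper's route and to make your webs well-defined on a common variable set. Second, your closing minimality argument (a medial derivation of length $\ge 2$ yields an intermediate term contradicting minimality, so $\rho$ is a single medial step) is sound and matches the logic of Corollary~\ref{cor:medial}; it correctly handles the ``only minimal inference'' phrasing, which the paper's conditional proof leaves implicit.
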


\begin{proof}
[Proof using Conjecture~\ref{con:minimal}]
  Assume $t \vlor t \longannotatedarrow{*}{\rho, \acd } t$ and $t
  \longannotatedarrow{*}{\rho, \acu } t \vlan t$ modulo $\ACU$, for every term $t$. Since $t$ can contain
  $\vee$ and $\wedge$, it must be the case that $\rho$ replaces
  $\vee$-edges in $\web{l}$ by $\wedge$-edges in $\web{r}$. By
  Conjecture~\ref{con:minimal} $\rho$ does not replace $\wedge$-edges in
  $\web{l}$ by $\vee$-edges in $\web{r}$. By Theorem~\ref{thm:medial}
  we must have $l \annotatedarrow{*}{\med }
  r$.
\end{proof}

\newcommand{\one}{}
\newcommand{\two}{}
\newcommand{\mk}[1]{{#1}^{\scriptscriptstyle\bullet}}
\newcommand{\rwdcd}{{{\mathsf w}{\downarrow}{\hbox{-}}{\mathsf c}{\downarrow}}}
\newcommand{\rwdiu}{{{\mathsf w}{\downarrow}{\hbox{-}}{\mathsf i}{\uparrow  }}}
\newcommand{\rwdwu}{{{\mathsf w}{\downarrow}{\hbox{-}}{\mathsf w}{\uparrow  }}}
\newcommand{\rwdcu}{{{\mathsf w}{\downarrow}{\hbox{-}}{\mathsf c}{\uparrow  }}}
\newcommand{\rcuwu}{{{\mathsf c}{\uparrow  }{\hbox{-}}{\mathsf w}{\uparrow  }}}
\newcommand{\rcdwu}{{{\mathsf c}{\downarrow}{\hbox{-}}{\mathsf w}{\uparrow  }}}
\newcommand{\rcdiu}{{{\mathsf c}{\downarrow}{\hbox{-}}{\mathsf i}{\uparrow  }}}
\newcommand{\rcdcu}{{{\mathsf c}{\downarrow}{\hbox{-}}{\mathsf c}{\uparrow  }}}
\newcommand{\ridwu}{{{\mathsf i}{\downarrow}{\hbox{-}}{\mathsf w}{\uparrow  }}}
\newcommand{\ridcu}{{{\mathsf i}{\downarrow}{\hbox{-}}{\mathsf c}{\uparrow  }}}
\newcommand{\rcucd}{{{\mathsf c}{\uparrow}{\hbox{-}}{\mathsf c}{\downarrow }}}

\section{On the normalisation of deep inference proofs}
\label{sect:norm-di}

Another application of our results is to the normalisation of deep inference proofs. This is typically done via rewriting on certain graphs extracted from derivations, known as \emph{atomic flows}~\cite{GuglGund:07:Normalis:lr,gug:gun:str:LICS10}. The main sources of complexity here are `contraction loops', and so a lot of effort has gone into the question of whether such features can be eliminated. A consequence of our main result is that this is 
impossible for a large class of deep inference systems.

\newcommand{\Mon}{\text{\textsc{Mon}}}
\newcommand{\MSKS}{\mathsf{MSKS}}
\newcommand{\GNorm}{\mathsf{norm}}
\newcommand{\flow}{\mathit{fl}}

We will now only consider rewriting systems on positive terms, and then make some remarks about negative rules at the end of this section. We consider systems with the standard structural rules of deep inference, extended by an arbitrary (polynomial-time decidable) set of linear rules.

A formal definition of atomic flows can be found in \cite{GuglGund:07:Normalis:lr}, where they were first presented, and an alternative presentation can be found in~\cite{gug:gun:str:LICS10}. We give an informal definition below which is sufficient for our purposes.

\newcommand{\vcengraph}[1]{\raisebox{-0.5\height}{#1}}

\begin{definition}
[Structural rules and atomic flows]
We define the system $\gc  \gw$ as follows:
\[
\begin{array}{rclrcl}
\gwd & :&   x \to  x \vlor y
\quad & \quad
\gwu &: & x \vlan y \to x
\\
\gcu & : & x \to x \vlan x
\quad & \quad 
\gcdown & : & x \vlor x \to x
\end{array}
\]
If $S$ is the extension of $\gc  \gw  $ by a set of linear rules and $\pi$ is an $S$-derivation (written as a vertical list), then the \emph{atomic flow} of $\pi$, denoted $\flow (\pi)$, is the (downwards directed) graph obtained by tracing the paths of each variable through the derivation, designating nodes at $\gc  \gw  $ steps as follows:
\[
\begin{array}{rccrcc}
\gwd & :&  \vcengraph{\includegraphics{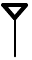}}
\quad & \quad
\gwu &: & \vcengraph{\includegraphics{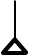}}
\\
\noalign{\bigskip}
\gcu & : & \vcengraph{\includegraphics{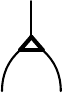}}
\quad & \quad 
\gcdown & : & \vcengraph{\includegraphics{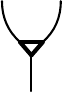}}
\end{array}
\]

\end{definition}

\begin{example}
		Consider the system $\MSKS$ obtained by extending $\gc  \gw  $ by the rules switch and medial, from Definition~\ref{def:sm}, as well as rules $\ACU$ from Section~\ref{sect:prelim-rew} for associativity, commutativity and constants. This is equivalent to the monotone fragment of the common deep inference system $\SKS$~\cite{BrunTiu:01:A-Local-:mzF}.

Here is an example of an $\MSKS$ rewrite derivation, with redexes underlined, and its atomic flow. The colours are used to help the reader associate edges with variable occurrences in the derivation. 
\begin{equation}
\label{eqn:example-der+flow}
\vcengraph{\includegraphics{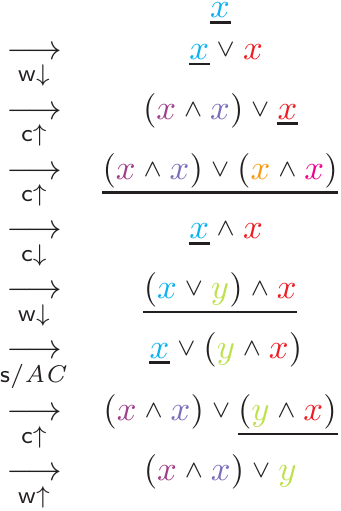}}
\qquad\qquad
\vcengraph{\includegraphics{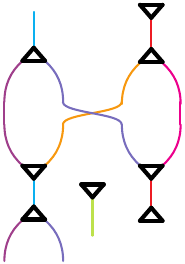}}
\end{equation}
\end{example}

\begin{definition}
	[Flow rewriting systems]
	A \emph{flow rewriting system} (FRS) is a set of graph rewriting rules on atomic flows. We say that a FRS $R$ \emph{lifts} to a TRS $S$ if, for every $S$-derivation $\pi : s \annotatedarrow{*}{S} t$ and reduction step $\flow (\pi) \to \phi$ there is a $S$-derivation $\pi' : s \annotatedarrow{*}{S} t$ with $\flow (\pi') = \phi$.
\end{definition}

\begin{example}
 Consider the following FRS, which is a subset of rules occurring in \cite{GuglGund:07:Normalis:lr,gug:gun:str:LICS10} and which is called $\GNorm$ in \cite{Das:13:The-Pige:fk}.
	\begin{equation}
	\label{eqn:frs-norm}
\vcengraph{	\includegraphics[scale=0.9]{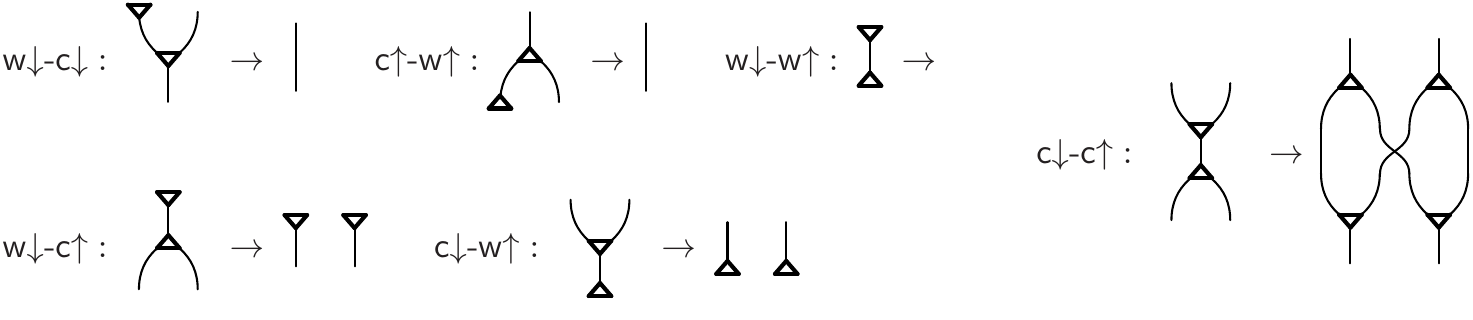}}
	\end{equation}
			We have essentially the following result from \cite{GuglGund:07:Normalis:lr}:
		\begin{proposition}
		\label{prop:norm-lifts-to-msks-extensions}
		$\GNorm$ lifts to any extension of $\MSKS$ by linear rules.
		\end{proposition}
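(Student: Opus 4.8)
The plan is to prove the proposition by reducing it to a rule-by-rule check: since a reduction step $\flow(\pi)\to\phi$ is, by definition of a FRS, a single application of one rule of $\GNorm$, it suffices to show that each rule of $\GNorm$ in \eqref{eqn:frs-norm} lifts. The guiding observation is that atomic flows are \emph{transparent} to linear rules. By the definition of $\flow$, only the structural steps $\gwd,\gwu,\gcu,\gcdown$ contribute nodes to the flow, whereas every linear step — switch, medial, the $\ACU$ equations, and crucially any rule of the linear extension — contributes no node at all, merely rerouting the atom-occurrence edges as they are threaded through the derivation. Consequently a $\GNorm$-step reconfigures only the structural nodes of $\flow(\pi)$ together with their incident edges, and lifting it amounts to reconfiguring the corresponding structural steps of $\pi$ while leaving its premise $s$, its conclusion $t$, and all linear steps intact.

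First I would recall, for each rule of $\GNorm$, the explicit local derivation transformation realising it; these are precisely the transformations of \cite{GuglGund:07:Normalis:lr} (see also \cite{gug:gun:str:LICS10}), which replace a bounded pattern of $\gc\gw$-steps by another such pattern with the same dangling edges, using only the structural rules and the switch, medial and $\ACU$ machinery of $\MSKS$ for bookkeeping in the surrounding context. Because each such transformation preserves the dangling edges of the affected region, it preserves the global premise and conclusion, so the resulting derivation $\pi'$ still runs from $s$ to $t$; and because it installs exactly the contractum pattern, we get $\flow(\pi')=\phi$. This already establishes the proposition for $\MSKS$ itself.

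The only new content relative to \cite{GuglGund:07:Normalis:lr} is that these liftings survive an arbitrary linear extension, and this is where I would focus the argument. Each transformation above is insensitive to the rules of the extension: it manipulates structural steps and, for bookkeeping, only switch, medial and $\ACU$ steps, and never inspects a rule of the extension. When realising a $\GNorm$-step in situ, one may nonetheless need to commute a $\gc\gw$-step past an intervening linear step, since a flow edge between two structural nodes corresponds to an atom occurrence threaded through possibly many linear steps. The main obstacle is precisely to make this commutation argument uniform over \emph{all} sound linear rules. What rescues it is linearity together with transparency: a linear rule carries no flow node and does not genuinely duplicate or destroy the identity of an atom occurrence, so absorbing, copying, or deleting the linear step along with its affected subderivation when a $\gc\gw$-step passes it changes neither the endpoints of $\pi$ nor its flow away from the redex. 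Since each case reduces to tracking a single occurrence through one redex and makes no reference to the particular linear rule involved, the lifting holds for any extension of $\MSKS$ by linear rules, as required.
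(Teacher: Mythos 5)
First, a point of comparison: the paper does not prove this proposition at all. It is imported from \cite{GuglGund:07:Normalis:lr} with the explicit remark that ``the proof of this is beyond the scope of this work'', the only hint given being that it crucially relies on switch, medial and $\ACU$ to make the $\gw$ and $\gc$ rules atomic and thereby let those steps permute more freely. Your overall strategy --- a rule-by-rule check of $\GNorm$, plus the observation that only structural steps contribute flow nodes --- is consistent with that hint, and the reduction to showing that the known $\MSKS$ liftings survive an arbitrary linear extension is the right way to frame the problem.

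The justification you give for that surviving step, however, is too coarse to be correct. You argue that a $\gc\gw$-step can always be commuted past an intervening linear step without changing the endpoints or the flow away from the redex, and that this ``makes no reference to the particular linear rule involved''. If that uniform transparency-plus-commutation argument were sound as stated, it would equally lift the contraction-loop rule $\rcucd$, and Theorem~\ref{thm:no-sound-frs-elim-loops} of this very paper shows that $\rcucd$ lifts to \emph{no} sound linear extension of $\MSKS$ unless $\coNP=\NP$. So the liftability of $\GNorm$ cannot follow from a generic commutation principle; it is a property of the specific rules in $\GNorm$. For the weakening reductions the standard argument is in fact not a permutation at all but a global substitution: one replaces the atom occurrence traced from the weakening node by the appropriate unit throughout the intervening subderivation and cleans up with $\ACU$, and this survives an arbitrary linear (indeed arbitrary) extension because instances of a rewrite rule are closed under substitution --- that is the mechanism your proof needs to name. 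Two further inaccuracies: naively permuting a (co)contraction past a later step applied inside only one copy of the (co)contracted subterm changes the conclusion of the derivation, so even a single commutation is not free; and your claim that a linear rule ``does not genuinely duplicate or destroy the identity of an atom occurrence'' is contradicted by Proposition~\ref{prop:nontrivial-implies-nonerasing-nonintroducing} and the surrounding discussion, which note that left- and right-linear rules may still erase or introduce variables.
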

		The proof of this is beyond the scope of this work, but crucially relies on the presence of switch, medial and $\ACU$ to make the $\gw  $ and $\gc  $ rules atomic, cf.~\ref{eqn:conred}, and thereby allow these steps to permute more freely in a derivation.
		
	 For example, here is a $\GNorm$-derivation that normalises the flow from \eqref{eqn:example-der+flow},
	 \begin{equation}
	 \label{eqn:example-flow-rew-der}
\vcengraph{	 \includegraphics{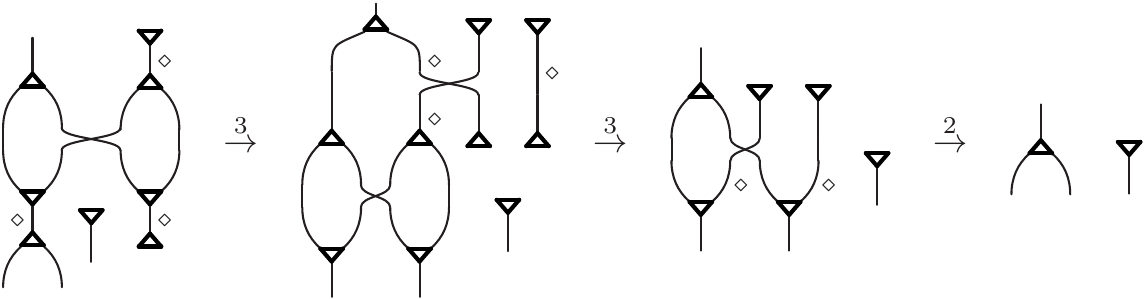}}
	 \end{equation}
	 where redexes are marked by $\diamond$.
\end{example}

	$\GNorm$ is strongly normalising, as implied by results in \cite{GuglGund:07:Normalis:lr}. In the works \cite{Das:12:Complexi:kx} and \cite{Das:13:Some-Res:fk} the main source of complexity of (weak) normalisation under $\GNorm$ is the presence of \emph{contraction loops}. In their absence the time complexity of normalisation is polynomially bounded.
	
	\begin{definition}
		[Contraction loops, from \cite{Das:12:Complexi:kx}]
		Given a flow $\phi$, a \emph{contraction loop} is a pair of nodes $(\nu_1 , \nu_2) $ such that there are two distinct paths from $\nu_1$ to $\nu_2$ in $\phi$.
	\end{definition}
	
	It turns out that our previous results imply that no deep inference system that extends $\MSKS$ by linear rules can admit a flow-rewriting normalisation procedure that eliminates contraction loops:
	
	\begin{theorem}
	\label{thm:no-sound-frs-elim-loops}
		Let $R$ be a FRS such that, for any flow $\phi$, there is some flow $\psi$ free of contraction loops such that $\phi \annotatedarrow{*}{R} \psi$. Then $R$ lifts to no sound system extending $\MSKS$ by linear rules unless $\coNP = \NP$.
	\end{theorem}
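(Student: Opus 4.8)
The plan is to establish the statement directly: I will assume that such an $R$ \emph{does} lift to some sound system $S$ extending $\MSKS$ by linear rules, and from this manufacture a sound linear system that is complete for $\Lin$, which by Corollary~\ref{cor:nolinsys} is impossible unless $\coNP=\NP$. Since $\Lin$ is $\coNP$-complete (Proposition~\ref{prop:coNP}), this is exactly the desired conclusion. The whole point of the loop-elimination hypothesis is that it lets us convert arbitrary (contraction-using) derivations of linear inferences into genuinely \emph{linear} derivations.

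First I would fix a linear inference and reduce to its nontrivial core. Given $s\to t\in\Lin$ on $n$ variables, Lemma~\ref{lem:triv-aside} together with Propositions~\ref{prop:nontrivial-implies-nonerasing-nonintroducing} and~\ref{prop:neg-elim} lets me pass to a constant-free, negation-free, nontrivial $s'\to t'$ with $\Var(s')=\Var(t')$ and each variable occurring exactly once on each side. As $\MSKS$ is complete for monotone implications and $\MSKS\subseteq S$, there is \emph{some} $S$-derivation $\pi\colon s'\annotatedarrow{*}{S}t'$, of possibly enormous length; what matters is its existence, not its size. I then record its atomic flow $\flow(\pi)$, which captures exactly the contraction, cocontraction, weakening and coweakening behaviour of $\pi$.

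The heart of the argument is to push this flow to the trivial flow in two stages, lifting each stage back to $S$. By hypothesis $R$ reduces $\flow(\pi)$ to a flow $\psi$ free of contraction loops, and since $R$ lifts to $S$ this realises as $\pi_1$ with $\flow(\pi_1)=\psi$. I would then continue by the normalisation system $\GNorm$ of~\eqref{eqn:frs-norm}, which also lifts to $S$ by Proposition~\ref{prop:norm-lifts-to-msks-extensions}. The key structural claim is that a loop-free flow between \emph{linear} endpoints $\GNorm$-normalises to the trivial flow of identity wires: with no contraction loops there is no split-then-merge, and because $s',t'$ are linear and nontrivial there is no genuine weakening or coweakening at the boundary, so every surviving cocontract/coweaken (and weaken/contract) branch is a bureaucratic redex that $\GNorm$ removes. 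Lifting this normalisation yields an $S$-derivation $\pi_2\colon s'\annotatedarrow{*}{S}t'$ whose flow has no nodes at all; since nodes occur precisely at $\gc\gw$ steps, $\pi_2$ contains no contraction or weakening and is therefore a derivation in the linear fragment $S_{\mathrm{lin}}=\{\swi,\med\}\cup\ACU\cup(\text{the extra linear rules of }S)$, which is itself a sound linear TRS (it is polynomial-time decidable, being $S$ with $\gc\gw$ deleted).

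Combining these, every nontrivial linear inference is derivable in $S_{\mathrm{lin}}$, and re-attaching trivialities via Lemma~\ref{lem:triv-aside} (whose derivations use only $\swi,\med$, already in $S_{\mathrm{lin}}$) shows that $S_{\mathrm{lin}}$ is a sound linear system complete for all of $\Lin$; Corollary~\ref{cor:nolinsys} then forces $\coNP=\NP$. I expect the main obstacle to be the structural claim of the third paragraph, that a loop-free flow between linear endpoints normalises to the trivial flow: this is where the polynomial normalisation theory of atomic flows in the absence of contraction loops, from \cite{Das:12:Complexi:kx} and \cite{Das:13:Some-Res:fk}, must be combined with the linearity of $s'$ and $t'$ to exclude any surviving cocontraction. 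A secondary delicate point is checking that $\MSKS$ is genuinely complete for monotone implications, and that $R$ followed by $\GNorm$ can be applied in sequence while each step remains liftable to $S$.
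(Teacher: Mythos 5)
Your proposal is correct and follows essentially the same route as the paper: pass to a nontrivial constant- and negation-free core via Lemma~\ref{lem:triv-aside}, take an $S$-derivation of it, normalise its atomic flow under $R$ and then $\GNorm$, and use loop-freeness, linearity of the endpoints, nontriviality and $\GNorm$-normality to conclude that the resulting flow has no nodes, so the lifted derivation is linear. The only (cosmetic) difference is the final packaging: the paper applies Theorem~\ref{thm:main-result} directly to bound the length of that linear derivation and read off an $\NP$ algorithm for $\Lin$, whereas you bundle the same facts into a sound complete linear system $S_{\mathrm{lin}}$ and invoke Corollary~\ref{cor:nolinsys}, which rests on the same chain of reasoning.
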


	Before giving the proof, let us first make the following observation:
	\begin{proposition}
	\label{prop:norm-preserves-loop-freeness}
	If a flow $\phi$ is free of contraction loops and $\phi \longannotatedarrow{*}{\GNorm} \psi$, then $\psi$ is also free of contraction loops.
	\end{proposition}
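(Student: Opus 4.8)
The plan is to argue by induction on the length of the derivation $\phi \longannotatedarrow{*}{\GNorm} \psi$, so that it suffices to treat a single reduction step $\phi \annotatedarrow{}{R} \psi$ for a rule $R$ among those of $\GNorm$ displayed in~\eqref{eqn:frs-norm}. For the single step I would prove the contrapositive: assuming $\psi$ contains a contraction loop, I would exhibit one already in $\phi$. Since $\phi$ is loop-free by hypothesis this is impossible, so $\psi$ is loop-free, and the induction then closes.

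For the single step the key point is that $\GNorm$ acts \emph{locally}: it replaces an occurrence of a rule's left-hand pattern $L$ inside $\phi$ by the corresponding right-hand pattern $M$, leaving the surrounding flow (the \emph{context}) untouched and identifying the boundary edges (\emph{ports}) of $L$ with those of $M$. A contraction loop in $\psi$ is a pair of distinct directed paths between two nodes $\nu_1,\nu_2$. Each such path factors into maximal segments lying in the context and segments that traverse the replaced region $M$ between two ports. The context segments are common to $\phi$ and $\psi$, so the entire analysis reduces to understanding how directed paths between ports behave under the replacement of $L$ by $M$.

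The technical heart is then a finite, rule-by-rule verification of the following port-path claim: for every rule $L\to M$ of $\GNorm$ and every ordered pair of ports $(p,q)$, if there are two distinct directed paths from $p$ to $q$ through $M$, then there are already two distinct directed paths from $p$ to $q$ through $L$. Granting this, the two distinct paths between $\nu_1,\nu_2$ in $\psi$ can be pulled back segment by segment to two paths between $\nu_1,\nu_2$ in $\phi$: where the two $\psi$-paths first differ, they either differ in a (shared) context segment, in which case the pre-images differ there automatically, or in a segment crossing $M$ between the same two ports, in which case the claim supplies two distinct pre-image segments in $L$. Because the port identifications and the context are fixed, these pre-images concatenate into two genuinely distinct paths, yielding a contraction loop in $\phi$.

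The main obstacle I anticipate is the family of rules in which cocontraction $\gcu$ and contraction $\gcdown$ interact, since these are the only $\GNorm$ rules capable of \emph{duplicating} substructure when the two branches of a $\gcu$ are permuted past a $\gcdown$. There one must check carefully that the replacement never introduces a fresh branch-then-merge configuration, i.e.\ that any diamond of distinct paths appearing in $M$ is the image of a diamond already present in $L$; the genuinely loop-\emph{removing} rule, in which a $\gcu$-above-$\gcdown$ diamond collapses to a single edge, lies in the harmless direction and only destroys paths. By contrast the weakening rules merely delete nodes or edges, so they can only destroy paths between ports and their cases are routine. Assembling these verifications completes the single-step case, and hence the proposition.
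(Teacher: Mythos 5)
Your proposal is correct and follows essentially the same route as the paper, which itself only offers a one-line proof sketch (``induction on the length of a $\GNorm$-derivation under a careful analysis of the reduction steps''); your writeup simply fills in the locality/port-path bookkeeping that the paper leaves implicit, and correctly isolates the contraction--cocontraction permutation as the only delicate case. The one spot worth tightening is that a contraction loop in $\psi$ may have an endpoint among the freshly created nodes inside the replaced pattern $M$, so the port-to-port claim should be strengthened to also say that any branching from an internal node of $M$ out to two ports is already witnessed by a node of $L$ --- but this holds for every $\GNorm$ rule and does not change the argument.
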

	\begin{proof}
	[Proof sketch]
	By induction on the length of a $\GNorm$-derivation under a careful analysis of the reduction steps in $\GNorm$.
	\end{proof}
	
	We can now give a proof of the theorem above.

	\begin{proof}
	[Proof of Theorem~\ref{thm:no-sound-frs-elim-loops}]
          Let us assume that $R$ lifts to such a system $S$ and show that $\coNP = \NP$. Let $s \to t$ be an arbitrary linear inference and let $s', t' , u$ be linear terms obtained by Lemma~\ref{lem:triv-aside}. By completeness of $S$ let $\pi : s' \annotatedarrow{*}{S} t'$ and let $\pi': s' \annotatedarrow{*}{S}t'$ be obtained by first reducing $\flow(\pi)$ under $R$ to a flow free of contraction-loops and then to a normal form under $\GNorm$,
          and finally lifting the resulting derivations to $S$ by assumption and Proposition~\ref{prop:norm-lifts-to-msks-extensions}. Notice that $\flow(\pi')$ is free of contraction loops by assumption and Proposition~\ref{prop:norm-preserves-loop-freeness}.
		
		First we show that $\flow (\pi')$ must be free of $\gcdown$ and $\gcu $ nodes.
		 Consider a topmost $\gcdown $ node and the maximal paths leading to its upper edges. Since $\flow (\pi') $ is free of contraction loops we can assume these two paths are disjoint. If one of the paths begins with a $\gwd $ node then there must be either a $\rwdcd$ or $\rwdcu$ redex in $\flow(\pi')$, contradicting normality under $\GNorm$. Therefore both paths must begin with variables from $s'$, contradicting linearity of $s'$. The argument for $\gcu $ is similar, by consideration of a bottommost such node.
		
		Now we show that $\flow (\pi')$ is free of $\gwd $ and $\gwu $ nodes. 
		Suppose there is a $\gwu $ node and consider the maximal path leading to its edge.
		 This cannot be connected to any other node since this would yield a redex. Therefore this path must begin from some variable $x$ of $s'$. Consequently the occurrence of $x$ in $t'$ must originate from a $\gwd $ node.\footnote{Recall that we already have that there are no $\gcdown $ or $\gcu $ nodes, so this follows immediately.} However this would imply that $s'\to t'$ is trivial at $x$, contradicting the fact that $s' \to t'$ is nontrivial.
		
		Therefore $\flow(\pi')$ is just a flow of simple edges, and so $\pi'$ is linear. Since it also derives a nontrivial linear inference, it must have polynomial length by Theorem~\ref{thm:main-result}. Finally, by Lemma~\ref{lem:triv-aside}, this means that there is a polynomial-size $S$-derivation of $s \to t$. Since the choice of this linear inference was arbitrary, we thus have an $\NP$ algorithm for $\Lin$.
	\end{proof}

In particular we can conclude that a particularly natural FRS for eliminating contraction loops cannot be correct for a large class of deep inference systems, partially answering questions occurring in previous works and correspondences:

\begin{corollary}
The following flow-rewriting rule,
	\[
\rcucd \quad : \quad 
\vcengraph{\includegraphics{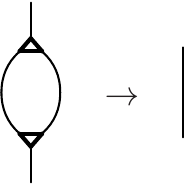}}
	\]
lifts to no sound system extending $\MSKS$ by linear rules unless $\coNP = \NP$.
\end{corollary}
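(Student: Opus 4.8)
The plan is to obtain this corollary as an instance of Theorem~\ref{thm:no-sound-frs-elim-loops}, taking $R$ to be the singleton flow-rewriting system $\{\rcucd\}$. The theorem already does all the proof-theoretic work, so the only thing left to check is its hypothesis applied to this particular $R$: that for \emph{every} atomic flow $\phi$ there is a flow $\psi$ free of contraction loops with $\phi\annotatedarrow{*}{\rcucd}\psi$. Once this is verified, Theorem~\ref{thm:no-sound-frs-elim-loops} immediately yields that $\rcucd$ lifts to no sound system extending $\MSKS$ by linear rules unless $\coNP=\NP$.

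First I would pin down the shape of a contraction loop in a $\gc\gw$-flow. If a pair $(\nu_1,\nu_2)$ admits two distinct directed paths, then $\nu_1$ must have out-degree at least $2$ and $\nu_2$ must have in-degree at least $2$. Inspecting the four node types, the only node with out-degree $2$ is a cocontraction $\gcu$, and the only node with in-degree $2$ is a contraction $\gcdown$, while the weakening nodes $\gwd,\gwu$ have a single incident edge and can be neither branch nor merge points. Hence every contraction loop diverges at some $\gcu$ node and reconverges at some $\gcdown$ node — exactly the cocontraction-above-contraction configuration that $\rcucd$ is designed to rewrite.

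Next I would show that any flow carrying a contraction loop exposes a $\rcucd$-redex. Among all contraction loops choose one that is \emph{minimal}, say minimizing the combined length of the two witnessing paths. Any internal branch point on such a path would be a further $\gcu$ node and any internal merge point a further $\gcdown$ node, and in either case one could extract a strictly shorter loop, contradicting minimality. Thus the two witnessing paths are bare edges, $\nu_1$ sits directly above $\nu_2$ sharing both edges, and this is precisely a $\rcucd$-redex. It then suffices to iterate: firing $\rcucd$ at a minimal loop collapses the targeted $\gcu$-$\gcdown$ pair to a single edge and introduces no new nodes, so the number of $\gcu$ nodes strictly decreases. Since this count is a well-founded measure that never increases under $\rcucd$, the reduction terminates, and by the previous paragraph any $\rcucd$-normal flow is free of contraction loops. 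This exhibits the required $\psi$ and discharges the hypothesis of Theorem~\ref{thm:no-sound-frs-elim-loops}.

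The main obstacle is the minimality argument of the third paragraph, since it is here that one must rule out a contraction loop that stubbornly avoids presenting a local $\rcucd$-redex: one has to confirm by direct inspection of the node shapes that an internal branch or merge point genuinely yields a shorter loop, and that collapsing a minimal loop does not inadvertently create a new one (which would threaten termination of the chosen measure). Both are finite local graph analyses that I would carry out on the explicit reduction pattern drawn in the figure; alternatively, one may invoke the termination behaviour of these flow-rewriting systems already established in the cited literature. Everything else is a routine specialisation of Theorem~\ref{thm:no-sound-frs-elim-loops}.
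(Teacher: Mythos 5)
You have the right high-level strategy---specialise Theorem~\ref{thm:no-sound-frs-elim-loops}---but the verification of its hypothesis for $R=\{\rcucd\}$ contains a genuine gap: it is not true that a minimal contraction loop must present a $\rcucd$-redex. Your minimality argument only rules out internal nodes at which the two witnessing paths themselves branch apart or merge together; it does not rule out \emph{pass-through} nodes whose extra edge connects to the rest of the flow. Concretely, take a $\gcu$ node $\nu_1$ whose two lower edges go to two \emph{distinct} $\gcdown$ nodes $\mu$ and $\nu_2$, where the lower edge of $\mu$ is an upper edge of $\nu_2$ and the second upper edge of $\mu$ comes from, say, a $\gwd$ node. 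Then $(\nu_1,\nu_2)$ is a contraction loop, it is minimal (neither $(\nu_1,\mu)$ nor $(\mu,\nu_2)$ admits two distinct paths), yet no $\rcucd$-redex occurs anywhere, so this flow is $\rcucd$-normal while still carrying a loop. Hence $\{\rcucd\}$ alone does not satisfy the hypothesis of the theorem and your claimed $\psi$ need not exist.

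This is exactly why the paper does not apply the theorem to $\{\rcucd\}$ directly. It instead proves, in the Proposition following the corollary, that every flow reduces to a loop-free one in the larger system $\rcucd+\rcdcu+\mathsf{assoc}$, where $\mathsf{assoc}$ is the equivalence generated by reassociating trees of $\gcdown$ (resp.\ $\gcu$) nodes; in the example above, $\mathsf{assoc}$ reassociates the $\gcdown$-tree formed by $\mu$ and $\nu_2$ so that the two edges leaving $\nu_1$ meet at a single $\gcdown$ node, exposing a $\rcucd$-redex. Termination is then argued modulo $\mathsf{assoc}$ with a multiset-of-weights measure rather than your node count (which is fine for $\rcucd$ alone but must also be checked against $\rcdcu$ and preserved by $\mathsf{assoc}$). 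The corollary can still be stated for $\rcucd$ alone because the extra ingredients lift unconditionally: $\mathsf{assoc}$ lifts to any extension of $\MSKS$ by linear rules (part~1 of that Proposition) and the remaining reductions are covered by $\GNorm$ and Proposition~\ref{prop:norm-lifts-to-msks-extensions}; so if $\rcucd$ lifted to a sound such system, the whole combined FRS would, and Theorem~\ref{thm:no-sound-frs-elim-loops} would give $\coNP=\NP$. To repair your argument you must either incorporate $\mathsf{assoc}$ (and $\rcdcu$) into the reduction and check that they lift, or otherwise handle pass-through nodes; the bare minimality argument does not close the gap.
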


\renewcommand{\assoc}{\mathsf{assoc}}

The proof follows immediately from Theorem~\ref{thm:no-sound-frs-elim-loops} and the following observations:
\begin{proposition}
We have the following:
\begin{enumerate}
\item\label{item:con-assoc-lifts} The equivalence relation $\assoc$ generated from the following equations,
\[
\includegraphics{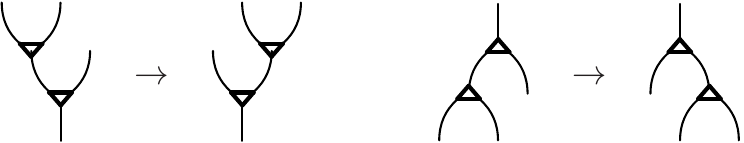}
\]
lifts to any extension of $\MSKS$ by linear rules.
\item\label{item:rcucd+assoc-elim-loops} Any flow can be reduced in $\rcucd + \rcdcu+ \assoc$ to one free of contraction loops. %
\end{enumerate}
\end{proposition}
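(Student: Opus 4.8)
The plan is to treat the two parts by quite different means: part~\eqref{item:con-assoc-lifts} is a lifting argument that reduces to associativity of the connectives, while part~\eqref{item:rcucd+assoc-elim-loops} is a termination statement about flow rewriting.

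For part~\eqref{item:con-assoc-lifts}, I would observe that each defining equation of $\assoc$ relates two flows that differ only in the shape of a cascade of $\gcdown$-nodes, or dually of $\gcu$-nodes. In a derivation $\pi$, such a cascade comes from a block of consecutive contraction steps that merge several occurrences of a common subterm into one (dually, a block of cocontraction steps splitting one occurrence into several). The essential point is that the way in which these binary merges are nested is immaterial to the source and target of the block: any two nestings of a cascade of $\gcdown$-steps differ only in intermediate terms that are equal modulo associativity of $\vee$, and symmetrically for $\gcu$-steps and $\wedge$. Since $\ACU$ is part of $\MSKS$, both nestings are legitimate fragments of an $S$-derivation with identical endpoints. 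Thus, given any $S$-derivation $\pi$ with an $\assoc$-reduction $\flow(\pi)\to\phi$, I would replace the corresponding block of contraction (resp.\ cocontraction) steps by its re-nested version, leaving the rest of $\pi$ untouched, obtaining $\pi':s\annotatedarrow{*}{S}t$ with $\flow(\pi')=\phi$. This is exactly the lifting property.

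For part~\eqref{item:rcucd+assoc-elim-loops}, I would first note that contraction loops are confined to the contraction/cocontraction part of a flow, since a $\gwd$-node has no upper edge and a $\gwu$-node has no lower edge, so neither can serve as the branching node or the merging node of a loop. A contraction loop therefore forces a cocontraction node (a $\gcu$, which splits one edge into two going downward) to lie above a contraction node (a $\gcdown$, which merges two edges into one), joined by two internally disjoint downward paths; call any such above-relationship an \emph{inversion}. The strategy is to remove all inversions: I would use $\assoc$ to put every maximal cascade of $\gcdown$- and of $\gcu$-nodes into a fixed right-branching form, use $\rcdcu$ to permute an offending contraction and cocontraction until they become adjacent, and then apply $\rcucd$ to eliminate the resulting $\gcu$--$\gcdown$ bubble. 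A flow to which none of these rules applies then contains no inversion, and hence, by the observation above, no contraction loop.

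The hard part will be termination, since $\rcdcu$ and $\assoc$ can increase the number of nodes --- in particular $\rcdcu$ may expand a singly-shared $\gcu$--$\gcdown$ pair into a larger mesh --- so induction on flow size is not available. The plan is to order flows by a well-founded measure built from the inversions, for instance a weighted count of the pairs $(\nu_1,\nu_2)$ with $\nu_1$ a cocontraction lying above a reachable contraction $\nu_2$, refined lexicographically by the lengths of the contraction and cocontraction cascades handled by $\assoc$. I would then check, by the same kind of local case analysis of reduction steps that underlies Proposition~\ref{prop:norm-preserves-loop-freeness}, that $\rcucd$ strictly decreases this measure while $\rcdcu$ and $\assoc$ never increase it. Pinning down a measure that genuinely decreases against the precise shape of the mesh produced by $\rcdcu$ is the crux of the argument; the remaining steps are routine bookkeeping on the directed-graph structure of flows.
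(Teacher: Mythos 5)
Your reading of part~\eqref{item:con-assoc-lifts} and your structural analysis for part~\eqref{item:rcucd+assoc-elim-loops} (that a contraction loop must branch at a $\gcu$ node and merge at a $\gcdown$ node, since weakening nodes lack the required upper or lower edges, and that a normal form of $\rcucd+\rcdcu$ modulo $\assoc$ therefore contains no loop) both match what the paper needs. But the termination argument, which you yourself flag as the crux, is genuinely missing, and the candidate measure you sketch would not survive the case analysis. Counting cocontraction-above-contraction pairs fails precisely on $\rcdcu$: that step turns a $\gcdown$ feeding a $\gcu$ into two $\gcu$ nodes sitting \emph{above} two $\gcdown$ nodes, so it manufactures new ``inversions'' out of a configuration that contained none, and the count goes up rather than down. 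The lexicographic refinement by cascade lengths does not obviously repair this, and nothing in your proposal pins down a quantity that the mesh produced by $\rcdcu$ actually decreases.

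The paper's measure is simpler and orthogonal to inversion counting: assign to each $\gcu$ node its \emph{weight}, namely its distance from the top of the flow, and order flows by the multiset of these weights. An application of $\rcdcu$ replaces one $\gcu$ node by finitely many $\gcu$ nodes that all sit strictly higher (strictly smaller weight), which is a strict decrease in the multiset order; $\rcucd$ deletes a $\gcu$ node outright; and $\assoc$ leaves the multiset unchanged. (The paper's footnote packages this as the numeric measure $\sum_\nu 2^{2w(\nu)}$ over $\gcu$ nodes $\nu$.) So the well-founded quantity tracks how far the cocontractions have been pushed toward the top of the flow, not how many bad pairs remain. With that measure in place, the rest of your argument --- normalise, then observe that any surviving loop would exhibit a $\rcdcu$ or $\rcucd$ redex modulo $\assoc$ --- goes through as you describe.
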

\begin{proof}
[Proof sketch]
\ref{item:con-assoc-lifts} is routine, so we prove \ref{item:rcucd+assoc-elim-loops}. For a $\gcu $ node in a flow, let its \emph{weight} be its distance from the top of the flow. We argue that $\rcdcu + \rcucd$ is terminating modulo $\assoc$, by noticing that the multiset of weights of $\gcu $ nodes in a flow decreases\footnote{Formally it suffices to associate a flow $\phi$ with the sum $\sum 2^{2w(\nu)}$, where $\nu$ ranges over $\gcu $ nodes in $\phi$ and $w(\nu)$ is the weight of $\nu$, and consider the usual order on natural numbers.} by any application of $\rcdcu$ or $\rcucd$ and is preserved by $\assoc$. Finally, we observe that there cannot be any contraction loop in a normal form of $\rcdcu + \rcucd $ modulo $\assoc$ since it would contain either a $\rcdcu$ or $\rcucd$ redex, modulo $\assoc$.
\end{proof}

\begin{remark}
Here we only considered systems that extend the monotone fragment of the deep inference system $\SKS$ by arbitrary linear rules. To some extent the results above generalise to extensions by other rules, but there are certain interesting cases that could be points of further study.

First, of course, there could be rules that allow an interplay between positive and negative variables, most notably the identity and cut rules from $\SKS$:
\[
\ttt \quad \to \quad x \vlor \bar x
\qquad
\qquad 
x \vlan \bar x \quad \to \quad \fff
\]
Their normalisation behaviour is very different from that of the structural rules contraction and weakening, and so call for an independent analysis altogether. \footnote{We are aware that work studying linear systems extended by such rules is currently being pursued by Guglielmi, McCusker and Santamaria. This line of research is also related to \cite{lamarche:gap} and~\cite{str:medial}.}

Another interesting case is when $\SKS$ is extended by nonlinear rules. In a particularly extreme case one can envisage rules that are `multiplicative' but not linear. For instance, consider the following monotone formula, denoted $t(w,x,y,z)$:
\[
(w \vlan x)
\vlor 
((w \vlor x) \vlan (y \vlor z))
\vlor
(y \vlan z)
\]
This computes the threshold function $\mathit{TH}^X_2 $ from Example~\ref{ex:bool-fn-summary}, for $X = \{w,x,y,z\}$. Since this is a symmetric function, we can construct the following sound rule:\footnote{In fact it would be sound for any permutation of variables, but this is the prototypical interesting case.}
\[
t(w,x,y,z)
\quad \to \quad
t(w,y,x,z)
\]
It can be considered `multiplicative', in the sense that each variable occurs with the same multiplicity, $2$, on each side, but it cannot be an instance of a linear rule, since we rely on the logical dependencies between variable occurrences for soundness. 
\end{remark}

\newcommand{\conentails}{\annotatedarrow{}{\vlan}}
\newcommand{\disentails}{\annotatedarrow{}{\vlor}}
\section{Towards proof theory on arbitrary graphs}
\label{sect:graph-logic}

In this section we consider arbitrary complete undirected graphs with edges labelled by $\wedge $ and $\vee$, i.e.\ graphs that are not necessarily $\pfour$-free, and we consider their $\wedge$-maxcliques and $\vee$-maxcliques.
Such graphs no longer correspond to terms, in fact they do not even correspond to Boolean functions since Theorem~\ref{thm:rel-minmax-webs} breaks down by the example of~\eqref{eq:P4}:
\[
	\FourGraph{w,x,y,z}grrgrg
\]
The problem here is that there is a $\wedge$-maxclique $\{w,z\}$ and a $\vee$-maxclique $\{x,y\}$ which are disjoint, so under the association of $\wedge$- and $\vee$-maxcliques to minterms and maxterms respectively via Theorem~\ref{thm:rel-minmax-webs}, one would be able to force this graph to evaluate to $0$ and $1$ simultaneously by the assignment $\{ w \mapsto 1 , x \mapsto 0 , y\mapsto 0, z\mapsto 1 \}$.

On the other hand, the alternative definitions of entailment from Proposition~\ref{prop:sound-minmax} still remain meaningful in such a setting. Inspired by this, let us consider the following relations on graphs:
\begin{itemize}
\item $G \conentails G'$ if, for any $\wedge$-maxclique $C$ of $G$, there is a $\wedge$-maxclique $C'$ of $G'$ with $C' \subseteq C$.
\item $G \disentails G'$ if, for any $\vee$-maxclique $C$ of $G'$, there is a $\vee$-maxclique $C'$ of $G$ with $C' \subseteq C$.
\end{itemize}
They have the following important properties, whose proofs are routine:
\begin{proposition}
	\label{prop:graph-logic-ref-trans}
$\conentails$ and $\disentails$ are reflexive and transitive.
\end{proposition}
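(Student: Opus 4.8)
The plan is to verify both properties directly from the definitions, since each of $\conentails$ and $\disentails$ is built from a single subset-telescoping condition whose reflexivity and transitivity are inherited from those of $\subseteq$. No graph-theoretic input beyond the definitions is required; in particular $P_4$-freeness plays no role, which is exactly the point of working at this more general level.

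For reflexivity, I would observe that in each case the witnessing clique may be taken to be the given one. For $G \conentails G$, given a $\wedge$-maxclique $C$ of $G$, the clique $C' := C$ is itself a $\wedge$-maxclique of $G$ with $C' \subseteq C$; the argument for $G \disentails G$ is identical with $\vee$ in place of $\wedge$. For transitivity of $\conentails$, suppose $G \conentails G'$ and $G' \conentails G''$, and let $C$ be a $\wedge$-maxclique of $G$. The first hypothesis yields a $\wedge$-maxclique $C'$ of $G'$ with $C' \subseteq C$, and applying the second hypothesis to $C'$ yields a $\wedge$-maxclique $C''$ of $G''$ with $C'' \subseteq C'$; transitivity of $\subseteq$ then gives $C'' \subseteq C$, as required.

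The only point that requires any attention is the direction of quantification in $\disentails$: here the universally quantified clique sits on the \emph{right-hand} graph while the witness is produced on the left, so in the transitivity argument the two hypotheses must be applied in the opposite order to the one used for $\conentails$. Concretely, given $G \disentails G'$ and $G' \disentails G''$ together with a $\vee$-maxclique $C$ of $G''$, I would first apply $G' \disentails G''$ to obtain a $\vee$-maxclique $C'$ of $G'$ with $C' \subseteq C$, and then apply $G \disentails G'$ to this $C'$ to obtain a $\vee$-maxclique $C''$ of $G$ with $C'' \subseteq C' \subseteq C$. The ``hard part'', such as it is, is purely this bookkeeping, so I would simply present the four short arguments in sequence.
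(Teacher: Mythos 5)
Your proof is correct and is exactly the routine verification the paper has in mind (the paper omits the proof, calling it ``routine''): reflexivity by taking the witness clique to be the given one, and transitivity by chaining the subset witnesses, with the hypotheses applied in the reverse order for $\disentails$ because its quantifier ranges over cliques of the right-hand graph. Nothing is missing.
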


The point here is that, even though maximal cliques no longer correspond to minterms and maxterms, the notion of entailment induced by maximal cliques remains stable: if one starts with a $\pfour$-free graph and applies one of the relations $\conentails$ or $\disentails$ iteratively, and finishes with a $\pfour$-free subgraph, then the underlying implication is sound, even if many of the intermediate graphs are not $\pfour$-free, and so do not correspond to Boolean functions at all.

\newcommand{\FiveGraphCircCont}[2]{
	\raisebox{-0.45\height}{\begin{tikzpicture}
		\node[vertex] (v1) at (0,1) {$\StrBetween[1,2]{,\fir,}{,}{,}$};
		\node[vertex] (v2) at (0.951,0.309) {$\StrBetween[2,3]{,\fir,}{,}{,}$};
		\node[vertex] (v3) at (0.588,-0.809) {$\StrBetween[3,4]{,\fir,}{,}{,}$};
		\node[vertex] (v4) at (-0.588,-0.809) {$\StrBetween[4,5]{,\fir,}{,}{,}$};
		\node[vertex] (v5) at (-0.951,0.309) {$\StrBetween[5,6]{,\fir,}{,}{,}$};
		\draw[\sec] (v1) -- (v2) ;
		\draw[\thi] (v1) -- (v3);
		\draw[\fou] (v1) -- (v4);
		\draw[\fif] (v1) -- (v5);
		\draw[\six] (v2) -- (v3);
		\draw[\sev] (v2) -- (v4);
		\draw[\eig] (v2) -- (v5) ;
		\draw[\nin] (v3) -- (v4) ;
		\draw[#1] (v3) -- (v5) ;
		\draw[#2] (v4) -- (v5) ;
		\end{tikzpicture} }}
\newcommand{\FiveGraphCirc}[9]{
	\def\fir{#1}
	\def\sec{#2}
	\def\thi{#3}
	\def\fou{#4}
	\def\fif{#5}
	\def\six{#6}
	\def\sev{#7}
	\def\eig{#8}
	\def\nin{#9}
	\FiveGraphCircCont
}

For instance, consider the following reduction:
\[
\FiveGraphCirc{v,w,x,y,z}rgggrggrgr
\quad \to \quad
\FiveGraphCirc{v,w,x,y,z}rggrrggrgr
\]
This can easily be seen to be an instance of $\conentails$, since only a new $\wedge$-maxclique, $\{v,z\}$, is added. On the other hand, its \emph{inverse} is an instance of $\disentails$. Consequently the relations $\conentails$ and $\disentails$ really are distinct, unlike their restrictions to $\pfour$-free graphs.

\begin{remark}
Notice that there are alternative ways to define entailment for Boolean terms via their webs, but other intuitive choices do not satisfy Proposition~\ref{prop:graph-logic-ref-trans} when generalised to arbitrary graphs in the natural way, and so do not induce any meaningful logic. For example, for linear terms $s$ and $t$, we can show that $s \leq t$ if and only if every $\wedge$-maxclique of $\web{s}$ intersects every $\vee$-maxclique of $\web{t}$.\footnote{If $s$ evaluates to $1$, then one of its minterms must entirely be assigned to $1$, and if this intersects every maxterm of $t$, then no maxterm of $t$ is entirely assigned to $0$, so $t$ must also evaluate to $1$. Conversely, if some minterm of $s$ and some maxterm of $t$ do not intersect, then we can simultaneously force $s$ to evaluate to $1$ and $t$ to evaluate to $0$.} However, when generalised to arbitrary graphs, this relation is not even reflexive because of, again, the case of a $\pfour$ configuration~\eqref{eq:P4}.
\end{remark}

In further work we would like to study the logics induced by the relations $\conentails$ and $\disentails$, and even systems where one may alternate between them any time a graph is, say, $\pfour$-free. Such systems would be sound for Boolean logic when the source and target are $\pfour$-free, under the association of a term to its web. %
They would also leave the world of Boolean functions altogether, as we previously mentioned, which bears semblance to algebraic proof systems for propositional logic such as Cutting Planes and Nullstellensatz (studied in, for example, \cite{BonetPR97:cutting-planes} and \cite{BussIPRS97:nullstelensatz}).

Furthermore, notice that our crucial Lemma~\ref{intersectionlemma} cannot immediately be generalised to the setting of arbitrary graphs due to the fact that $\wedge$-maxcliques no longer necessarily intersect $\vee$-maxcliques. It would be particularly interesting to examine the extent to which `linear reasoning' can be recovered in this setting, sidestepping the shortcomings of $\pfour$-free graphs (i.e.\ terms) we have studied in this work.

\section{Final remarks}\label{sect:conc}
To some extent, this work can be seen as a justification for the approach of `structural' proof theory: for any deductive system that can be embedded into a rewriting framework on Boolean terms, as we have considered here, completeness requires the inclusion of structural rules that introduce, destroy and duplicate formulae, unless $\coNP= \NP$. It is not difficult to see that this covers a large class of proof systems, including essentially all the well-known systems based on formulae or related structures, e.g.\ Gentzen sequent calculi, Hilbert-Frege systems, Resolution, deep inference systems etc. On the other hand, as we mentioned in Section~\ref{sect:graph-logic}, proof systems based on other objects such as algebraic equations or graphs are not covered by our result.
While the observation that structural behaviour is somewhat necessary for proof theory is perhaps not surprising, it is of natural theoretical interest.

There are clear thematic relationships between this line of work and linear logic. In some ways, we can see this work as contributing to the study of the `multiplicative' fragment of Boolean logic. One particular connection we would like to point out is with Blass' model of linear logic in \cite{Blass92:game-semantics-ll}, the first game semantics model of linear logic. The multiplicative fragment of this model in fact validates precisely the sound linear inferences of Boolean logic\footnote{Under the assiociation of $\vlte$ with $\wedge$ and $\vlpa$ with $\vee$.}, which he calls `binary tautologies'. 
Following from the paragraph above, it would seem that one drawback of this model is that it can admit no sound and complete proof system, unless $\coNP = \NP$, by virtue of our results.

Finally, this work contributes to the study of term rewriting systems for Boolean Algebras. While complete axiomatisations have been known since the early 20th century by Whitehead, Huntington, Tarski and others, these are typically sets of equations, rather than `directed' rewrite rules which are more related to proof theory. It has been known for some time, for example, that there is no convergent TRS for Boolean Algebras \cite{Socher-Ambrosius91:no-conv-trs-ba}; our result, in the same vein, shows there is no \emph{linear} TRS for the linear fragment of Boolean Algebras.

\bibliographystyle{alpha}
\bibliography{biblio}

\end{document}